\documentclass[onecolumn,journal]{IEEEtran}
\usepackage[utf8]{inputenc}
\usepackage{cite}
\usepackage{amsmath,amssymb,amsfonts,bm,amsthm}
\usepackage{graphicx}
\usepackage{xcolor}
\usepackage{multirow}
\usepackage{booktabs}
\usepackage{arydshln}
\usepackage{stfloats}
\usepackage[switch,pagewise]{lineno}
\usepackage{url}
\usepackage{tikz}
\usepackage[colorlinks=true,linkcolor=blue,citecolor=blue,urlcolor=blue]{hyperref}
\definecolor{lime}{HTML}{A6CE39}
\DeclareRobustCommand{\orcidicon}{%
    \begin{tikzpicture}
    \draw[lime, fill=lime] (0,0) circle [radius=0.16]
    node[white] {{\fontfamily{qag}\selectfont \tiny ID}};
    \draw[white, fill=white] (-0.0625,0.095) circle [radius=0.007];
    \end{tikzpicture}\hspace{-2mm}}
\foreach \x in {A, ..., Z}{%
    \expandafter\xdef\csname orcid\x\endcsname{\noexpand\href{https://orcid.org/\csname orcidauthor\x\endcsname}{\noexpand\orcidicon}}
}

\newtheorem{lemma}{Lemma}[subsection]
\newtheorem{theorem}{Theorem}[subsection]
\newtheorem{corollary}{Corollary}[subsection]
\newsavebox{\simhpersontabbox}
\newsavebox{\simhwalltabbox}
\newsavebox{\traintabbox}
\begin{document}
\title{Generalization Theory for Through-the-Wall Radar Human Activity Recognition
\thanks{This work was supported in part by the National Natural Science Foundation of China Young Students Basic Research Project for Doctoral Students under Grant 625B2023, in part by Beijing Science and Technology Nova Program under Grant 20240484569, and in part by Shandong Provincial Natural Science Foundation under Grant ZR2024MF117.\par
Weicheng Gao is with the School of Information and Electronics, Beijing Institute of Technology, Beijing 100081, China, and with the Key Laboratory of Electronic and Information Technology in Satellite Navigation, Beijing Institute of Technology, Beijing 100081, China (e-mail: JoeyBG@126.com).\par}}
\author{Weicheng~Gao\orcidA{},~\IEEEmembership{Graduate~Student~Member,~IEEE}
        \vspace{-0.2cm}
}
\markboth{IEEE Transactions on Information Theory, arXiv Preprint}%
{Gao \MakeLowercase{\textit{et al.}}: Generalization Theory for Through-the-Wall Radar Human Activity Recognition}
\maketitle
\pagestyle{headings}
\setlength{\abovedisplayskip}{6pt}
\setlength{\belowdisplayskip}{6pt}
\setlength{\abovedisplayshortskip}{4pt}
\setlength{\belowdisplayshortskip}{4pt}
\setlength{\jot}{2pt}
\begin{abstract}
Through-the-wall radar (TWR) human activity recognition (HAR) is important for non-line-of-sight indoor sensing, security monitoring, and emergency rescue. However, structured distribution shifts caused by person variation, observation-view variation, and wall-condition variation severely degrade recognition generalization, while the origin of the target-domain error still lacks a rigorous theoretical explanation. To address this issue, a generalization-analysis framework for TWR HAR is proposed in this paper. First, models for indoor human kinematics, TWR echo generation, radar image formation, feature representation, and bounded-weight neural networks are established within a unified source-to-target learning formulation. Then, the source risk, target risk, empirical risk, and admissible physical domain descriptor are defined, and a unified target-domain generalization bound is derived. Next, the structured shift term is decomposed into cross-person, cross-view, and cross-wall components, and the bound-tightening effects of physical low-dimensional representations, multi-source training, and parameter-space coverage are analyzed. Simulated and measured experiments jointly support the resulting theoretical analysis and illustrate its application value.\par
\end{abstract}
\begin{IEEEkeywords}
Through-the-wall radar (TWR), human activity recognition (HAR), micro-Doppler signature, generalization error bound.
\end{IEEEkeywords}
\IEEEpeerreviewmaketitle

\section{Introduction}
Through-the-wall radar (TWR) human activity recognition (HAR) is intended to recognize human motions from radar echoes collected under non-line-of-sight sensing conditions. Unlike optical cameras and wearable sensors, TWR can preserve sensing capability when the body is hidden by occlusion, weak illumination, smoke, or wall obstruction. This capability has supported studies on real-time through-the-wall activity recognition \cite{CuiGuoRadarConf}, single-channel ultrawideband pose estimation \cite{JinMDPose}, radar-system-agnostic HAR \cite{DingRSA}, through-the-wall posture reconstruction \cite{YePosture}, range-max enhanced behind-the-wall micro-Doppler analysis \cite{AnRangeMax}, channel-capacity-aware fine-grained multiple-input multiple-output (MIMO) classification \cite{QiChannelCapacity}, mixed convolutional neural network (CNN) multi-spectrogram recognition \cite{JiaMixedCNN}, and time-frequency-enhanced through-wall localization \cite{PengTFE}. A broader review of recent TWR HAR development was also reported in \cite{UWBDeepLearning2024}. However, the main difficulty of TWR HAR is not merely empirical classification accuracy. The training and testing data are often separated by structured distribution shifts induced by radar propagation, human motion, and observation geometry. Three representative difficulties are cross-person generalization, cross-view generalization, and cross-wall generalization. In cross-person generalization, body structure, gait pattern, limb swing, motion velocity, and motion phase change the micro-Doppler features and radar-image distribution, as also reflected in subject-dependent activity-classification and fine-grained recognition studies \cite{GMFN2022}. In cross-view generalization, human motion direction and radar view angle modify the radial-velocity projection, Doppler frequency shift, micro-Doppler signature, and Doppler-time representation, which is closely related to multistatic orientation-sensitive and lightweight multiview radar recognition \cite{ViewSpatialPIERS2024,LightweightIRUWB2025,RadarFormer2024}. In cross-wall generalization, wall material, thickness, permittivity, loss, and structure alter the through-wall propagation operator, causing attenuation, multipath, clutter, low signal-to-noise ratios, and changed noise statistics, as reported in measured MIMO through-the-wall HAR experiments \cite{GEBTWRHARConf2024}.\par
Existing radar HAR and TWR HAR methods were mostly designed to improve empirical recognition performance. Handcrafted micro-Doppler descriptors and time-frequency pipelines \cite{LiDopplerReview,KimCNN}, deep recurrent or complex-valued networks \cite{DingConvLSTM,LiSemisupervised,PhysicsAwareGAN2023,Trajectory3DChannel2021}, hyperdimensional or multiview architectures \cite{YaoHDC,ZhouCorruption2024,MIMOHTL2026}, and model-based augmentation strategies \cite{RojhaniAug,ThroughWallSmallSample2018,WiringEffectsGAN2021} were used to extract discriminative patterns or enlarge the effective training set. Human body-part Doppler mechanisms were also studied to explain weak scattering and angular sensitivity \cite{AbadpourRCS}, and broader radar HAR surveys were reported in \cite{RadarSurveyMicrowaves2023,RadarDataSurvey2024,AhmedHealthcare2024}. Encouraging results were obtained in many controlled scenarios. Nevertheless, a unified theoretical explanation is still lacking for the origin of the generalization error, the action mechanism of physical domain shifts, and the control of target-domain risk from source-domain observations. It therefore remains unclear whether the performance loss is dominated by sample insufficiency and excessive hypothesis complexity, by person-, view-, and wall-induced mismatch, or by irreducible sensing ambiguity, even though related domain-generalization evidence has been discussed in \cite{SahliGeneralization2021}.\par
\begin{figure}[!t]
\centering
\includegraphics[width=\textwidth]{TWR_HAR_Overall.jpg}
\caption{Overall illustration of the TWR HAR framework and structured generalization factors.}
\label{fig:TWR_HAR_Overall}
\vspace{-0.2cm}
\end{figure}
In this paper, TWR HAR is studied as a structured source-to-target learning problem jointly determined by human kinematic parameters, view angle, wall parameters, radar image generation, feature representation, and a bounded-weight neural network. The sensing process is regarded as a composition of physical and statistical mappings: human motion generates time-varying scatterers, the view determines observable radial motion, the wall determines the propagation operator and noise condition, and the representation operator supplies the classifier input. Under this viewpoint, the target-domain risk is controlled by source empirical risk and excess-risk mechanisms studied in information-theoretic learning theory \cite{FastRateIT2025,MinimumExcessRisk2022,MinimaxExcessRisk2023}, by transfer and meta-learning discrepancy analyses \cite{TransferMetaLearning2022,TransferLearningIT2024,DomainGeneralizationIT2025}, and by bounded-network sensitivity together with irreducible-error considerations \cite{LossSurfaceIR2023,GibbsGeneralization2024,InfoGenDNN2025}.\par
As shown in Fig.~\ref{fig:TWR_HAR_Overall}, the main contributions of this paper are summarized as follows.
\begin{enumerate}
\renewcommand{\labelenumi}{(\arabic{enumi})}
\setlength{\itemsep}{0em}
\setlength{\parsep}{0em}
\item \textbf{Statistical Learning Formulation:} A statistical learning formulation is established for TWR HAR, where human kinematics, view geometry, wall propagation, feature generation, bounded network weights, and domain risks are described in one source-to-target framework.
\item \textbf{Source-to-Target Generalization Bound:} Under the bounded-network and domain-descriptor assumptions, a source-to-target generalization error bound is derived. The target-domain risk is separated into source empirical risk, model complexity, TWR-induced distribution shift, and a joint source-target approximation term.
\item \textbf{Physically Interpretable Shift Decomposition:} The TWR-induced shift is decomposed into cross-person, cross-view, and cross-wall components, and the bound-level roles of physical low-dimensional representations, multi-source training, and parameter-space coverage are analyzed.
\end{enumerate}\par
The remainder of this paper is organized as follows. Section II establishes the TWR HAR problem formulation. Section III derives generalization error bounds under structured TWR domain shifts. Section IV presents simulated and measured experiments. Section V concludes this paper.\par

\section{TWR-HAR Problem Formulation}

\subsection{Indoor Human Kinematic Model}
For the TWR HAR task, human motion is represented in an indoor Cartesian coordinate system. Let $\mathcal{A}=\{1,\ldots,C\}$ denote the activity-label set, where $C$ is the number of activities. Let $p$ denote a subject index, let $a\in\mathcal{A}$ denote the activity executed by subject $p$, let $m\in\{1,\ldots,M\}$ denote the index of a dominant body scattering center, where $M$ is the number of retained scattering centers, and let $t\in[0,T_{a}]$ denote time, where $T_{a}>0$ is the duration of activity $a$. For cross-person analysis, it is assumed that the slow-time sampling window length used for HAR is fixed, that is, the same activity has a subject-independent duration, namely
\begin{equation}
T_{p,a}=T_{a},
\label{eq:uniform_activity_duration}
\end{equation}
for every admissible subject $p$ executing activity $a$. The normalized motion phase is defined as
\begin{equation}
\phi_{a}(t)=\frac{t}{T_{a}},
\label{eq:activity_phase}
\end{equation}
where $\phi_{a}(t)\in[0,1]$ is used for phase-aligned cross-person comparison.\par
A hybrid kinematic model is adopted. Let $\boldsymbol{\eta}_{p}\in\mathbb{R}^{d_{\eta}}$ denote the morphology vector of subject $p$, where $d_{\eta}$ is the morphology dimension. Let $\mathbf{x}_{p,a}(t)\in\mathbb{R}^{3}$ denote the body-centroid translation, let $\boldsymbol{\vartheta}_{p,a}(t)\in\mathbb{R}^{d_{\vartheta}}$ denote the posture vector, where $d_{\vartheta}$ is the posture dimension, and let $\mathbf{R}_{p,a}(t)\in\mathbb{R}^{3\times 3}$ denote the body-orientation matrix from the local body frame to the indoor global frame. The human state vector is
\begin{equation}
\mathbf{s}_{p,a}^{(\mathrm{hum})}(t)=
\begin{bmatrix}
\mathbf{x}_{p,a}^{\top}(t) &
\boldsymbol{\vartheta}_{p,a}^{\top}(t) &
\dot{\mathbf{x}}_{p,a}^{\top}(t) &
\dot{\boldsymbol{\vartheta}}_{p,a}^{\top}(t) &
\phi_{a}(t) &
\boldsymbol{\eta}_{p}^{\top}
\end{bmatrix}^{\top},
\end{equation}
where $\dot{\mathbf{x}}_{p,a}(t)\in\mathbb{R}^{3}$ and $\dot{\boldsymbol{\vartheta}}_{p,a}(t)\in\mathbb{R}^{d_{\vartheta}}$ denote the centroid velocity and posture rate, respectively.\par
Note that the components of $\mathbf{s}_{p,a}^{(\mathrm{hum})}(t)$ carry different physical units. Therefore, whenever Euclidean norms of state differences are evaluated later, the corresponding coordinates are implicitly assumed to have been normalized beforehand.\par
Let $\mathbf{g}_{m}(\boldsymbol{\vartheta}_{p,a}(t),\boldsymbol{\eta}_{p})\in\mathbb{R}^{3}$ denote the local-coordinate position of the $m$th scattering center. Its global trajectory is
\begin{equation}
\mathbf{r}_{p,a,m}(t)=\mathbf{x}_{p,a}(t)+\mathbf{R}_{p,a}(t)\mathbf{g}_{m}\big(\boldsymbol{\vartheta}_{p,a}(t),\boldsymbol{\eta}_{p}\big),
\end{equation}
where $\mathbf{r}_{p,a,m}(t)\in\mathbb{R}^{3}$ denotes the spatial position of the $m$th scatterer.\par
To expose the Doppler dependence, define the posture Jacobian
\begin{equation}
\mathbf{J}_{m,p,a}(t)=\frac{\partial \mathbf{g}_{m}(\boldsymbol{\vartheta},\boldsymbol{\eta}_{p})}{\partial \boldsymbol{\vartheta}}\bigg|_{\boldsymbol{\vartheta}=\boldsymbol{\vartheta}_{p,a}(t)},
\end{equation}
where $\mathbf{J}_{m,p,a}(t)\in\mathbb{R}^{3\times d_{\vartheta}}$. By the chain rule,
\begin{equation}
\mathbf{v}_{p,a,m}(t)=\dot{\mathbf{r}}_{p,a,m}(t)=\dot{\mathbf{x}}_{p,a}(t)+\dot{\mathbf{R}}_{p,a}(t)\mathbf{g}_{m}\big(\boldsymbol{\vartheta}_{p,a}(t),\boldsymbol{\eta}_{p}\big)+\mathbf{R}_{p,a}(t)\mathbf{J}_{m,p,a}(t)\dot{\boldsymbol{\vartheta}}_{p,a}(t),
\end{equation}
where $\mathbf{v}_{p,a,m}(t)\in\mathbb{R}^{3}$ contains translation, rigid-body rotation, and local articulation terms.\par
For phase alignment, define
\begin{equation}
\widetilde{\mathbf{s}}_{p,a}(\phi):=\widetilde{\mathbf{s}}_{p,a}^{(\mathrm{hum})}(\phi)=\mathbf{s}_{p,a}^{(\mathrm{hum})}(T_{a}\phi),
\label{eq:phase_aligned_human_state}
\end{equation}
and similarly $\widetilde{\mathbf{r}}_{p,a,m}(\phi)=\mathbf{r}_{p,a,m}(T_{a}\phi)$ and $\widetilde{\mathbf{v}}_{p,a,m}(\phi)=\mathbf{v}_{p,a,m}(T_{a}\phi)$. For compactness, define
\begin{subequations}\label{eq:cross_person_component_discrepancies}
\begin{align}
\Delta_{s,p,p'}(a,\phi)&=\|\widetilde{\mathbf{s}}_{p,a}(\phi)-\widetilde{\mathbf{s}}_{p',a}(\phi)\|_{2},\\
\overline{\Delta}_{s,p,p'}(a)&=\sup_{\phi\in[0,1]}\Delta_{s,p,p'}(a,\phi),\\
\overline{\Delta}_{r,p,p'}(a)&=M^{-1}\sum_{m=1}^{M}\sup_{\phi\in[0,1]}\|\widetilde{\mathbf{r}}_{p,a,m}(\phi)-\widetilde{\mathbf{r}}_{p',a,m}(\phi)\|_{2},\\
\overline{\Delta}_{v,p,p'}(a)&=M^{-1}\sum_{m=1}^{M}\sup_{\phi\in[0,1]}\|\widetilde{\mathbf{v}}_{p,a,m}(\phi)-\widetilde{\mathbf{v}}_{p',a,m}(\phi)\|_{2},
\end{align}
\end{subequations}
where $\|\cdot\|_{2}$ denotes the Euclidean norm.\par
The activity-conditional cross-person discrepancy is
\begin{equation}
\delta_{p,p'}^{(\mathrm{cp})}(a)=\omega_{s}\overline{\Delta}_{s,p,p'}(a)+\omega_{r}\overline{\Delta}_{r,p,p'}(a)+\omega_{v}\overline{\Delta}_{v,p,p'}(a),
\label{eq:cross_person_discrepancy}
\end{equation}
where $\omega_{s}\geq 0$, $\omega_{r}>0$, and $\omega_{v}>0$ are fixed weights. The aggregated discrepancy is
\begin{equation}
D_{p,p'}^{(\mathrm{cp})}=\sum_{a=1}^{C}\rho_{a}\,\delta_{p,p'}^{(\mathrm{cp})}(a),
\label{eq:cross_person_average_discrepancy}
\end{equation}
where $\rho_{a}\geq 0$ is the prior weight of activity $a$ and $\sum_{a=1}^{C}\rho_{a}=1$.\par
Assume there exist finite constants $B_{x}$, $B_{\dot{x}}$, $B_{\vartheta}$, $B_{\dot{\vartheta}}$, $B_{g}$, $B_{J}$, and $B_{\Omega}$ such that, for all admissible $p$, $a$, $m$, and $t$,
\begin{equation}
\begin{gathered}
\|\mathbf{x}_{p,a}(t)\|_{2} \leq B_{x},\quad
\|\dot{\mathbf{x}}_{p,a}(t)\|_{2} \leq B_{\dot{x}},\quad
\|\boldsymbol{\vartheta}_{p,a}(t)\|_{2} \leq B_{\vartheta},\quad
\|\dot{\boldsymbol{\vartheta}}_{p,a}(t)\|_{2} \leq B_{\dot{\vartheta}},\\
\|\mathbf{g}_{m}(\boldsymbol{\vartheta}_{p,a}(t),\boldsymbol{\eta}_{p})\|_{2} \leq B_{g},\quad
\|\mathbf{J}_{m,p,a}(t)\|_{2} \leq B_{J},\quad
\|\dot{\mathbf{R}}_{p,a}(t)\|_{2} \leq B_{\Omega},
\end{gathered}
\end{equation}
where the constants bound translation, speed, posture amplitude, posture rate, local displacement, posture sensitivity, and rotation rate, respectively.\par
Since $\|\mathbf{R}_{p,a}(t)\|_{2}=1$, the triangle inequality and submultiplicativity give
\begin{equation}
\|\mathbf{r}_{p,a,m}(t)\|_{2}\leq B_{r}:=B_{x}+B_{g},\qquad
\|\mathbf{v}_{p,a,m}(t)\|_{2}\leq B_{v}:=B_{\dot{x}}+B_{\Omega}B_{g}+B_{J}B_{\dot{\vartheta}}.
\end{equation}
\par
The constants $B_{r}$ and $B_{v}$ are retained for the subsequent echo and risk analysis.\par
Assume further that $\mathbf{g}_{m}(\boldsymbol{\vartheta},\boldsymbol{\eta})$ is Lipschitz continuous on the admissible set, namely
\begin{equation}
\big\|\mathbf{g}_{m}(\boldsymbol{\vartheta}_{1},\boldsymbol{\eta}_{1})-\mathbf{g}_{m}(\boldsymbol{\vartheta}_{2},\boldsymbol{\eta}_{2})\big\|_{2}\leq L_{g,\vartheta}\|\boldsymbol{\vartheta}_{1}-\boldsymbol{\vartheta}_{2}\|_{2}+L_{g,\eta}\|\boldsymbol{\eta}_{1}-\boldsymbol{\eta}_{2}\|_{2},
\end{equation}
where $L_{g,\vartheta}\geq 0$ and $L_{g,\eta}\geq 0$ are the posture- and morphology-dependent Lipschitz constants, respectively. For any admissible $\boldsymbol{\vartheta}_{1}$, $\boldsymbol{\vartheta}_{2}$, $\boldsymbol{\eta}_{1}$, and $\boldsymbol{\eta}_{2}$, by inserting and subtracting $\widetilde{\mathbf{R}}_{p',a}(\phi)\mathbf{g}_{m}(\widetilde{\boldsymbol{\vartheta}}_{p,a}(\phi),\boldsymbol{\eta}_{p})$ and using $\|\widetilde{\mathbf{R}}_{p',a}(\phi)\|_{2}=1$, one obtains
\begin{equation}
\begin{aligned}
\big\|\widetilde{\mathbf{r}}_{p,a,m}(\phi)-\widetilde{\mathbf{r}}_{p',a,m}(\phi)\big\|_{2}
\leq{}&\big\|\widetilde{\mathbf{x}}_{p,a}(\phi)-\widetilde{\mathbf{x}}_{p',a}(\phi)\big\|_{2}+B_{g}\big\|\widetilde{\mathbf{R}}_{p,a}(\phi)-\widetilde{\mathbf{R}}_{p',a}(\phi)\big\|_{2}\\
&+L_{g,\vartheta}\big\|\widetilde{\boldsymbol{\vartheta}}_{p,a}(\phi)-\widetilde{\boldsymbol{\vartheta}}_{p',a}(\phi)\big\|_{2}+L_{g,\eta}\|\boldsymbol{\eta}_{p}-\boldsymbol{\eta}_{p'}\|_{2},
\end{aligned}
\end{equation}
where $\widetilde{\mathbf{x}}_{p,a}(\phi)=\mathbf{x}_{p,a}(T_{a}\phi)$, $\widetilde{\mathbf{R}}_{p,a}(\phi)=\mathbf{R}_{p,a}(T_{a}\phi)$, and $\widetilde{\boldsymbol{\vartheta}}_{p,a}(\phi)=\boldsymbol{\vartheta}_{p,a}(T_{a}\phi)$. Therefore, cross-person and cross-view shifts are reduced to morphology, posture, translation, orientation, and velocity mismatch, which are the physical inputs to the later echo and target risk bounds.\par

\subsection{TWR Echo Model}
Based on the scatterer trajectories and velocities established in the previous subsection, the observation view, wall propagation, and receiver disturbance are now incorporated into a unified radar echo model.\par
A single effective observation channel is adopted for the TWR HAR task. Let $\mathcal{A}=\{1,\ldots,C\}$ denote the activity-label set, where $C$ is the number of activities. The subject index $p$, the activity label $a\in\mathcal{A}$, the scattering-center index $m\in\{1,\ldots,M\}$, the slow time $t$, the scatterer trajectory $\mathbf{r}_{p,a,m}(t)\in\mathbb{R}^{3}$, the scatterer velocity $\mathbf{v}_{p,a,m}(t)\in\mathbb{R}^{3}$, and the bounds $B_{r}$ and $B_{v}$ retain the definitions introduced in the indoor human kinematic model. Let $\beta$ denote the observation-view condition, let $\mathbf{q}_{\beta}\in\mathbb{R}^{3}$ denote the radar phase-center location under view $\beta$, let $\mathbf{u}_{\beta}\in\mathbb{R}^{3}$ denote the unit line-of-sight vector from the scene center toward the radar, let $\boldsymbol{\xi}\in\mathbb{R}^{d_{\xi}}$ denote the wall-parameter vector, where $d_{\xi}$ is the wall-parameter dimension, let $\tau$ denote fast time, let $u(\tau)\in\mathbb{C}$ denote the transmitted baseband waveform, and let $\varrho_{p,a,m}(t)\in\mathbb{C}$ denote the complex scattering coefficient of the $m$th moving scatterer.\par
For the $m$th scatterer, the round-trip propagation distance under view $\beta$ is modeled by
\begin{equation}
d_{p,a,m}^{(\beta)}(t)=2\big\|\mathbf{r}_{p,a,m}(t)-\mathbf{q}_{\beta}\big\|_{2},
\end{equation}
where $d_{p,a,m}^{(\beta)}(t)>0$ denotes the two-way path length. The corresponding propagation delay is
\begin{equation}
\tau_{p,a,m}^{(\beta)}(t)=\frac{d_{p,a,m}^{(\beta)}(t)}{c},
\end{equation}
where $c>0$ is the speed of light. Since
\begin{equation}
\frac{d}{dt}\big\|\mathbf{r}_{p,a,m}(t)-\mathbf{q}_{\beta}\big\|_{2}
=\frac{\big(\mathbf{r}_{p,a,m}(t)-\mathbf{q}_{\beta}\big)^{\top}\mathbf{v}_{p,a,m}(t)}{\big\|\mathbf{r}_{p,a,m}(t)-\mathbf{q}_{\beta}\big\|_{2}},
\end{equation}
the exact delay derivative is written as
\begin{equation}
\dot{\tau}_{p,a,m}^{(\beta)}(t)=\frac{2}{c}\frac{\big(\mathbf{r}_{p,a,m}(t)-\mathbf{q}_{\beta}\big)^{\top}\mathbf{v}_{p,a,m}(t)}{\big\|\mathbf{r}_{p,a,m}(t)-\mathbf{q}_{\beta}\big\|_{2}}.
\end{equation}
\par
Under the usual far-field linearization with respect to the indoor scene extent, a view-dependent reference delay $\tau_{0,\beta}>0$ is introduced such that
\begin{equation}
\tau_{p,a,m}^{(\beta)}(t)\approx \tau_{0,\beta}-\frac{2}{c}\mathbf{u}_{\beta}^{\top}\mathbf{r}_{p,a,m}(t),
\end{equation}
where $\tau_{0,\beta}$ absorbs the nominal scene-to-radar range under view $\beta$. Therefore,
\begin{equation}
\dot{\tau}_{p,a,m}^{(\beta)}(t)\approx -\frac{2}{c}\mathbf{u}_{\beta}^{\top}\mathbf{v}_{p,a,m}(t),
\end{equation}
and the instantaneous Doppler frequency shift is represented by
\begin{equation}
f_{p,a,m}^{(\beta)}(t)=-f_{c}\dot{\tau}_{p,a,m}^{(\beta)}(t)\approx \frac{2f_{c}}{c}\mathbf{u}_{\beta}^{\top}\mathbf{v}_{p,a,m}(t)=\frac{2}{\lambda_{c}}\mathbf{u}_{\beta}^{\top}\mathbf{v}_{p,a,m}(t),
\end{equation}
where $f_{c}>0$ is the carrier frequency and $\lambda_{c}=c/f_{c}$ is the carrier wavelength. It is therefore seen that the cross-view effect enters the echo model through the radial-velocity projection $\mathbf{u}_{\beta}^{\top}\mathbf{v}_{p,a,m}(t)$.\par
The wall is described by an effective two-way propagation operator rather than by an explicit multilayer interface expansion. Let
\begin{equation}
H_{\mathrm{w}}(\omega;\beta,\boldsymbol{\xi})=A_{\mathrm{w}}(\omega;\beta,\boldsymbol{\xi})e^{-j\omega\tau_{\mathrm{w}}(\beta,\boldsymbol{\xi})},
\end{equation}
where $\omega\in\mathbb{R}$ denotes angular frequency, $A_{\mathrm{w}}(\omega;\beta,\boldsymbol{\xi})\in\mathbb{C}$ denotes the effective transmission factor, and $\tau_{\mathrm{w}}(\beta,\boldsymbol{\xi})\geq 0$ denotes the wall-induced excess delay. The parameter vector $\boldsymbol{\xi}$ is allowed to encode wall thickness, effective permittivity, conductivity or loss tangent, and structural descriptors such as layered composition or internal inhomogeneity. When the wall response is sufficiently smooth over the occupied signal band, the dominant amplitude distortion may be evaluated around the carrier, which yields a compact baseband approximation with $A_{\mathrm{w}}(\omega_{c};\beta,\boldsymbol{\xi})$, where $\omega_{c}=2\pi f_{c}$.\par
Let $\kappa_{0}\in\mathbb{C}$ denote the lumped system coefficient containing the transmit power, antenna gain, front-end scaling, and deterministic calibration constants. Let $b^{(\beta,\boldsymbol{\xi})}(\tau,t)\in\mathbb{C}$ denote the static clutter term induced by the wall and indoor stationary reflectors, and let $n^{(\beta,\boldsymbol{\xi})}(\tau,t)\in\mathbb{C}$ denote the additive receiver noise. Then the received continuous-time baseband echo is modeled by
\begin{equation}
\begin{aligned}
y_{p,a}^{(\beta,\boldsymbol{\xi})}(\tau,t)
=&\sum_{m=1}^{M}\kappa_{0}\varrho_{p,a,m}(t)A_{\mathrm{w}}(\omega_{c};\beta,\boldsymbol{\xi})\big[d_{p,a,m}^{(\beta)}(t)\big]^{-2}u\big(\tau-\tau_{\mathrm{w}}(\beta,\boldsymbol{\xi})-\tau_{p,a,m}^{(\beta)}(t)\big)\\
&\times e^{-j2\pi f_{c}\left(\tau_{\mathrm{w}}(\beta,\boldsymbol{\xi})+\tau_{p,a,m}^{(\beta)}(t)\right)}+b^{(\beta,\boldsymbol{\xi})}(\tau,t)+n^{(\beta,\boldsymbol{\xi})}(\tau,t),
\end{aligned}
\end{equation}
where the factor $\big[d_{p,a,m}^{(\beta)}(t)\big]^{-2}$ models the two-way spherical spreading loss and the phase term retains the carrier-sensitive round-trip delay. By substituting the far-field delay approximation into the carrier phase, the oscillatory component becomes
\begin{equation}
e^{-j2\pi f_{c}\left(\tau_{\mathrm{w}}+\tau_{p,a,m}^{(\beta)}(t)\right)}\approx e^{-j2\pi f_{c}\left(\tau_{\mathrm{w}}+\tau_{0,\beta}\right)}e^{j\frac{4\pi}{\lambda_{c}}\mathbf{u}_{\beta}^{\top}\mathbf{r}_{p,a,m}(t)},
\end{equation}
where $\tau_{\mathrm{w}}=\tau_{\mathrm{w}}(\beta,\boldsymbol{\xi})$ is used for compactness. Moreover, by introducing an auxiliary integration variable $\zeta$, the identity
\begin{equation}
\mathbf{u}_{\beta}^{\top}\mathbf{r}_{p,a,m}(t)=\mathbf{u}_{\beta}^{\top}\mathbf{r}_{p,a,m}(0)+\int_{0}^{t}\mathbf{u}_{\beta}^{\top}\mathbf{v}_{p,a,m}(\zeta)\,d\zeta,
\end{equation}
shows that the time-varying part of the carrier phase is equivalently represented, up to an initial constant phase factor, by $\exp\!\left(j2\pi\int_{0}^{t}f_{p,a,m}^{(\beta)}(\zeta)\,d\zeta\right)$. This form makes the dependence on articulation-induced micro-Doppler modulation explicit.\par
The matched-filter output with respect to $u(\tau)$ is defined by
\begin{equation}
z_{p,a}^{(\beta,\boldsymbol{\xi})}(\tau,t)=\int_{\mathbb{R}}y_{p,a}^{(\beta,\boldsymbol{\xi})}(\tau',t)u^{*}(\tau'-\tau)\,d\tau',
\end{equation}
where $(\cdot)^{*}$ denotes complex conjugation and $\tau'$ is an integration variable on the fast-time axis. Let the waveform autocorrelation be
\begin{equation}
\chi_{u}(\Delta)=\int_{\mathbb{R}}u(\tau')u^{*}(\tau'-\Delta)\,d\tau',
\end{equation}
where $\Delta\in\mathbb{R}$ is a delay offset. Then direct substitution of the continuous echo model gives
\begin{equation}
z_{p,a}^{(\beta,\boldsymbol{\xi})}(\tau,t)=\sum_{m=1}^{M}\Gamma_{p,a,m}^{(\beta,\boldsymbol{\xi})}(t)\chi_{u}\big(\tau-\tau_{\mathrm{w}}(\beta,\boldsymbol{\xi})-\tau_{p,a,m}^{(\beta)}(t)\big)+b_{z}^{(\beta,\boldsymbol{\xi})}(\tau,t)+n_{z}^{(\beta,\boldsymbol{\xi})}(\tau,t),
\end{equation}
where
\begin{equation}
\Gamma_{p,a,m}^{(\beta,\boldsymbol{\xi})}(t)=\kappa_{0}\varrho_{p,a,m}(t)A_{\mathrm{w}}(\omega_{c};\beta,\boldsymbol{\xi})\big[d_{p,a,m}^{(\beta)}(t)\big]^{-2}e^{-j2\pi f_{c}\left(\tau_{\mathrm{w}}(\beta,\boldsymbol{\xi})+\tau_{p,a,m}^{(\beta)}(t)\right)},
\end{equation}
\begin{equation}
b_{z}^{(\beta,\boldsymbol{\xi})}(\tau,t)=\int_{\mathbb{R}}b^{(\beta,\boldsymbol{\xi})}(\tau',t)u^{*}(\tau'-\tau)\,d\tau',
\end{equation}
and
\begin{equation}
n_{z}^{(\beta,\boldsymbol{\xi})}(\tau,t)=\int_{\mathbb{R}}n^{(\beta,\boldsymbol{\xi})}(\tau',t)u^{*}(\tau'-\tau)\,d\tau'.
\end{equation}
\par
The coefficient $\Gamma_{p,a,m}^{(\beta,\boldsymbol{\xi})}(t)$ absorbs the moving-scatterer reflectivity, the wall attenuation, the spherical spreading loss, and the carrier phase, whereas $\chi_{u}$ determines the range-resolution kernel.\par
Let $\Delta_{\tau}>0$ denote the fast-time sampling interval, let $\Delta_{t}>0$ denote the slow-time sampling interval, let $N_{\tau}$ denote the number of fast-time samples, and let $L_{t}$ denote the number of slow-time samples. The discrete radar observation matrix is defined by
\begin{equation}
\big[\mathbf{Z}_{p,a}^{(\beta,\boldsymbol{\xi})}\big]_{i,\ell}=z_{p,a}^{(\beta,\boldsymbol{\xi})}(i\Delta_{\tau},\ell\Delta_{t}),
\end{equation}
where $i\in\{0,\ldots,N_{\tau}-1\}$ and $\ell\in\{0,\ldots,L_{t}-1\}$. If the clutter and noise matrices are defined entrywise by
\begin{equation}
\big[\mathbf{K}^{(\beta,\boldsymbol{\xi})}\big]_{i,\ell}=b_{z}^{(\beta,\boldsymbol{\xi})}(i\Delta_{\tau},\ell\Delta_{t}),\qquad
\big[\mathbf{W}^{(\beta,\boldsymbol{\xi})}\big]_{i,\ell}=n_{z}^{(\beta,\boldsymbol{\xi})}(i\Delta_{\tau},\ell\Delta_{t}),
\end{equation}
then the entire echo-generation process is summarized as
\begin{equation}
\mathbf{Z}_{p,a}^{(\beta,\boldsymbol{\xi})}
=\mathcal{G}_{\mathrm{echo}}\!\left(\{\mathbf{r}_{p,a,m},\mathbf{v}_{p,a,m},\varrho_{p,a,m}\}_{m=1}^{M};\beta,\boldsymbol{\xi}\right)+\mathbf{K}^{(\beta,\boldsymbol{\xi})}+\mathbf{W}^{(\beta,\boldsymbol{\xi})},
\end{equation}
where $\mathcal{G}_{\mathrm{echo}}$ denotes the deterministic nonlinear operator that maps human kinematics, scatterer reflectivity, view geometry, and wall parameters to the ideal matched-filtered radar observation. The subsequent range-time map (RTM), Doppler-time map (DTM), micro-Doppler signature, and physics-guided low-dimensional feature representations are all constructed in the next subsection from clutter-suppressed observations derived from $\mathbf{Z}_{p,a}^{(\beta,\boldsymbol{\xi})}$.\par
For the later generalization analysis, uniform boundedness assumptions are introduced. Assume there exist finite constants $B_{\varrho}>0$, $B_{\mathrm{w}}>0$, $B_{\chi}>0$, $B_{\mathrm{cl}}>0$, and $d_{\min}>0$ such that, for all admissible $p$, $a$, $m$, $t$, $\beta$, $\boldsymbol{\xi}$, and $\Delta$,
\begin{equation}
\big|\varrho_{p,a,m}(t)\big|\leq B_{\varrho},\qquad
\big|A_{\mathrm{w}}(\omega_{c};\beta,\boldsymbol{\xi})\big|\leq B_{\mathrm{w}},\qquad
\big|\chi_{u}(\Delta)\big|\leq B_{\chi},\qquad
\big|b^{(\beta,\boldsymbol{\xi})}(\tau,t)\big|\leq B_{\mathrm{cl}},
\end{equation}
and
\begin{equation}
d_{p,a,m}^{(\beta)}(t)\geq d_{\min}.
\end{equation}
\par
Let $B_{u,1}=\int_{\mathbb{R}}|u(\tau')|\,d\tau'$ denote the waveform $\ell_{1}$-type envelope constant. Then
\begin{equation}
\big|b_{z}^{(\beta,\boldsymbol{\xi})}(\tau,t)\big|
\leq \int_{\mathbb{R}}\big|b^{(\beta,\boldsymbol{\xi})}(\tau',t)\big|\big|u(\tau'-\tau)\big|\,d\tau'
\leq B_{\mathrm{cl}}B_{u,1}
=:B_{\mathrm{cl},z},
\end{equation}
where $B_{\mathrm{cl},z}>0$ is the matched-clutter bound. Moreover,
\begin{equation}
\big|\Gamma_{p,a,m}^{(\beta,\boldsymbol{\xi})}(t)\big|
\leq |\kappa_{0}|B_{\varrho}B_{\mathrm{w}}d_{\min}^{-2}
=:B_{\Gamma},
\end{equation}
where $B_{\Gamma}>0$ is a view- and wall-uniform echo-amplitude bound. Consequently, the deterministic part of the matched-filter output satisfies
\begin{equation}
\left|\sum_{m=1}^{M}\Gamma_{p,a,m}^{(\beta,\boldsymbol{\xi})}(t)\chi_{u}\big(\tau-\tau_{\mathrm{w}}(\beta,\boldsymbol{\xi})-\tau_{p,a,m}^{(\beta)}(t)\big)\right|
\leq MB_{\Gamma}B_{\chi}.
\end{equation}
\par
Assume finally that $n_{z}^{(\beta,\boldsymbol{\xi})}(\tau,t)$ is a zero-mean circularly symmetric complex sub-Gaussian random variable with scale parameter $\varsigma_{\mathrm{n}}(\beta,\boldsymbol{\xi})$. Then
\begin{equation}
\big|z_{p,a}^{(\beta,\boldsymbol{\xi})}(\tau,t)\big|
\leq MB_{\Gamma}B_{\chi}+B_{\mathrm{cl},z}+\big|n_{z}^{(\beta,\boldsymbol{\xi})}(\tau,t)\big|,
\end{equation}
and, by the inequality $|x_{1}+x_{2}+x_{3}|^{2}\leq 3(|x_{1}|^{2}+|x_{2}|^{2}+|x_{3}|^{2})$ together with $\mathbb{E}|n_{z}^{(\beta,\boldsymbol{\xi})}(\tau,t)|^{2}\leq \varsigma_{\mathrm{n}}^{2}(\beta,\boldsymbol{\xi})$,
\begin{equation}
\mathbb{E}\big|z_{p,a}^{(\beta,\boldsymbol{\xi})}(\tau,t)\big|^{2}
\leq 3M^{2}B_{\Gamma}^{2}B_{\chi}^{2}+3B_{\mathrm{cl},z}^{2}+3\varsigma_{\mathrm{n}}^{2}(\beta,\boldsymbol{\xi}).
\end{equation}
\par
This uniform envelope is retained as the signal-domain entry point for the later source-to-target risk bound, where cross-person, cross-view, and cross-wall shifts are quantified through the dependence of $\mathcal{G}_{\mathrm{echo}}$ on $\{\mathbf{r}_{p,a,m},\mathbf{v}_{p,a,m},\varrho_{p,a,m}\}_{m=1}^{M}$, $\beta$, and $\boldsymbol{\xi}$.\par

\subsection{Radar Image Generation and Feature Representation}
Based on the matched-filtered observation matrix established in the previous subsection, the radar image formation chain is introduced following common radar HAR processing pipelines \cite{RadarSurveyMicrowaves2023,RadarDataSurvey2024}. A classifier-oriented feature representation is then constructed in a form consistent with radar sensing applications in indoor security and health monitoring \cite{AhmedHealthcare2024}. Let $\mathbf{Z}_{p,a}^{(\beta,\boldsymbol{\xi})}\in\mathbb{C}^{N_{\tau}\times L_{t}}$ denote the discrete matched-filtered radar observation associated with subject $p$, activity label $a$, view condition $\beta$, and wall-parameter vector $\boldsymbol{\xi}$, where $N_{\tau}$ is the number of fast-time samples and $L_{t}$ is the number of slow-time samples. Since the dominant wall return and static indoor clutter are approximately invariant along slow time, a row-wise clutter-suppression operator is first introduced as
\begin{equation}
\widehat{\mathbf{Z}}_{p,a}^{(\beta,\boldsymbol{\xi})}
=
\mathcal{P}_{\mathrm{cs}}\!\left(\mathbf{Z}_{p,a}^{(\beta,\boldsymbol{\xi})}\right),
\end{equation}
where $\mathcal{P}_{\mathrm{cs}}$ is the row-wise mean-cancellation mapping. Its entries are written as
\begin{equation}
\big[\widehat{\mathbf{Z}}_{p,a}^{(\beta,\boldsymbol{\xi})}\big]_{i,\ell}
=
\big[\mathbf{Z}_{p,a}^{(\beta,\boldsymbol{\xi})}\big]_{i,\ell}
-\frac{1}{L_{t}}\sum_{\ell'=0}^{L_{t}-1}\big[\mathbf{Z}_{p,a}^{(\beta,\boldsymbol{\xi})}\big]_{i,\ell'},
\end{equation}
where $i\in\{0,\ldots,N_{\tau}-1\}$ and $\ell\in\{0,\ldots,L_{t}-1\}$ denote the fast-time and slow-time sample indices, respectively.\par
The range-time map (RTM) is then generated by the entrywise energy of the clutter-suppressed matrix, namely
\begin{equation}
\big[\mathbf{X}_{p,a}^{(\beta,\boldsymbol{\xi},\mathrm{rt})}\big]_{i,\ell}
=
\left|\big[\widehat{\mathbf{Z}}_{p,a}^{(\beta,\boldsymbol{\xi})}\big]_{i,\ell}\right|^{2},
\end{equation}
where $\mathbf{X}_{p,a}^{(\beta,\boldsymbol{\xi},\mathrm{rt})}\in\mathbb{R}_{+}^{N_{\tau}\times L_{t}}$ is the RTM and $\mathbb{R}_{+}$ denotes the nonnegative real axis. This representation preserves the joint evolution of range migration and temporal energy concentration.\par
To form a Doppler-oriented representation, the clutter-suppressed matrix is first collapsed along the fast-time axis as
\begin{equation}
g_{p,a}^{(\beta,\boldsymbol{\xi})}[\ell]
=
\sum_{i=0}^{N_{\tau}-1}\big[\widehat{\mathbf{Z}}_{p,a}^{(\beta,\boldsymbol{\xi})}\big]_{i,\ell},
\end{equation}
where $g_{p,a}^{(\beta,\boldsymbol{\xi})}[\ell]\in\mathbb{C}$ is the range-collapsed slow-time signal. Let $g_{\mathrm{win}}[r]\in\mathbb{R}$ denote an analysis window of length $L_{\mathrm{win}}$, let $S_{\mathrm{hop}}$ denote the frame hop size, and let $N_{f}$ denote the number of Doppler bins. The number of short-time frames is
\begin{equation}
N_{\mathrm{frm}}
=
1+\left\lfloor\frac{L_{t}-L_{\mathrm{win}}}{S_{\mathrm{hop}}}\right\rfloor,
\end{equation}
where $\lfloor\cdot\rfloor$ denotes the floor operator and $L_{t}\geq L_{\mathrm{win}}$ is assumed. The short-time Fourier transform (STFT) coefficient at Doppler bin $k\in\{0,\ldots,N_{f}-1\}$ and frame index $n\in\{0,\ldots,N_{\mathrm{frm}}-1\}$ is defined by
\begin{equation}
S_{p,a}^{(\beta,\boldsymbol{\xi})}[k,n]
=
\sum_{r=0}^{L_{\mathrm{win}}-1}
g_{\mathrm{win}}[r]\,
g_{p,a}^{(\beta,\boldsymbol{\xi})}[nS_{\mathrm{hop}}+r]\,
e^{-j\frac{2\pi}{N_{f}}kr},
\end{equation}
where $r$ is the local window index. The corresponding Doppler-time map (DTM) is
\begin{equation}
\big[\mathbf{X}_{p,a}^{(\beta,\boldsymbol{\xi},\mathrm{dt})}\big]_{k,n}
=
\left|S_{p,a}^{(\beta,\boldsymbol{\xi})}[k,n]\right|^{2},
\end{equation}
where $\mathbf{X}_{p,a}^{(\beta,\boldsymbol{\xi},\mathrm{dt})}\in\mathbb{R}_{+}^{N_{f}\times N_{\mathrm{frm}}}$ collects the time-varying Doppler energy.\par
Let $\Delta_{t}>0$ denote the slow-time sampling interval introduced in the echo model. After spectral centering, the Doppler coordinate associated with bin $k$ is written as
\begin{equation}
f_{k}
=
\frac{k-\lfloor N_{f}/2\rfloor}{N_{f}\Delta_{t}},
\end{equation}
where $f_{k}\in\mathbb{R}$ denotes the discrete Doppler frequency associated with the $k$th row of the DTM. To suppress the residual near-zero Doppler component, let $f_{\mathrm{th}}>0$ denote a deterministic threshold. The micro-Doppler signature is then defined as the zero-Doppler-suppressed special case
\begin{equation}
\big[\mathbf{X}_{p,a}^{(\beta,\boldsymbol{\xi},\mathrm{md})}\big]_{k,n}
=
\mathbf{1}\{|f_{k}|\geq f_{\mathrm{th}}\}
\big[\mathbf{X}_{p,a}^{(\beta,\boldsymbol{\xi},\mathrm{dt})}\big]_{k,n},
\end{equation}
where $\mathbf{X}_{p,a}^{(\beta,\boldsymbol{\xi},\mathrm{md})}\in\mathbb{R}_{+}^{N_{f}\times N_{\mathrm{frm}}}$ is the micro-Doppler signature and $\mathbf{1}\{\cdot\}$ denotes the indicator function.\par
Besides the high-dimensional map inputs, a physics-guided low-dimensional representation is introduced to summarize the dominant kinematic and Doppler attributes. Let $\varepsilon_{\mathrm{nrm}}>0$ denote a deterministic normalization floor. The normalized RTM and micro-Doppler signature energy maps are defined by
\begin{subequations}
\begin{align}
\big[\mathbf{P}_{p,a}^{(\beta,\boldsymbol{\xi},\mathrm{rt})}\big]_{i,\ell}
&=
\frac{\big[\mathbf{X}_{p,a}^{(\beta,\boldsymbol{\xi},\mathrm{rt})}\big]_{i,\ell}+\varepsilon_{\mathrm{nrm}}}
{\sum_{\bar{i}=0}^{N_{\tau}-1}\sum_{\bar{\ell}=0}^{L_{t}-1}\left(\big[\mathbf{X}_{p,a}^{(\beta,\boldsymbol{\xi},\mathrm{rt})}\big]_{\bar{i},\bar{\ell}}+\varepsilon_{\mathrm{nrm}}\right)},\\
\big[\mathbf{P}_{p,a}^{(\beta,\boldsymbol{\xi},\mathrm{md})}\big]_{k,n}
&=
\frac{\big[\mathbf{X}_{p,a}^{(\beta,\boldsymbol{\xi},\mathrm{md})}\big]_{k,n}+\varepsilon_{\mathrm{nrm}}}
{\sum_{\bar{k}=0}^{N_{f}-1}\sum_{\bar{n}=0}^{N_{\mathrm{frm}}-1}\left(\big[\mathbf{X}_{p,a}^{(\beta,\boldsymbol{\xi},\mathrm{md})}\big]_{\bar{k},\bar{n}}+\varepsilon_{\mathrm{nrm}}\right)},
\end{align}
\end{subequations}
where $\mathbf{P}_{p,a}^{(\beta,\boldsymbol{\xi},\mathrm{rt})}\in\mathbb{R}_{+}^{N_{\tau}\times L_{t}}$ and $\mathbf{P}_{p,a}^{(\beta,\boldsymbol{\xi},\mathrm{md})}\in\mathbb{R}_{+}^{N_{f}\times N_{\mathrm{frm}}}$ both have unit total mass.\par
The delay marginal and the Doppler marginal are then given by
\begin{equation}
p_{\tau,p,a}^{(\beta,\boldsymbol{\xi})}[i]
=
\sum_{\ell=0}^{L_{t}-1}\big[\mathbf{P}_{p,a}^{(\beta,\boldsymbol{\xi},\mathrm{rt})}\big]_{i,\ell},\qquad
p_{f,p,a}^{(\beta,\boldsymbol{\xi})}[k]
=
\sum_{n=0}^{N_{\mathrm{frm}}-1}\big[\mathbf{P}_{p,a}^{(\beta,\boldsymbol{\xi},\mathrm{md})}\big]_{k,n},
\end{equation}
where $\sum_{i=0}^{N_{\tau}-1}p_{\tau,p,a}^{(\beta,\boldsymbol{\xi})}[i]=1$ and $\sum_{k=0}^{N_{f}-1}p_{f,p,a}^{(\beta,\boldsymbol{\xi})}[k]=1$. Let $\tau_{i}=i\Delta_{\tau}$ denote the delay coordinate of the $i$th fast-time bin, where $\Delta_{\tau}>0$ is the fast-time sampling interval introduced in the echo model. The delay centroid and delay spread are defined as
\begin{equation}
\mu_{\tau,p,a}^{(\beta,\boldsymbol{\xi})}
=
\sum_{i=0}^{N_{\tau}-1}\tau_{i}\,p_{\tau,p,a}^{(\beta,\boldsymbol{\xi})}[i],\qquad
\sigma_{\tau,p,a}^{(\beta,\boldsymbol{\xi})}
=
\left(\sum_{i=0}^{N_{\tau}-1}\big(\tau_{i}-\mu_{\tau,p,a}^{(\beta,\boldsymbol{\xi})}\big)^{2}p_{\tau,p,a}^{(\beta,\boldsymbol{\xi})}[i]\right)^{\frac{1}{2}},
\end{equation}
where $\mu_{\tau,p,a}^{(\beta,\boldsymbol{\xi})}$ measures the dominant delay location and $\sigma_{\tau,p,a}^{(\beta,\boldsymbol{\xi})}$ measures the delay dispersion. To quantify temporal range migration, the frame-wise delay centroid is defined by
\begin{equation}
\overline{\tau}_{p,a}^{(\beta,\boldsymbol{\xi})}[\ell]
=
\frac{\sum_{i=0}^{N_{\tau}-1}\tau_{i}\left(\big[\mathbf{X}_{p,a}^{(\beta,\boldsymbol{\xi},\mathrm{rt})}\big]_{i,\ell}+\varepsilon_{\mathrm{nrm}}\right)}
{\sum_{i=0}^{N_{\tau}-1}\left(\big[\mathbf{X}_{p,a}^{(\beta,\boldsymbol{\xi},\mathrm{rt})}\big]_{i,\ell}+\varepsilon_{\mathrm{nrm}}\right)},
\end{equation}
where $\overline{\tau}_{p,a}^{(\beta,\boldsymbol{\xi})}[\ell]$ is the delay centroid at slow-time index $\ell$. The mean absolute inter-frame range migration is then written as
\begin{equation}
\eta_{\tau,p,a}^{(\beta,\boldsymbol{\xi})}
=
\frac{1}{L_{t}-1}
\sum_{\ell=0}^{L_{t}-2}
\left|
\overline{\tau}_{p,a}^{(\beta,\boldsymbol{\xi})}[\ell+1]
-\overline{\tau}_{p,a}^{(\beta,\boldsymbol{\xi})}[\ell]
\right|,
\end{equation}
where $\eta_{\tau,p,a}^{(\beta,\boldsymbol{\xi})}\geq 0$ measures the average delay migration speed across consecutive frames.\par
The Doppler centroid, Doppler spread, and mean absolute Doppler are defined from the micro-Doppler marginal as
\begin{equation}
\begin{gathered}
\mu_{f,p,a}^{(\beta,\boldsymbol{\xi})}
=
\sum_{k=0}^{N_{f}-1}f_{k}\,p_{f,p,a}^{(\beta,\boldsymbol{\xi})}[k],\qquad
\sigma_{f,p,a}^{(\beta,\boldsymbol{\xi})}
=
\Bigl(\sum_{k=0}^{N_{f}-1}\big(f_{k}-\mu_{f,p,a}^{(\beta,\boldsymbol{\xi})}\big)^{2}p_{f,p,a}^{(\beta,\boldsymbol{\xi})}[k]\Bigr)^{\frac{1}{2}},\\
\eta_{f,p,a}^{(\beta,\boldsymbol{\xi})}
=
\sum_{k=0}^{N_{f}-1}|f_{k}|\,p_{f,p,a}^{(\beta,\boldsymbol{\xi})}[k],
\end{gathered}
\end{equation}
where $\mu_{f,p,a}^{(\beta,\boldsymbol{\xi})}$ captures the signed Doppler center, $\sigma_{f,p,a}^{(\beta,\boldsymbol{\xi})}$ captures the Doppler bandwidth, and $\eta_{f,p,a}^{(\beta,\boldsymbol{\xi})}$ captures the average absolute radial-motion intensity. To characterize directional asymmetry, define the positive- and negative-Doppler masses as
\begin{equation}
\pi_{f,+,p,a}^{(\beta,\boldsymbol{\xi})}
=
\sum_{k:\,f_{k}>0}p_{f,p,a}^{(\beta,\boldsymbol{\xi})}[k],\qquad
\pi_{f,-,p,a}^{(\beta,\boldsymbol{\xi})}
=
\sum_{k:\,f_{k}<0}p_{f,p,a}^{(\beta,\boldsymbol{\xi})}[k],
\end{equation}
where $\pi_{f,+,p,a}^{(\beta,\boldsymbol{\xi})}\in[0,1]$ and $\pi_{f,-,p,a}^{(\beta,\boldsymbol{\xi})}\in[0,1]$. The positive-minus-negative Doppler energy asymmetry is then defined as
\begin{equation}
\kappa_{f,p,a}^{(\beta,\boldsymbol{\xi})}
=
\pi_{f,+,p,a}^{(\beta,\boldsymbol{\xi})}
-\pi_{f,-,p,a}^{(\beta,\boldsymbol{\xi})},
\end{equation}
where $\kappa_{f,p,a}^{(\beta,\boldsymbol{\xi})}\in[-1,1]$ indicates whether the dominant micro-motion energy is biased toward approaching or receding radial motion.\par
The resulting seven-dimensional physical feature vector is constructed as
\begin{equation}
\boldsymbol{\psi}_{p,a}^{(\beta,\boldsymbol{\xi})}
=
\begin{bmatrix}
\mu_{\tau,p,a}^{(\beta,\boldsymbol{\xi})} &
\sigma_{\tau,p,a}^{(\beta,\boldsymbol{\xi})} &
\eta_{\tau,p,a}^{(\beta,\boldsymbol{\xi})} &
\mu_{f,p,a}^{(\beta,\boldsymbol{\xi})} &
\sigma_{f,p,a}^{(\beta,\boldsymbol{\xi})} &
\eta_{f,p,a}^{(\beta,\boldsymbol{\xi})} &
\kappa_{f,p,a}^{(\beta,\boldsymbol{\xi})}
\end{bmatrix}^{\top},
\end{equation}
where $\boldsymbol{\psi}_{p,a}^{(\beta,\boldsymbol{\xi})}\in\mathbb{R}^{7}$ summarizes delay localization, delay spread, range migration, signed Doppler center, Doppler spread, mean absolute Doppler, and directional Doppler asymmetry, respectively.\par
To unify the later learning model, define the representation index set
\begin{equation}
\mathcal{V}
=
\{\mathrm{rt},\mathrm{dt},\mathrm{md},\mathrm{phy}\},
\end{equation}
where $\mathrm{rt}$, $\mathrm{dt}$, $\mathrm{md}$, and $\mathrm{phy}$ denote the RTM, DTM, micro-Doppler signature, and physics-guided low-dimensional representation, respectively. For each $\upsilon\in\mathcal{V}$, let $\mathcal{R}_{\upsilon}$ denote the corresponding representation operator. The classifier input is written uniformly as
\begin{equation}
\boldsymbol{\varphi}_{p,a}^{(\beta,\boldsymbol{\xi},\upsilon)}
=
\mathcal{R}_{\upsilon}\!\left(\widehat{\mathbf{Z}}_{p,a}^{(\beta,\boldsymbol{\xi})}\right),
\end{equation}
where $\boldsymbol{\varphi}_{p,a}^{(\beta,\boldsymbol{\xi},\upsilon)}\in\mathbb{R}^{d_{\upsilon}}$ is the final feature vector supplied to the subsequent recognition model. More explicitly,
\begin{equation}
\begin{aligned}
\mathcal{R}_{\mathrm{rt}}\!\left(\widehat{\mathbf{Z}}_{p,a}^{(\beta,\boldsymbol{\xi})}\right)
&=
\operatorname{vec}\!\left(\mathbf{X}_{p,a}^{(\beta,\boldsymbol{\xi},\mathrm{rt})}\right),\qquad
\mathcal{R}_{\mathrm{dt}}\!\left(\widehat{\mathbf{Z}}_{p,a}^{(\beta,\boldsymbol{\xi})}\right)
=
\operatorname{vec}\!\left(\mathbf{X}_{p,a}^{(\beta,\boldsymbol{\xi},\mathrm{dt})}\right),\\
\mathcal{R}_{\mathrm{md}}\!\left(\widehat{\mathbf{Z}}_{p,a}^{(\beta,\boldsymbol{\xi})}\right)
&=
\operatorname{vec}\!\left(\mathbf{X}_{p,a}^{(\beta,\boldsymbol{\xi},\mathrm{md})}\right),\qquad
\mathcal{R}_{\mathrm{phy}}\!\left(\widehat{\mathbf{Z}}_{p,a}^{(\beta,\boldsymbol{\xi})}\right)
=
\boldsymbol{\psi}_{p,a}^{(\beta,\boldsymbol{\xi})},
\end{aligned}
\end{equation}
where $\operatorname{vec}(\cdot)$ denotes column-wise vectorization. The corresponding input dimensions are
\begin{equation}
d_{\mathrm{rt}}
=
N_{\tau}L_{t},\qquad
d_{\mathrm{dt}}
=
N_{f}N_{\mathrm{frm}},\qquad
d_{\mathrm{md}}
=
N_{f}N_{\mathrm{frm}},\qquad
d_{\mathrm{phy}}
=
7.
\end{equation}
\par
For the later generalization analysis, the admissible clutter-suppressed observation set is denoted by
\begin{equation}
\mathfrak{Z}
=
\left\{
\widehat{\mathbf{Z}}\in\mathbb{C}^{N_{\tau}\times L_{t}}
:
\|\widehat{\mathbf{Z}}\|_{F}\leq B_{\widehat{z}}
\right\},
\end{equation}
where $B_{\widehat{z}}>0$ is a finite envelope constant and $\|\cdot\|_{F}$ denotes the Frobenius norm. Since $\mathcal{P}_{\mathrm{cs}}$ is linear, the STFT is linear, the modulus-square operation is locally Lipschitz on bounded sets, and the normalized moment map is stabilized by $\varepsilon_{\mathrm{nrm}}>0$, each admissible representation operator is Lipschitz continuous on $\mathfrak{Z}$. Therefore, for every $\upsilon\in\mathcal{V}$, there exists a finite constant $L_{\upsilon}>0$ such that, for all $\widehat{\mathbf{Z}},\widehat{\mathbf{Z}}'\in\mathfrak{Z}$,
\begin{equation}
\left\|
\mathcal{R}_{\upsilon}(\widehat{\mathbf{Z}})
-\mathcal{R}_{\upsilon}(\widehat{\mathbf{Z}}')
\right\|_{2}
\leq
L_{\upsilon}
\left\|
\widehat{\mathbf{Z}}
-\widehat{\mathbf{Z}}'
\right\|_{F},
\end{equation}
where $\|\cdot\|_{2}$ denotes the Euclidean norm in the representation space. Since
\begin{equation}
d_{\mathrm{phy}}\ll d_{\mathrm{rt}},\qquad
d_{\mathrm{phy}}\ll d_{\mathrm{dt}},\qquad
d_{\mathrm{phy}}\ll d_{\mathrm{md}},
\end{equation}
the physics-guided low-dimensional representation is expected to reduce both input dimensionality and representation sensitivity. The generic input $\boldsymbol{\varphi}_{p,a}^{(\beta,\boldsymbol{\xi},\upsilon)}$ defined here is passed directly to the bounded-weight neural-network hypothesis class in Subsection II-D. It also serves as the entry point for the representation-induced bound tightening analysis in Subsection III-C.\par

\subsection{Bounded-Weight Neural Network for TWR HAR}
Based on the unified feature representation introduced in the previous subsection, the TWR HAR recognizer is modeled as a bounded-weight feedforward neural network whose score magnitude and input sensitivity are both uniformly controlled. Such a formulation is consistent with entropy-regularized learning analyses \cite{LossSurfaceIR2023} and recent information-theoretic deep-network generalization bounds \cite{InfoGenDNN2025}. For every admissible representation index $\upsilon\in\mathcal{V}$, the classifier input is
\begin{equation}
\boldsymbol{\varphi}_{p,a}^{(\beta,\boldsymbol{\xi},\upsilon)}
=
\mathcal{R}_{\upsilon}\!\left(\widehat{\mathbf{Z}}_{p,a}^{(\beta,\boldsymbol{\xi})}\right)
\in\mathbb{R}^{d_{\upsilon}},
\end{equation}
where $\mathcal{V}=\{\mathrm{rt},\mathrm{dt},\mathrm{md},\mathrm{phy}\}$, $\widehat{\mathbf{Z}}_{p,a}^{(\beta,\boldsymbol{\xi})}\in\mathfrak{Z}$, and $d_{\upsilon}$ retain the definitions introduced earlier in Section II. Since $\mathbf{0}_{N_{\tau}\times L_{t}}\in\mathfrak{Z}$, by inserting and subtracting $\mathcal{R}_{\upsilon}(\mathbf{0}_{N_{\tau}\times L_{t}})$ and using the representation Lipschitz bound in Subsection II-C, one obtains
\begin{equation}
\begin{aligned}
\left\|
\boldsymbol{\varphi}_{p,a}^{(\beta,\boldsymbol{\xi},\upsilon)}
\right\|_{2}
&=
\left\|
\mathcal{R}_{\upsilon}\!\left(\widehat{\mathbf{Z}}_{p,a}^{(\beta,\boldsymbol{\xi})}\right)
-\mathcal{R}_{\upsilon}(\mathbf{0}_{N_{\tau}\times L_{t}})
+\mathcal{R}_{\upsilon}(\mathbf{0}_{N_{\tau}\times L_{t}})
\right\|_{2}\\
&\leq
L_{\upsilon}
\left\|
\widehat{\mathbf{Z}}_{p,a}^{(\beta,\boldsymbol{\xi})}
\right\|_{F}
+
\left\|
\mathcal{R}_{\upsilon}(\mathbf{0}_{N_{\tau}\times L_{t}})
\right\|_{2}
\leq
L_{\upsilon}B_{\widehat{z}}
+
\left\|
\mathcal{R}_{\upsilon}(\mathbf{0}_{N_{\tau}\times L_{t}})
\right\|_{2}
=:
B_{\varphi,\upsilon},
\end{aligned}
\end{equation}
where $B_{\varphi,\upsilon}>0$ is the representation-dependent input-radius constant, $L_{\upsilon}>0$ is the Lipschitz constant of $\mathcal{R}_{\upsilon}$, and $B_{\widehat{z}}>0$ is the clutter-suppressed observation envelope.\par
Define the admissible representation-input set as
\begin{equation}
\mathfrak{X}_{\upsilon}
=
\left\{
\boldsymbol{\varphi}\in\mathbb{R}^{d_{\upsilon}}
:
\|\boldsymbol{\varphi}\|_{2}\leq B_{\varphi,\upsilon}
\right\},
\end{equation}
where every admissible representation output $\boldsymbol{\varphi}_{p,a}^{(\beta,\boldsymbol{\xi},\upsilon)}$ belongs to $\mathfrak{X}_{\upsilon}$ by construction.\par
Let $Q_{\mathrm{net}}\geq 1$ denote the number of affine layers, let $\nu_{0}=d_{\upsilon}$ denote the input width, let $\nu_{Q_{\mathrm{net}}}=C$ denote the output width, and let $\nu_{1},\ldots,\nu_{Q_{\mathrm{net}}-1}$ denote the hidden-layer widths. For a generic input $\boldsymbol{\varphi}\in\mathbb{R}^{d_{\upsilon}}$, the layerwise state recursion is defined as
\begin{equation}
\mathbf{o}_{0}(\boldsymbol{\varphi})=\boldsymbol{\varphi},
\end{equation}
and
\begin{equation}
\mathbf{o}_{q}(\boldsymbol{\varphi})
=
\mathfrak{a}_{q}\!\left(
\mathbf{A}_{q,\upsilon}\mathbf{o}_{q-1}(\boldsymbol{\varphi})
+
\mathbf{e}_{q,\upsilon}
\right),\qquad q\in\{1,\ldots,Q_{\mathrm{net}}-1\},
\end{equation}
where $\mathbf{o}_{q}(\boldsymbol{\varphi})\in\mathbb{R}^{\nu_{q}}$ is the $q$th hidden state, $\mathbf{A}_{q,\upsilon}\in\mathbb{R}^{\nu_{q}\times \nu_{q-1}}$ is the $q$th weight matrix, $\mathbf{e}_{q,\upsilon}\in\mathbb{R}^{\nu_{q}}$ is the $q$th affine offset vector, and $\mathfrak{a}_{q}:\mathbb{R}^{\nu_{q}}\rightarrow\mathbb{R}^{\nu_{q}}$ is the $q$th activation map. The final multiclass score vector is written as
\begin{equation}
\mathbf{s}_{\Theta_{\upsilon},\upsilon}(\boldsymbol{\varphi})
=
\mathbf{A}_{Q_{\mathrm{net}},\upsilon}\mathbf{o}_{Q_{\mathrm{net}}-1}(\boldsymbol{\varphi})
+
\mathbf{e}_{Q_{\mathrm{net}},\upsilon},
\end{equation}
where $\mathbf{s}_{\Theta_{\upsilon},\upsilon}(\boldsymbol{\varphi})\in\mathbb{R}^{C}$ and
\begin{equation}
\Theta_{\upsilon}
=
\left\{
\left(\mathbf{A}_{q,\upsilon},\mathbf{e}_{q,\upsilon}\right)
:
q=1,\ldots,Q_{\mathrm{net}}
\right\},
\end{equation}
collects all trainable parameters under representation $\upsilon$.\par
To support the later generalization analysis, the admissible parameter set is restricted as
\begin{equation}
\mathfrak{P}_{\upsilon}
=
\left\{
\Theta_{\upsilon}
:
\left\|
\mathbf{A}_{q,\upsilon}
\right\|_{2}
\leq
S_{q,\upsilon},\ 
\left\|
\mathbf{A}_{q,\upsilon}
\right\|_{F}
\leq
F_{q,\upsilon},\ 
\left\|
\mathbf{e}_{q,\upsilon}
\right\|_{2}
\leq
E_{q,\upsilon},\ 
q\in\{1,\ldots,Q_{\mathrm{net}}\}
\right\},
\end{equation}
where $S_{q,\upsilon}>0$, $F_{q,\upsilon}>0$, and $E_{q,\upsilon}>0$ are deterministic layerwise bounds, $\|\cdot\|_{2}$ denotes the spectral norm for matrices and the Euclidean norm for vectors, and $\|\cdot\|_{F}$ denotes the Frobenius norm. These bounds are assumed to be compatible so that $\mathfrak{P}_{\upsilon}\neq\varnothing$; in particular, one sufficient condition is $S_{q,\upsilon}\leq F_{q,\upsilon}$ for every $q\in\{1,\ldots,Q_{\mathrm{net}}\}$, since $\|\mathbf{A}_{q,\upsilon}\|_{2}\leq\|\mathbf{A}_{q,\upsilon}\|_{F}$. The spectral bounds are used to control perturbation propagation across layers, whereas the Frobenius bounds are retained for the later complexity terms in the source-to-target generalization bound. Assume further that each activation map is zero-preserving and Lipschitz continuous, namely
\begin{equation}
\mathfrak{a}_{q}(\mathbf{0}_{\nu_{q}})=\mathbf{0}_{\nu_{q}},\qquad
\left\|
\mathfrak{a}_{q}(\mathbf{u})-\mathfrak{a}_{q}(\mathbf{u}')
\right\|_{2}
\leq
\Lambda_{q}
\left\|
\mathbf{u}-\mathbf{u}'
\right\|_{2},
\end{equation}
where $\mathbf{u}\in\mathbb{R}^{\nu_{q}}$, $\mathbf{u}'\in\mathbb{R}^{\nu_{q}}$, and $\Lambda_{q}>0$ is the activation Lipschitz constant.\par
The hidden-state envelope is then obtained recursively. Define
\begin{equation}
\Omega_{0,\upsilon}=B_{\varphi,\upsilon},\qquad
\Omega_{q,\upsilon}=
\Lambda_{q}\left(S_{q,\upsilon}\Omega_{q-1,\upsilon}+E_{q,\upsilon}\right),\qquad
q\in\{1,\ldots,Q_{\mathrm{net}}-1\},
\end{equation}
where $\Omega_{q,\upsilon}>0$ is the deterministic upper bound of the $q$th hidden-state norm. Since $\mathfrak{a}_{q}(\mathbf{0}_{\nu_{q}})=\mathbf{0}_{\nu_{q}}$, the Lipschitz property implies
\begin{equation}
\left\|
\mathfrak{a}_{q}(\mathbf{u})
\right\|_{2}
=
\left\|
\mathfrak{a}_{q}(\mathbf{u})-\mathfrak{a}_{q}(\mathbf{0}_{\nu_{q}})
\right\|_{2}
\leq
\Lambda_{q}
\left\|
\mathbf{u}
\right\|_{2},
\end{equation}
where $\mathbf{u}\in\mathbb{R}^{\nu_{q}}$ is arbitrary.\par
Therefore, for every admissible sample $(p,a,\beta,\boldsymbol{\xi},\upsilon)$ and every hidden-layer index $q\in\{1,\ldots,Q_{\mathrm{net}}-1\}$,
\begin{equation}
\left\|
\mathbf{o}_{q}\!\left(\boldsymbol{\varphi}_{p,a}^{(\beta,\boldsymbol{\xi},\upsilon)}\right)
\right\|_{2}
\leq
\Omega_{q,\upsilon},
\end{equation}
where the bound is obtained by induction from the inequality
\begin{equation}
\left\|
\mathbf{o}_{q}\!\left(\boldsymbol{\varphi}_{p,a}^{(\beta,\boldsymbol{\xi},\upsilon)}\right)
\right\|_{2}
\leq
\Lambda_{q}
\left\|
\mathbf{A}_{q,\upsilon}\mathbf{o}_{q-1}\!\left(\boldsymbol{\varphi}_{p,a}^{(\beta,\boldsymbol{\xi},\upsilon)}\right)
+
\mathbf{e}_{q,\upsilon}
\right\|_{2}
\leq
\Lambda_{q}\left(
S_{q,\upsilon}
\left\|
\mathbf{o}_{q-1}\!\left(\boldsymbol{\varphi}_{p,a}^{(\beta,\boldsymbol{\xi},\upsilon)}\right)
\right\|_{2}
+
E_{q,\upsilon}
\right).
\end{equation}
\par
The recursion admits the explicit expansion
\begin{equation}
\Omega_{q,\upsilon}
=
\left(
\prod_{r=1}^{q}\Lambda_{r}S_{r,\upsilon}
\right)
B_{\varphi,\upsilon}
+
\sum_{\kappa=1}^{q}
\left(
\prod_{r=\kappa}^{q}\Lambda_{r}
\right)
\left(
\prod_{r=\kappa+1}^{q}S_{r,\upsilon}
\right)
E_{\kappa,\upsilon},
\end{equation}
where the empty product is understood as one.\par
Consequently, for every admissible input $\boldsymbol{\varphi}\in\mathfrak{X}_{\upsilon}$, the final score magnitude is uniformly bounded as
\begin{equation}
\left\|
\mathbf{s}_{\Theta_{\upsilon},\upsilon}(\boldsymbol{\varphi})
\right\|_{2}
\leq
S_{Q_{\mathrm{net}},\upsilon}\Omega_{Q_{\mathrm{net}}-1,\upsilon}
+
E_{Q_{\mathrm{net}},\upsilon}
=:
B_{s,\upsilon},
\end{equation}
where $B_{s,\upsilon}>0$ is the score-envelope constant. Since the absolute value of each coordinate is dominated by the Euclidean norm, the classwise logits satisfy
\begin{equation}
\left|
\left[
\mathbf{s}_{\Theta_{\upsilon},\upsilon}(\boldsymbol{\varphi})
\right]_{a}
\right|
\leq
B_{s,\upsilon},\qquad \forall a\in\mathcal{A},
\end{equation}
where $\boldsymbol{\varphi}\in\mathfrak{X}_{\upsilon}$ is arbitrary. The posterior probability vector is then defined by the softmax mapping
\begin{equation}
\left[
h_{\Theta_{\upsilon},\upsilon}(\boldsymbol{\varphi})
\right]_{a}
=
\frac{
\exp\!\left(
\left[
\mathbf{s}_{\Theta_{\upsilon},\upsilon}(\boldsymbol{\varphi})
\right]_{a}
\right)
}{
\sum_{\bar{a}=1}^{C}
\exp\!\left(
\left[
\mathbf{s}_{\Theta_{\upsilon},\upsilon}(\boldsymbol{\varphi})
\right]_{\bar{a}}
\right)
},
\end{equation}
where $h_{\Theta_{\upsilon},\upsilon}(\boldsymbol{\varphi})\in\Delta_{C}$ and
\begin{equation}
\Delta_{C}
=
\left\{
\boldsymbol{\pi}\in\mathbb{R}_{+}^{C}
:
\sum_{a=1}^{C}\pi_{a}=1
\right\},
\end{equation}
is the $C$-class probability simplex. The induced TWR HAR decision rule is
\begin{equation}
\widehat{a}_{\Theta_{\upsilon},\upsilon}(\boldsymbol{\varphi})
=
\arg\max_{a\in\mathcal{A}}
\left[
h_{\Theta_{\upsilon},\upsilon}(\boldsymbol{\varphi})
\right]_{a}
=
\arg\max_{a\in\mathcal{A}}
\left[
\mathbf{s}_{\Theta_{\upsilon},\upsilon}(\boldsymbol{\varphi})
\right]_{a},
\end{equation}
where the second equality follows from the monotonicity of the exponential function. Therefore, the representation-dependent bounded-weight hypothesis class is written as
\begin{equation}
\mathcal{H}_{\upsilon}
=
\left\{
h_{\Theta_{\upsilon},\upsilon}
:
\mathbb{R}^{d_{\upsilon}}
\rightarrow
\Delta_{C},
\Theta_{\upsilon}\in\mathfrak{P}_{\upsilon}
\right\},
\end{equation}
and the overall admissible hypothesis class is
\begin{equation}
\mathcal{H}
=
\bigcup_{\upsilon\in\mathcal{V}}\mathcal{H}_{\upsilon}.
\end{equation}
\par
It remains to quantify the sensitivity of the score map to representation perturbations. For any $\boldsymbol{\varphi},\boldsymbol{\varphi}'\in\mathbb{R}^{d_{\upsilon}}$ and any hidden-layer index $q\in\{1,\ldots,Q_{\mathrm{net}}-1\}$,
\begin{equation}
\left\|
\mathbf{o}_{q}(\boldsymbol{\varphi})
-
\mathbf{o}_{q}(\boldsymbol{\varphi}')
\right\|_{2}
\leq
\Lambda_{q}
\left\|
\mathbf{A}_{q,\upsilon}
\bigl(
\mathbf{o}_{q-1}(\boldsymbol{\varphi})
-
\mathbf{o}_{q-1}(\boldsymbol{\varphi}')
\bigr)
\right\|_{2}
\leq
\Lambda_{q}S_{q,\upsilon}
\left\|
\mathbf{o}_{q-1}(\boldsymbol{\varphi})
-
\mathbf{o}_{q-1}(\boldsymbol{\varphi}')
\right\|_{2},
\end{equation}
where the affine offsets cancel. Iterating this inequality gives
\begin{equation}
\left\|
\mathbf{o}_{q}(\boldsymbol{\varphi})
-
\mathbf{o}_{q}(\boldsymbol{\varphi}')
\right\|_{2}
\leq
\left(
\prod_{r=1}^{q}\Lambda_{r}S_{r,\upsilon}
\right)
\left\|
\boldsymbol{\varphi}
-
\boldsymbol{\varphi}'
\right\|_{2}.
\end{equation}
\par
Therefore, the multiclass score map satisfies
\begin{equation}
\left\|
\mathbf{s}_{\Theta_{\upsilon},\upsilon}(\boldsymbol{\varphi})
-
\mathbf{s}_{\Theta_{\upsilon},\upsilon}(\boldsymbol{\varphi}')
\right\|_{2}
\leq
J_{\upsilon}
\left\|
\boldsymbol{\varphi}
-
\boldsymbol{\varphi}'
\right\|_{2},
\end{equation}
where
\begin{equation}
J_{\upsilon}
=
S_{Q_{\mathrm{net}},\upsilon}
\prod_{r=1}^{Q_{\mathrm{net}}-1}\Lambda_{r}S_{r,\upsilon}
=
\left(
\prod_{r=1}^{Q_{\mathrm{net}}-1}\Lambda_{r}
\right)
\left(
\prod_{r=1}^{Q_{\mathrm{net}}}S_{r,\upsilon}
\right),
\end{equation}
is the representation-dependent network Lipschitz constant, with the empty activation product interpreted as one when $Q_{\mathrm{net}}=1$. By composition with the representation map from Subsection II-C, the end-to-end signal-to-score sensitivity becomes
\begin{equation}
\left\|
\mathbf{s}_{\Theta_{\upsilon},\upsilon}\!\left(
\mathcal{R}_{\upsilon}(\widehat{\mathbf{Z}})
\right)
-
\mathbf{s}_{\Theta_{\upsilon},\upsilon}\!\left(
\mathcal{R}_{\upsilon}(\widehat{\mathbf{Z}}')
\right)
\right\|_{2}
\leq
J_{\upsilon}
\left\|
\mathcal{R}_{\upsilon}(\widehat{\mathbf{Z}})
-
\mathcal{R}_{\upsilon}(\widehat{\mathbf{Z}}')
\right\|_{2}
\leq
G_{\upsilon}
\left\|
\widehat{\mathbf{Z}}
-
\widehat{\mathbf{Z}}'
\right\|_{F},
\end{equation}
where $\widehat{\mathbf{Z}},\widehat{\mathbf{Z}}'\in\mathfrak{Z}$ are arbitrary admissible clutter-suppressed observations and $G_{\upsilon}:=J_{\upsilon}L_{\upsilon}>0$ is the end-to-end TWR representation-classifier Lipschitz constant. Hence, both the output magnitude and the perturbation sensitivity of the recognizer are reduced to explicit deterministic constants governed by the representation index $\upsilon$, the bounded layer parameters, and the activation regularity. These bounded score functions, posterior mappings, and hypothesis classes are combined with the domain, loss, and risk definitions in Subsection II-E.\par

\subsection{Domain and Risk Definitions}
To place the bounded TWR HAR recognizer into a source-to-target statistical learning framework, the physical variables established in the previous subsections are aggregated into a single domain descriptor. This formulation follows recent information-theoretic transfer analyses \cite{TransferLearningIT2024} and risk characterizations for Gibbs and Bayesian learning \cite{GibbsGeneralization2024,MinimumExcessRisk2022}. Define the admissible domain space as
\begin{equation}
\mathfrak{D}
=
\mathcal{P}_{\mathrm{sub}}
\times
\mathcal{B}_{\mathrm{view}}
\times
\mathcal{X}_{\mathrm{wall}},
\end{equation}
where $\mathcal{P}_{\mathrm{sub}}$ denotes the admissible subject-index set, $\mathcal{B}_{\mathrm{view}}$ denotes the admissible observation-view set, and $\mathcal{X}_{\mathrm{wall}}\subseteq\mathbb{R}^{d_{\xi}}$ denotes the admissible wall-parameter set. Each domain element is written as
\begin{equation}
\mathfrak{d}
=
(p,\beta,\boldsymbol{\xi})
\in
\mathfrak{D},
\end{equation}
where $p$ retains the subject-index meaning introduced in Subsection II-A, $\beta$ retains the observation-view meaning introduced in Subsection II-B, and $\boldsymbol{\xi}$ retains the wall-parameter meaning introduced in Subsection II-B. Hence, cross-person, cross-view, and cross-wall shifts are all represented by changes of $\mathfrak{d}$ along the subject, view, and wall coordinates, respectively.\par
For statistical learning, the representation vector generated under a fixed domain descriptor is regarded as a random vector, since repeated acquisitions still contain activity-phase variability, residual clutter fluctuation, and receiver-noise randomness. Therefore, for each representation index $\upsilon\in\mathcal{V}$ and each activity label $a\in\mathcal{A}$, the activity-conditional feature law is defined as
\begin{equation}
\mathbb{P}_{\mathfrak{d},\upsilon}^{(a)}
:=
\operatorname{Law}\!\left(
\boldsymbol{\varphi}_{p,a}^{(\beta,\boldsymbol{\xi},\upsilon)}
\right),
\end{equation}
where $\mathbb{P}_{\mathfrak{d},\upsilon}^{(a)}$ is a probability measure on $\mathbb{R}^{d_{\upsilon}}$. Recall from the input-radius bound in Subsection II-D that the admissible representation-input set is
\begin{equation}
\mathfrak{X}_{\upsilon}
=
\left\{
\boldsymbol{\varphi}\in\mathbb{R}^{d_{\upsilon}}
:
\|\boldsymbol{\varphi}\|_{2}\leq B_{\varphi,\upsilon}
\right\},
\end{equation}
where $B_{\varphi,\upsilon}>0$ is the representation-dependent input-radius constant. Since every admissible representation output belongs to $\mathfrak{X}_{\upsilon}$, one has
\begin{equation}
\operatorname{supp}\!\left(
\mathbb{P}_{\mathfrak{d},\upsilon}^{(a)}
\right)
\subseteq
\mathfrak{X}_{\upsilon},
\end{equation}
where $\operatorname{supp}(\cdot)$ denotes distribution support. Let
\begin{equation}
\alpha_{a}^{(\mathfrak{d})}
:=
\mathbb{P}(A=a\mid \mathfrak{d}),\qquad a\in\mathcal{A},
\end{equation}
denote the domain-dependent label prior, where $A$ is the random activity label and $\sum_{a=1}^{C}\alpha_{a}^{(\mathfrak{d})}=1$. The resulting representation-dependent joint law on feature-label pairs is denoted by $\mathbb{P}_{\mathfrak{d},\upsilon}$. Equivalently, for every bounded measurable test function $\varpi:\mathbb{R}^{d_{\upsilon}}\times\mathcal{A}\rightarrow\mathbb{R}$,
\begin{equation}
\mathbb{E}_{(\boldsymbol{\varphi},A)\sim\mathbb{P}_{\mathfrak{d},\upsilon}}
\big[
\varpi(\boldsymbol{\varphi},A)
\big]
=
\sum_{a=1}^{C}
\alpha_{a}^{(\mathfrak{d})}
\mathbb{E}_{\boldsymbol{\varphi}\sim\mathbb{P}_{\mathfrak{d},\upsilon}^{(a)}}
\big[
\varpi(\boldsymbol{\varphi},a)
\big],
\end{equation}
where $\boldsymbol{\varphi}\in\mathbb{R}^{d_{\upsilon}}$ denotes the random classifier input. Fix now one source-domain descriptor and one target-domain descriptor as
\begin{equation}
\mathfrak{d}_{\mathrm{src}}
=
\left(
p_{\mathrm{src}},
\beta_{\mathrm{src}},
\boldsymbol{\xi}_{\mathrm{src}}
\right),\qquad
\mathfrak{d}_{\mathrm{tar}}
=
\left(
p_{\mathrm{tar}},
\beta_{\mathrm{tar}},
\boldsymbol{\xi}_{\mathrm{tar}}
\right),
\end{equation}
and abbreviate
\begin{equation}
\mathbb{P}_{\mathrm{src},\upsilon}
:=
\mathbb{P}_{\mathfrak{d}_{\mathrm{src}},\upsilon},\qquad
\mathbb{P}_{\mathrm{tar},\upsilon}
:=
\mathbb{P}_{\mathfrak{d}_{\mathrm{tar}},\upsilon}.
\end{equation}
\par
This notation covers the three canonical TWR shifts in a unified manner: cross-person transfer corresponds to $p_{\mathrm{src}}\neq p_{\mathrm{tar}}$ with $\beta_{\mathrm{src}}=\beta_{\mathrm{tar}}$ and $\boldsymbol{\xi}_{\mathrm{src}}=\boldsymbol{\xi}_{\mathrm{tar}}$, cross-view transfer corresponds to $\beta_{\mathrm{src}}\neq\beta_{\mathrm{tar}}$ with $p_{\mathrm{src}}=p_{\mathrm{tar}}$ and $\boldsymbol{\xi}_{\mathrm{src}}=\boldsymbol{\xi}_{\mathrm{tar}}$, and cross-wall transfer corresponds to $\boldsymbol{\xi}_{\mathrm{src}}\neq\boldsymbol{\xi}_{\mathrm{tar}}$ with $p_{\mathrm{src}}=p_{\mathrm{tar}}$ and $\beta_{\mathrm{src}}=\beta_{\mathrm{tar}}$.\par
For a fixed representation index $\upsilon\in\mathcal{V}$, source-domain training is based on the sample
\begin{equation}
\mathcal{S}_{\mathrm{src},\upsilon}
=
\left\{
\left(
\boldsymbol{\varphi}_{i}^{\mathrm{src}},
a_{i}^{\mathrm{src}}
\right)
:
i=1,\ldots,N_{\mathrm{src}}
\right\},
\end{equation}
where $N_{\mathrm{src}}\geq 1$ is the source sample size and the pairs $\{(\boldsymbol{\varphi}_{i}^{\mathrm{src}},a_{i}^{\mathrm{src}})\}_{i=1}^{N_{\mathrm{src}}}$ are assumed to be independently and identically distributed according to $\mathbb{P}_{\mathrm{src},\upsilon}$. For every $h\in\mathcal{H}_{\upsilon}$, the multiclass cross-entropy loss is defined by
\begin{equation}
\ell_{\mathrm{ce}}
\big(
h(\boldsymbol{\varphi}),
a
\big)
=
-\log
\left[
h(\boldsymbol{\varphi})
\right]_{a},
\end{equation}
where $\boldsymbol{\varphi}\in\mathbb{R}^{d_{\upsilon}}$ and $a\in\mathcal{A}$. In particular, when $h=h_{\Theta_{\upsilon},\upsilon}\in\mathcal{H}_{\upsilon}$ is induced by the bounded-weight neural network in Subsection II-D,
\begin{equation}
\ell_{\mathrm{ce}}
\big(
h_{\Theta_{\upsilon},\upsilon}(\boldsymbol{\varphi}),
a
\big)
=
-
\left[
\mathbf{s}_{\Theta_{\upsilon},\upsilon}(\boldsymbol{\varphi})
\right]_{a}
+
\log
\left(
\sum_{\bar{a}=1}^{C}
\exp\!\left(
\left[
\mathbf{s}_{\Theta_{\upsilon},\upsilon}(\boldsymbol{\varphi})
\right]_{\bar{a}}
\right)
\right),
\end{equation}
where $\mathbf{s}_{\Theta_{\upsilon},\upsilon}(\boldsymbol{\varphi})\in\mathbb{R}^{C}$ is the score vector defined in Subsection II-D. For compactness, define the score-domain loss
\begin{equation}
\ell_{\mathrm{ce}}(\mathbf{c},a)
:=
-c_{a}
+
\operatorname{lse}(\mathbf{c}),\qquad
\operatorname{lse}(\mathbf{c})
:=
\log\!\left(
\sum_{\bar{a}=1}^{C}e^{c_{\bar{a}}}
\right),
\end{equation}
where $\mathbf{c}=[c_{1},\ldots,c_{C}]^{\top}\in\mathbb{R}^{C}$ is a generic score vector.\par
The bounded-score property established in Subsection II-D immediately implies that the source and target losses are uniformly bounded. Indeed, if $\mathbf{c}=\mathbf{s}_{\Theta_{\upsilon},\upsilon}(\boldsymbol{\varphi})$ for some admissible $h_{\Theta_{\upsilon},\upsilon}\in\mathcal{H}_{\upsilon}$ and some $\boldsymbol{\varphi}\in\mathfrak{X}_{\upsilon}$, then $|c_{\bar{a}}|\leq B_{s,\upsilon}$ for every $\bar{a}\in\mathcal{A}$. Hence,
\begin{equation}
\begin{aligned}
\ell_{\mathrm{ce}}(\mathbf{c},a)
=
\log\!\left(
\frac{\sum_{\bar{a}=1}^{C}e^{c_{\bar{a}}}}{e^{c_{a}}}
\right)
=
\log\!\left(
1+\sum_{\bar{a}\neq a}e^{c_{\bar{a}}-c_{a}}
\right)
\leq
\log\!\left(
1+(C-1)e^{2B_{s,\upsilon}}
\right)
=:
B_{\ell,\upsilon},
\end{aligned}
\end{equation}
where $B_{\ell,\upsilon}>0$ depends only on the class number $C$ and the score-envelope constant $B_{s,\upsilon}$. Since $\sum_{\bar{a}=1}^{C}e^{c_{\bar{a}}}\geq e^{c_{a}}$, the same identity also yields
\begin{equation}
0
\leq
\ell_{\mathrm{ce}}(\mathbf{c},a)
\leq
B_{\ell,\upsilon}.
\end{equation}
\par
The same loss is also Lipschitz continuous with respect to the score vector. Since
\begin{equation}
\nabla\operatorname{lse}(\mathbf{c})
=
\begin{bmatrix}
\dfrac{e^{c_{1}}}{\sum_{\bar{a}=1}^{C}e^{c_{\bar{a}}}} &
\cdots &
\dfrac{e^{c_{C}}}{\sum_{\bar{a}=1}^{C}e^{c_{\bar{a}}}}
\end{bmatrix}^{\top}
\in
\Delta_{C},
\end{equation}
the mean-value theorem implies that, for any $\mathbf{c},\mathbf{c}'\in\mathbb{R}^{C}$, there exists a vector $\widetilde{\mathbf{c}}$ on the line segment joining $\mathbf{c}$ and $\mathbf{c}'$ such that
\begin{equation}
\begin{aligned}
\big|
\operatorname{lse}(\mathbf{c})
-
\operatorname{lse}(\mathbf{c}')
\big|
=
\big|
\nabla\operatorname{lse}(\widetilde{\mathbf{c}})^{\top}
(\mathbf{c}-\mathbf{c}')
\big|
\leq
\left\|
\nabla\operatorname{lse}(\widetilde{\mathbf{c}})
\right\|_{2}
\left\|
\mathbf{c}-\mathbf{c}'
\right\|_{2}
\leq
\left\|
\nabla\operatorname{lse}(\widetilde{\mathbf{c}})
\right\|_{1}
\left\|
\mathbf{c}-\mathbf{c}'
\right\|_{2}
=
\left\|
\mathbf{c}-\mathbf{c}'
\right\|_{2},
\end{aligned}
\end{equation}
where $\|\nabla\operatorname{lse}(\widetilde{\mathbf{c}})\|_{1}=1$ follows from $\nabla\operatorname{lse}(\widetilde{\mathbf{c}})\in\Delta_{C}$. Therefore,
\begin{equation}
\big|
\ell_{\mathrm{ce}}(\mathbf{c},a)
-
\ell_{\mathrm{ce}}(\mathbf{c}',a)
\big|
\leq
\left\|
\mathbf{c}-\mathbf{c}'
\right\|_{2}
+
\left\|
\mathbf{c}-\mathbf{c}'
\right\|_{2}
=
2
\left\|
\mathbf{c}-\mathbf{c}'
\right\|_{2},
\end{equation}
where the inequality $|c_{a}-c_{a}'|\leq\|\mathbf{c}-\mathbf{c}'\|_{2}$ has been used. Combining this result with the network Lipschitz constant $J_{\upsilon}$ from Subsection II-D gives
\begin{equation}
\big|
\ell_{\mathrm{ce}}
\big(
h_{\Theta_{\upsilon},\upsilon}(\boldsymbol{\varphi}),
a
\big)
-
\ell_{\mathrm{ce}}
\big(
h_{\Theta_{\upsilon},\upsilon}(\boldsymbol{\varphi}'),
a
\big)
\big|
\leq
2J_{\upsilon}
\left\|
\boldsymbol{\varphi}
-
\boldsymbol{\varphi}'
\right\|_{2},
\end{equation}
where $\boldsymbol{\varphi},\boldsymbol{\varphi}'\in\mathbb{R}^{d_{\upsilon}}$ are arbitrary. By further composition with the representation map, one also has
\begin{equation}
\big|
\ell_{\mathrm{ce}}
\big(
h_{\Theta_{\upsilon},\upsilon}(\mathcal{R}_{\upsilon}(\widehat{\mathbf{Z}})),
a
\big)
-
\ell_{\mathrm{ce}}
\big(
h_{\Theta_{\upsilon},\upsilon}(\mathcal{R}_{\upsilon}(\widehat{\mathbf{Z}}')),
a
\big)
\big|
\leq
2G_{\upsilon}
\left\|
\widehat{\mathbf{Z}}
-
\widehat{\mathbf{Z}}'
\right\|_{F},
\end{equation}
where $\widehat{\mathbf{Z}},\widehat{\mathbf{Z}}'\in\mathfrak{Z}$ are admissible clutter-suppressed observations and $G_{\upsilon}=J_{\upsilon}L_{\upsilon}$ was defined in Subsection II-D. Consequently, the TWR learning loss inherits both a uniform envelope and an explicit end-to-end perturbation sensitivity from the bounded representation-classifier pipeline.\par
The source-domain population risk, target-domain population risk, and source-domain empirical risk are now defined, respectively, by
\begin{equation}
\mathcal{L}_{\mathrm{src},\upsilon}(h)
:=
\mathbb{E}_{(\boldsymbol{\varphi}^{\mathrm{src}},A^{\mathrm{src}})\sim\mathbb{P}_{\mathrm{src},\upsilon}}
\big[
\ell_{\mathrm{ce}}
\big(
h(\boldsymbol{\varphi}^{\mathrm{src}}),
A^{\mathrm{src}}
\big)
\big],
\end{equation}
\begin{equation}
\mathcal{L}_{\mathrm{tar},\upsilon}(h)
:=
\mathbb{E}_{(\boldsymbol{\varphi}^{\mathrm{tar}},A^{\mathrm{tar}})\sim\mathbb{P}_{\mathrm{tar},\upsilon}}
\big[
\ell_{\mathrm{ce}}
\big(
h(\boldsymbol{\varphi}^{\mathrm{tar}}),
A^{\mathrm{tar}}
\big)
\big],
\end{equation}
and
\begin{equation}
\widehat{\mathcal{L}}_{\mathrm{src},\upsilon}(h)
:=
\frac{1}{N_{\mathrm{src}}}
\sum_{i=1}^{N_{\mathrm{src}}}
\ell_{\mathrm{ce}}
\big(
h(\boldsymbol{\varphi}_{i}^{\mathrm{src}}),
a_{i}^{\mathrm{src}}
\big),
\end{equation}
where $h\in\mathcal{H}_{\upsilon}$ is arbitrary. The corresponding source empirical risk minimizer is defined as
\begin{equation}
\widehat{h}_{\mathrm{src},\upsilon}
\in
\arg\min_{h\in\mathcal{H}_{\upsilon}}
\widehat{\mathcal{L}}_{\mathrm{src},\upsilon}(h).
\end{equation}
Since $\mathfrak{P}_{\upsilon}$ is a closed and bounded subset of a finite-dimensional Euclidean space and the mapping $\Theta_{\upsilon}\mapsto \widehat{\mathcal{L}}_{\mathrm{src},\upsilon}(h_{\Theta_{\upsilon},\upsilon})$ is continuous, the above argmin set is nonempty.\par
The central source-to-target generalization quantity is the domain-dependent risk gap
\begin{equation}
\varepsilon_{\mathrm{gen},\upsilon}(h)
:=
\mathcal{L}_{\mathrm{tar},\upsilon}(h)
-
\widehat{\mathcal{L}}_{\mathrm{src},\upsilon}(h),
\end{equation}
which admits the exact decomposition
\begin{equation}
\varepsilon_{\mathrm{gen},\upsilon}(h)
=
\Bigl[
\mathcal{L}_{\mathrm{src},\upsilon}(h)
-
\widehat{\mathcal{L}}_{\mathrm{src},\upsilon}(h)
\Bigr]
+
\Bigl[
\mathcal{L}_{\mathrm{tar},\upsilon}(h)
-
\mathcal{L}_{\mathrm{src},\upsilon}(h)
\Bigr].
\end{equation}
\par
The first bracket measures source-domain estimation error, whereas the second bracket measures source-to-target domain shift under the fixed representation index $\upsilon$.\par
To separate the irreducible target-domain difficulty from the training-induced error, define the optimal target risk within $\mathcal{H}_{\upsilon}$ as
\begin{equation}
\mathcal{L}_{\mathrm{tar},\upsilon}^{\star}
:=
\inf_{h\in\mathcal{H}_{\upsilon}}
\mathcal{L}_{\mathrm{tar},\upsilon}(h),
\end{equation}
and the excess target risk as
\begin{equation}
\mathcal{E}_{\mathrm{tar},\upsilon}(h)
:=
\mathcal{L}_{\mathrm{tar},\upsilon}(h)
-
\mathcal{L}_{\mathrm{tar},\upsilon}^{\star}.
\end{equation}
\par
In particular, for the source-trained hypothesis $\widehat{h}_{\mathrm{src},\upsilon}$,
\begin{equation}
\mathcal{E}_{\mathrm{tar},\upsilon}
\big(
\widehat{h}_{\mathrm{src},\upsilon}
\big)
=
\widehat{\mathcal{L}}_{\mathrm{src},\upsilon}
\big(
\widehat{h}_{\mathrm{src},\upsilon}
\big)
+
\Bigl[
\mathcal{L}_{\mathrm{src},\upsilon}
\big(
\widehat{h}_{\mathrm{src},\upsilon}
\big)
-
\widehat{\mathcal{L}}_{\mathrm{src},\upsilon}
\big(
\widehat{h}_{\mathrm{src},\upsilon}
\big)
\Bigr]
+
\Bigl[
\mathcal{L}_{\mathrm{tar},\upsilon}
\big(
\widehat{h}_{\mathrm{src},\upsilon}
\big)
-
\mathcal{L}_{\mathrm{src},\upsilon}
\big(
\widehat{h}_{\mathrm{src},\upsilon}
\big)
\Bigr]
-
\mathcal{L}_{\mathrm{tar},\upsilon}^{\star},
\end{equation}
so that the later theory can control the source empirical fit term, the estimation term, the structured domain-shift term, and the target optimal term within one unified expression. Although the present formulation is written for a single source domain, the same definitions extend directly to the multi-source case by replacing $\widehat{\mathcal{L}}_{\mathrm{src},\upsilon}$ with a weighted sum of source empirical risks, which is the setting treated later in Subsection III-D.\par

\section{Generalization Error Bounds under Structured TWR Domain Shifts}

\subsection{Unified Source-to-Target Generalization Bound}
Fix one representation index $\upsilon\in\mathcal{V}$. The bounded-weight hypothesis class $\mathcal{H}_{\upsilon}$, the source-domain and target-domain laws $\mathbb{P}_{\mathrm{src},\upsilon}$ and $\mathbb{P}_{\mathrm{tar},\upsilon}$, the empirical risk $\widehat{\mathcal{L}}_{\mathrm{src},\upsilon}$, the population risks $\mathcal{L}_{\mathrm{src},\upsilon}$ and $\mathcal{L}_{\mathrm{tar},\upsilon}$, and the source empirical risk minimizer $\widehat{h}_{\mathrm{src},\upsilon}$ retain the definitions introduced in Subsections II-D and II-E. The present derivation is aligned with fast-rate and excess-risk analyses \cite{FastRateIT2025,MinimumExcessRisk2022,MinimaxExcessRisk2023}, transfer and meta-learning bounds \cite{TransferMetaLearning2022,TransferLearningIT2024}, and bounded-network information-theoretic generalization results \cite{LossSurfaceIR2023,GibbsGeneralization2024,InfoGenDNN2025}. For compactness, define the loss induced by $h\in\mathcal{H}_{\upsilon}$ as
\begin{equation}
\ell_{h}(\boldsymbol{\varphi},a)
:=
\ell_{\mathrm{ce}}\big(h(\boldsymbol{\varphi}),a\big),
\end{equation}
where $(\boldsymbol{\varphi},a)\in\mathfrak{X}_{\upsilon}\times\mathcal{A}$. Recall from Subsection II-E that
\begin{equation}
0
\leq
\ell_{h}(\boldsymbol{\varphi},a)
\leq
B_{\ell,\upsilon},
\end{equation}
and
\begin{equation}
\big|
\ell_{h}(\boldsymbol{\varphi},a)
-
\ell_{h}(\boldsymbol{\varphi}',a)
\big|
\leq
2J_{\upsilon}
\left\|
\boldsymbol{\varphi}
-
\boldsymbol{\varphi}'
\right\|_{2},
\end{equation}
where $B_{\ell,\upsilon}=\log(1+(C-1)e^{2B_{s,\upsilon}})$ is the cross-entropy envelope constant and $J_{\upsilon}>0$ is the network Lipschitz constant. Hence the classwise source-to-target analysis can be carried out directly on the feature-label product space $\mathfrak{X}_{\upsilon}\times\mathcal{A}$.\par
Let
\begin{equation}
\mathcal{F}_{\upsilon}
:=
\left\{
(\boldsymbol{\varphi},a)\mapsto \ell_{h}(\boldsymbol{\varphi},a)
:
h\in\mathcal{H}_{\upsilon}
\right\},
\end{equation}
denote the loss class induced by $\mathcal{H}_{\upsilon}$. For the source sample
\begin{equation}
\mathcal{S}_{\mathrm{src},\upsilon}
=
\left\{
\left(
\boldsymbol{\varphi}_{i}^{\mathrm{src}},
a_{i}^{\mathrm{src}}
\right)
:
i=1,\ldots,N_{\mathrm{src}}
\right\},
\end{equation}
whose elements are independently and identically distributed according to $\mathbb{P}_{\mathrm{src},\upsilon}$, the source Rademacher complexity is defined by
\begin{equation}
\mathfrak{R}_{N_{\mathrm{src}}}(\mathcal{F}_{\upsilon})
:=
\mathbb{E}_{\mathcal{S}_{\mathrm{src},\upsilon},\boldsymbol{\varepsilon}}
\left[
\sup_{h\in\mathcal{H}_{\upsilon}}
\frac{1}{N_{\mathrm{src}}}
\sum_{i=1}^{N_{\mathrm{src}}}
\varepsilon_{i}
\ell_{h}
\big(
\boldsymbol{\varphi}_{i}^{\mathrm{src}},
a_{i}^{\mathrm{src}}
\big)
\right],
\end{equation}
where $\boldsymbol{\varepsilon}=[\varepsilon_{1},\ldots,\varepsilon_{N_{\mathrm{src}}}]^{\top}$ is a vector of independent Rademacher variables satisfying $\mathbb{P}(\varepsilon_{i}=1)=\mathbb{P}(\varepsilon_{i}=-1)=1/2$. To isolate the irreducible mismatch between the source and target domains within the restricted hypothesis class, define
\begin{equation}
\lambda_{\upsilon}^{\star}
:=
\inf_{h\in\mathcal{H}_{\upsilon}}
\left(
\mathcal{L}_{\mathrm{src},\upsilon}(h)
+
\mathcal{L}_{\mathrm{tar},\upsilon}(h)
\right),
\end{equation}
where $\lambda_{\upsilon}^{\star}\geq 0$ is the joint source-target approximation term.\par
The source-to-target discrepancy is measured through pairwise loss oscillations. For arbitrary probability measures $\mathbb{P}$ and $\mathbb{Q}$ on $\mathfrak{X}_{\upsilon}\times\mathcal{A}$, define
\begin{equation}
\operatorname{disc}_{\upsilon}^{\mathrm{pair}}(\mathbb{P},\mathbb{Q})
:=
\sup_{h,h'\in\mathcal{H}_{\upsilon}}
\left|
\mathbb{E}_{(\boldsymbol{\varphi},a)\sim\mathbb{P}}
\left[
\left|
\ell_{h}(\boldsymbol{\varphi},a)
-
\ell_{h'}(\boldsymbol{\varphi},a)
\right|
\right]
-
\mathbb{E}_{(\boldsymbol{\varphi},a)\sim\mathbb{Q}}
\left[
\left|
\ell_{h}(\boldsymbol{\varphi},a)
-
\ell_{h'}(\boldsymbol{\varphi},a)
\right|
\right]
\right|.
\end{equation}
\par
To connect this discrepancy to the structured TWR geometry, introduce the transport cost
\begin{equation}
\mathfrak{c}_{\upsilon}
\big(
(\boldsymbol{\varphi},a),
(\boldsymbol{\varphi}',a')
\big)
:=
4J_{\upsilon}
\left\|
\boldsymbol{\varphi}
-
\boldsymbol{\varphi}'
\right\|_{2}
+
2B_{\ell,\upsilon}\mathbf{1}\{a\neq a'\},
\end{equation}
where $\mathbf{1}\{\cdot\}$ denotes the indicator function. The corresponding $1$-Wasserstein distance is
\begin{equation}
W_{1,\mathfrak{c}_{\upsilon}}(\mathbb{P},\mathbb{Q})
:=
\inf_{\pi\in\Pi(\mathbb{P},\mathbb{Q})}
\mathbb{E}_{\big((\boldsymbol{\varphi},a),(\boldsymbol{\varphi}',a')\big)\sim\pi}
\left[
\mathfrak{c}_{\upsilon}
\big(
(\boldsymbol{\varphi},a),
(\boldsymbol{\varphi}',a')
\big)
\right],
\end{equation}
where $\Pi(\mathbb{P},\mathbb{Q})$ denotes the set of all couplings whose first marginal is $\mathbb{P}$ and whose second marginal is $\mathbb{Q}$. The next lemma shows that the loss discrepancy is dominated by this transport distance.\par
\begin{lemma}\label{lem:pairwise_transport}
For every $h,h'\in\mathcal{H}_{\upsilon}$, the pairwise-loss function
\begin{equation}
\Delta_{h,h'}(\boldsymbol{\varphi},a)
:=
\left|
\ell_{h}(\boldsymbol{\varphi},a)
-
\ell_{h'}(\boldsymbol{\varphi},a)
\right|,
\end{equation}
is $1$-Lipschitz with respect to the cost $\mathfrak{c}_{\upsilon}$ on $\mathfrak{X}_{\upsilon}\times\mathcal{A}$, namely
\begin{equation}
\big|
\Delta_{h,h'}(\boldsymbol{\varphi},a)
-
\Delta_{h,h'}(\boldsymbol{\varphi}',a')
\big|
\leq
\mathfrak{c}_{\upsilon}
\big(
(\boldsymbol{\varphi},a),
(\boldsymbol{\varphi}',a')
\big).
\end{equation}
\par
Consequently,
\begin{equation}
\operatorname{disc}_{\upsilon}^{\mathrm{pair}}
\big(
\mathbb{P}_{\mathrm{src},\upsilon},
\mathbb{P}_{\mathrm{tar},\upsilon}
\big)
\leq
W_{1,\mathfrak{c}_{\upsilon}}
\big(
\mathbb{P}_{\mathrm{src},\upsilon},
\mathbb{P}_{\mathrm{tar},\upsilon}
\big).
\end{equation}
\end{lemma}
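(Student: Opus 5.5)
The plan is a two-case Lipschitz estimate followed by a coupling argument. For the Lipschitz claim, I fix $h,h'\in\mathcal{H}_{\upsilon}$ and two points $(\boldsymbol{\varphi},a),(\boldsymbol{\varphi}',a')\in\mathfrak{X}_{\upsilon}\times\mathcal{A}$, and split according to whether $a=a'$.

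\emph{Case $a=a'$.} The reverse triangle inequality $\big||x|-|y|\big|\leq|x-y|$ gives
\begin{equation*}
\big|\Delta_{h,h'}(\boldsymbol{\varphi},a)-\Delta_{h,h'}(\boldsymbol{\varphi}',a)\big|
\leq
\big|\ell_{h}(\boldsymbol{\varphi},a)-\ell_{h}(\boldsymbol{\varphi}',a)\big|
+\big|\ell_{h'}(\boldsymbol{\varphi},a)-\ell_{h'}(\boldsymbol{\varphi}',a)\big|.
\end{equation*}
Each term is at most $2J_{\upsilon}\|\boldsymbol{\varphi}-\boldsymbol{\varphi}'\|_{2}$ by the loss Lipschitz bound recalled at the start of this subsection. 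This uses the same $J_{\upsilon}$ for $h$ and $h'$, since $J_{\upsilon}$ depends only on the layerwise bounds defining $\mathfrak{P}_{\upsilon}$. The sum is therefore at most $4J_{\upsilon}\|\boldsymbol{\varphi}-\boldsymbol{\varphi}'\|_{2}$, which equals $\mathfrak{c}_{\upsilon}$ in this case.

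\emph{Case $a\neq a'$.} Here I use only the envelope $0\leq\ell_{h}\leq B_{\ell,\upsilon}$, which gives $0\leq\Delta_{h,h'}\leq B_{\ell,\upsilon}$ at every point. Hence the difference is at most $B_{\ell,\upsilon}\leq 2B_{\ell,\upsilon}\leq\mathfrak{c}_{\upsilon}$, because the transport term is nonnegative. If a sharper route is preferred, inserting $(\boldsymbol{\varphi}',a)$ and applying the first case plus the envelope at a fixed feature gives the same conclusion. This completes the pointwise $1$-Lipschitz claim.

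For the discrepancy bound, I fix $h,h'$ and an arbitrary coupling $\pi\in\Pi(\mathbb{P}_{\mathrm{src},\upsilon},\mathbb{P}_{\mathrm{tar},\upsilon})$. Since $\Delta_{h,h'}$ is bounded and measurable on the support $\mathfrak{X}_{\upsilon}\times\mathcal{A}$ (both laws are supported there), the difference of expectations can be written as $\mathbb{E}_{\pi}\big[\Delta_{h,h'}(\boldsymbol{\varphi},a)-\Delta_{h,h'}(\boldsymbol{\varphi}',a')\big]$. Its absolute value is at most $\mathbb{E}_{\pi}[\mathfrak{c}_{\upsilon}]$ by the pointwise bound. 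Taking the infimum over $\pi$ and then the supremum over $h,h'$ yields the claimed inequality. This direction of Kantorovich--Rubinstein needs no duality theorem, only the coupling argument.

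The main point to check is uniformity: the Lipschitz constant must hold with the class-wide $J_{\upsilon}$ for every admissible pair. This is guaranteed because $J_{\upsilon}$ depends only on the spectral bounds $S_{q,\upsilon}$ and activation constants $\Lambda_{q}$. Measurability of $\Delta_{h,h'}$ is immediate from the continuity of the network and the loss.
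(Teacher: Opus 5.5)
Your proof is correct and follows essentially the same route as the paper: the reverse triangle inequality, the class-uniform $2J_{\upsilon}$ feature-Lipschitz bound and the envelope $0\le\ell_{h}\le B_{\ell,\upsilon}$, then an arbitrary coupling followed by an infimum over couplings and a supremum over $h,h'$. The only difference is cosmetic. The paper avoids your case split by routing each loss through the intermediate point $(\boldsymbol{\varphi},a')$, which yields exactly the $2B_{\ell,\upsilon}\mathbf{1}\{a\neq a'\}$ term. Your direct use of $0\le\Delta_{h,h'}\le B_{\ell,\upsilon}$ when $a\neq a'$ shows that this label term in $\mathfrak{c}_{\upsilon}$ in fact has slack.
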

\begin{proof}
Let
\begin{equation}
(\boldsymbol{\varphi},a),(\boldsymbol{\varphi}',a')
\in
\mathfrak{X}_{\upsilon}\times\mathcal{A},
\end{equation}
and introduce an intermediate comparison while the feature arguments are held fixed. By the reverse triangle inequality,
\begin{equation}
\begin{aligned}
\big|
\Delta_{h,h'}(\boldsymbol{\varphi},a)
-
\Delta_{h,h'}(\boldsymbol{\varphi}',a')
\big|
&\leq
\big|
\ell_{h}(\boldsymbol{\varphi},a)
-
\ell_{h}(\boldsymbol{\varphi}',a')
\big|
+
\big|
\ell_{h'}(\boldsymbol{\varphi},a)
-
\ell_{h'}(\boldsymbol{\varphi}',a')
\big|.
\end{aligned}
\end{equation}
For the first term, one has
\begin{equation}
\begin{aligned}
\big|
\ell_{h}(\boldsymbol{\varphi},a)
-
\ell_{h}(\boldsymbol{\varphi}',a')
\big|
&\leq
\big|
\ell_{h}(\boldsymbol{\varphi},a)
-
\ell_{h}(\boldsymbol{\varphi},a')
\big|
+
\big|
\ell_{h}(\boldsymbol{\varphi},a')
-
\ell_{h}(\boldsymbol{\varphi}',a')
\big|\\
&\leq
B_{\ell,\upsilon}\mathbf{1}\{a\neq a'\}
+
2J_{\upsilon}
\left\|
\boldsymbol{\varphi}
-
\boldsymbol{\varphi}'
\right\|_{2},
\end{aligned}
\end{equation}
where the first addend follows from $0\leq\ell_{h}\leq B_{\ell,\upsilon}$ and vanishes when $a=a'$, whereas the second addend follows from the feature-Lipschitz property of the loss. The same argument yields
\begin{equation}
\big|
\ell_{h'}(\boldsymbol{\varphi},a)
-
\ell_{h'}(\boldsymbol{\varphi}',a')
\big|
\leq
B_{\ell,\upsilon}\mathbf{1}\{a\neq a'\}
+
2J_{\upsilon}
\left\|
\boldsymbol{\varphi}
-
\boldsymbol{\varphi}'
\right\|_{2}.
\end{equation}
\par
Therefore,
\begin{equation}
\big|
\Delta_{h,h'}(\boldsymbol{\varphi},a)
-
\Delta_{h,h'}(\boldsymbol{\varphi}',a')
\big|
\leq
4J_{\upsilon}
\left\|
\boldsymbol{\varphi}
-
\boldsymbol{\varphi}'
\right\|_{2}
+
2B_{\ell,\upsilon}\mathbf{1}\{a\neq a'\}
=
\mathfrak{c}_{\upsilon}
\big(
(\boldsymbol{\varphi},a),
(\boldsymbol{\varphi}',a')
\big).
\end{equation}
\par
Hence, for any coupling $\pi\in\Pi(\mathbb{P}_{\mathrm{src},\upsilon},\mathbb{P}_{\mathrm{tar},\upsilon})$,
\begin{equation}
\begin{aligned}
\left|
\mathbb{E}_{\mathbb{P}_{\mathrm{src},\upsilon}}
\big[
\Delta_{h,h'}(\boldsymbol{\varphi},a)
\big]
-
\mathbb{E}_{\mathbb{P}_{\mathrm{tar},\upsilon}}
\big[
\Delta_{h,h'}(\boldsymbol{\varphi},a)
\big]
\right|
&=
\left|
\mathbb{E}_{\pi}
\big[
\Delta_{h,h'}(\boldsymbol{\varphi},a)
-
\Delta_{h,h'}(\boldsymbol{\varphi}',a')
\big]
\right|\\
&\leq
\mathbb{E}_{\pi}
\left[
\left|
\Delta_{h,h'}(\boldsymbol{\varphi},a)
-
\Delta_{h,h'}(\boldsymbol{\varphi}',a')
\right|
\right]\\
&\leq
\mathbb{E}_{\pi}
\left[
\mathfrak{c}_{\upsilon}
\big(
(\boldsymbol{\varphi},a),
(\boldsymbol{\varphi}',a')
\big)
\right].
\end{aligned}
\end{equation}
\par
Taking the infimum over all such couplings yields, for every $h,h'\in\mathcal{H}_{\upsilon}$,
\begin{equation}
\left|
\mathbb{E}_{\mathbb{P}_{\mathrm{src},\upsilon}}
\big[
\Delta_{h,h'}(\boldsymbol{\varphi},a)
\big]
-
\mathbb{E}_{\mathbb{P}_{\mathrm{tar},\upsilon}}
\big[
\Delta_{h,h'}(\boldsymbol{\varphi},a)
\big]
\right|
\leq
W_{1,\mathfrak{c}_{\upsilon}}
\big(
\mathbb{P}_{\mathrm{src},\upsilon},
\mathbb{P}_{\mathrm{tar},\upsilon}
\big).
\end{equation}
\par
Taking the supremum over $h,h'\in\mathcal{H}_{\upsilon}$ proves the desired discrepancy bound.\par
\end{proof}
The bounded-weight network also yields a norm-controlled complexity specialization. Let $Q_{\mathrm{net}}$ denote the number of affine layers introduced in Subsection II-D. By the $2$-Lipschitz continuity of $\ell_{\mathrm{ce}}(\mathbf{c},a)$ with respect to the score vector $\mathbf{c}$, by vector contraction, and by standard norm-based Rademacher estimates for bounded-input feedforward networks under the spectral/Frobenius constraints in $\mathfrak{P}_{\upsilon}$, there exists a finite deterministic constant $\Psi_{\upsilon}^{\mathrm{cmp}}>0$ such that
\begin{equation}
\mathfrak{R}_{N_{\mathrm{src}}}(\mathcal{F}_{\upsilon})
\leq
\frac{\Psi_{\upsilon}^{\mathrm{cmp}}}{\sqrt{N_{\mathrm{src}}}},
\label{eq:complexity_specialization}
\end{equation}
where $\Psi_{\upsilon}^{\mathrm{cmp}}$ depends only on the class number $C$, the input-radius constant $B_{\varphi,\upsilon}$, the activation Lipschitz constants $\{\Lambda_{r}\}_{r=1}^{Q_{\mathrm{net}}-1}$, and the layerwise norm bounds $\{S_{r,\upsilon},F_{r,\upsilon}\}_{r=1}^{Q_{\mathrm{net}}}$. The next theorem combines empirical fit, complexity, structured domain shift, and joint approximation into one source-to-target target risk bound.\par
\begin{theorem}\label{thm:unified_bound}
Let $\delta\in(0,1)$. Then, with probability at least $1-\delta$ over the draw of $\mathcal{S}_{\mathrm{src},\upsilon}$, the inequality
\begin{equation}
\mathcal{L}_{\mathrm{tar},\upsilon}(h)
\leq
\widehat{\mathcal{L}}_{\mathrm{src},\upsilon}(h)
+
2\mathfrak{R}_{N_{\mathrm{src}}}(\mathcal{F}_{\upsilon})
+
B_{\ell,\upsilon}\sqrt{\frac{\log(1/\delta)}{2N_{\mathrm{src}}}}
+
\operatorname{disc}_{\upsilon}^{\mathrm{pair}}
\big(
\mathbb{P}_{\mathrm{src},\upsilon},
\mathbb{P}_{\mathrm{tar},\upsilon}
\big)
+
\lambda_{\upsilon}^{\star},
\label{eq:unified_target_risk_bound}
\end{equation}
holds simultaneously for all $h\in\mathcal{H}_{\upsilon}$. In particular,
\begin{equation}
\mathcal{L}_{\mathrm{tar},\upsilon}(h)
\leq
\widehat{\mathcal{L}}_{\mathrm{src},\upsilon}(h)
+
\frac{2\Psi_{\upsilon}^{\mathrm{cmp}}}{\sqrt{N_{\mathrm{src}}}}
+
B_{\ell,\upsilon}\sqrt{\frac{\log(1/\delta)}{2N_{\mathrm{src}}}}
+
\operatorname{disc}_{\upsilon}^{\mathrm{pair}}
\big(
\mathbb{P}_{\mathrm{src},\upsilon},
\mathbb{P}_{\mathrm{tar},\upsilon}
\big)
+
\lambda_{\upsilon}^{\star},
\label{eq:unified_target_risk_bound_specialized}
\end{equation}
where \eqref{eq:unified_target_risk_bound_specialized} follows by substituting \eqref{eq:complexity_specialization} into \eqref{eq:unified_target_risk_bound}.\par
\end{theorem}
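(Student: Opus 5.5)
The proof combines two standard ingredients: a uniform deviation bound on the source side, and a Ben-David-type triangle argument on the domain-shift side. The domain-shift argument uses the pairwise discrepancy $\operatorname{disc}_{\upsilon}^{\mathrm{pair}}$ and the joint approximation term $\lambda_{\upsilon}^{\star}$. The exact decomposition of $\varepsilon_{\mathrm{gen},\upsilon}(h)$ in Subsection II-E splits $\mathcal{L}_{\mathrm{tar},\upsilon}(h)-\widehat{\mathcal{L}}_{\mathrm{src},\upsilon}(h)$ into an estimation bracket and a shift bracket, and I will bound each uniformly over $h\in\mathcal{H}_{\upsilon}$.

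\emph{Step 1 (source estimation).} Every loss in $\mathcal{F}_{\upsilon}$ takes values in $[0,B_{\ell,\upsilon}]$ on $\mathfrak{X}_{\upsilon}\times\mathcal{A}$, and the source pairs are i.i.d. Hence $\Phi(\mathcal{S}_{\mathrm{src},\upsilon})=\sup_{h}\bigl(\mathcal{L}_{\mathrm{src},\upsilon}(h)-\widehat{\mathcal{L}}_{\mathrm{src},\upsilon}(h)\bigr)$ has bounded differences $B_{\ell,\upsilon}/N_{\mathrm{src}}$. McDiarmid's inequality then gives, with probability at least $1-\delta$,
\begin{equation*}
\Phi\leq\mathbb{E}\Phi+B_{\ell,\upsilon}\sqrt{\log(1/\delta)/(2N_{\mathrm{src}})}.
\end{equation*}
The usual ghost-sample symmetrization bounds $\mathbb{E}\Phi\leq 2\mathfrak{R}_{N_{\mathrm{src}}}(\mathcal{F}_{\upsilon})$. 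This yields the first three terms of \eqref{eq:unified_target_risk_bound} simultaneously for all $h$. Measurability of the supremum is not an issue, since $\mathfrak{P}_{\upsilon}$ is compact and the loss is continuous in $\Theta_{\upsilon}$, so the supremum can be taken over a countable dense subset.

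\emph{Step 2 (domain shift, deterministic).} Fix any $h^{\ast}\in\mathcal{H}_{\upsilon}$. Since $\ell_{h}\le\ell_{h^{\ast}}+|\ell_{h}-\ell_{h^{\ast}}|$ pointwise,
\begin{equation*}
\mathcal{L}_{\mathrm{tar},\upsilon}(h)\leq\mathcal{L}_{\mathrm{tar},\upsilon}(h^{\ast})+\mathbb{E}_{\mathbb{P}_{\mathrm{tar},\upsilon}}\Delta_{h,h^{\ast}}.
\end{equation*}
Next, replace the target expectation of $\Delta_{h,h^{\ast}}$ by its source expectation. The cost of this replacement is at most $\operatorname{disc}_{\upsilon}^{\mathrm{pair}}(\mathbb{P}_{\mathrm{src},\upsilon},\mathbb{P}_{\mathrm{tar},\upsilon})$, by the definition of the discrepancy as a supremum over pairs. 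Then apply the triangle inequality $\Delta_{h,h^{\ast}}\le\ell_{h}+\ell_{h^{\ast}}$ on the source side, using nonnegativity of the loss. Together these give
\begin{equation*}
\mathcal{L}_{\mathrm{tar},\upsilon}(h)\leq\mathcal{L}_{\mathrm{src},\upsilon}(h)+\operatorname{disc}^{\mathrm{pair}}_{\upsilon}+\mathcal{L}_{\mathrm{src},\upsilon}(h^{\ast})+\mathcal{L}_{\mathrm{tar},\upsilon}(h^{\ast}).
\end{equation*}
Taking the infimum over $h^{\ast}$ produces $\lambda_{\upsilon}^{\star}$. No attainment is needed: I take an $\epsilon$-minimizer and let $\epsilon\downarrow0$.

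\emph{Step 3 (combination).} On the event of Step 1, substitute the bound $\mathcal{L}_{\mathrm{src},\upsilon}(h)\le\widehat{\mathcal{L}}_{\mathrm{src},\upsilon}(h)+2\mathfrak{R}_{N_{\mathrm{src}}}(\mathcal{F}_{\upsilon})+B_{\ell,\upsilon}\sqrt{\log(1/\delta)/(2N_{\mathrm{src}})}$ into Step 2. This gives \eqref{eq:unified_target_risk_bound} for all $h$ at once. The specialized form \eqref{eq:unified_target_risk_bound_specialized} is immediate from \eqref{eq:complexity_specialization}. Lemma~\ref{lem:pairwise_transport} is not needed here, but it could be appended to replace the discrepancy by $W_{1,\mathfrak{c}_{\upsilon}}$.

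The main obstacle I expect is the bookkeeping in Step 2, namely getting the constant in front of the discrepancy to be exactly one. The triangle step must be done via the pointwise inequality $\ell_{h}\le\ell_{h^{\ast}}+\Delta_{h,h^{\ast}}$ on the target side, and not symmetrically on both sides. If it were done symmetrically, the term $\mathcal{L}_{\mathrm{src},\upsilon}(h)$ would be replaced by an expectation of $\Delta$, and the bound would weaken.
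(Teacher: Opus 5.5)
Your proposal is correct and follows essentially the same route as the paper's proof. Both use McDiarmid with bounded differences $B_{\ell,\upsilon}/N_{\mathrm{src}}$ and ghost-sample symmetrization for the uniform source deviation, then the asymmetric triangle step through an $\epsilon$-minimizer of $\lambda_{\upsilon}^{\star}$ that pays $\operatorname{disc}_{\upsilon}^{\mathrm{pair}}$ exactly once, and finally intersect the two bounds and let $\epsilon\downarrow 0$. Your measurability remark via a countable dense subset of the compact set $\mathfrak{P}_{\upsilon}$ is a small rigor point that the paper leaves implicit.
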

\begin{proof}
Define the source uniform deviation functional
\begin{equation}
\Xi(\mathcal{S}_{\mathrm{src},\upsilon})
:=
\sup_{h\in\mathcal{H}_{\upsilon}}
\left(
\mathcal{L}_{\mathrm{src},\upsilon}(h)
-
\widehat{\mathcal{L}}_{\mathrm{src},\upsilon}(h)
\right).
\end{equation}
\par
If the $i$th sample in $\mathcal{S}_{\mathrm{src},\upsilon}$ is replaced by an independent copy, the empirical average associated with every $h\in\mathcal{H}_{\upsilon}$ changes by at most $B_{\ell,\upsilon}/N_{\mathrm{src}}$, because the loss is bounded between $0$ and $B_{\ell,\upsilon}$. Therefore, McDiarmid's inequality gives
\begin{equation}
\Xi(\mathcal{S}_{\mathrm{src},\upsilon})
\leq
\mathbb{E}\big[\Xi(\mathcal{S}_{\mathrm{src},\upsilon})\big]
+
B_{\ell,\upsilon}
\sqrt{
\frac{\log(1/\delta)}{2N_{\mathrm{src}}}
},
\end{equation}
with probability at least $1-\delta$. It remains to bound the expectation. Let $\widetilde{\mathcal{S}}_{\mathrm{src},\upsilon}=\{(\widetilde{\boldsymbol{\varphi}}_{i}^{\mathrm{src}},\widetilde{a}_{i}^{\mathrm{src}})\}_{i=1}^{N_{\mathrm{src}}}$ be an independent ghost sample drawn from the same source law, and let the inner expectation in the next display be taken with respect to $\widetilde{\mathcal{S}}_{\mathrm{src},\upsilon}$. Then
\begin{equation}
\begin{aligned}
\mathbb{E}\big[\Xi(\mathcal{S}_{\mathrm{src},\upsilon})\big]
&=
\mathbb{E}
\left[
\sup_{h\in\mathcal{H}_{\upsilon}}
\frac{1}{N_{\mathrm{src}}}
\sum_{i=1}^{N_{\mathrm{src}}}
\left(
\mathbb{E}\big[\ell_{h}(\widetilde{\boldsymbol{\varphi}}_{i}^{\mathrm{src}},\widetilde{a}_{i}^{\mathrm{src}})\big]
-
\ell_{h}(\boldsymbol{\varphi}_{i}^{\mathrm{src}},a_{i}^{\mathrm{src}})
\right)
\right]\\
&\leq
\mathbb{E}
\left[
\sup_{h\in\mathcal{H}_{\upsilon}}
\frac{1}{N_{\mathrm{src}}}
\sum_{i=1}^{N_{\mathrm{src}}}
\left(
\ell_{h}(\widetilde{\boldsymbol{\varphi}}_{i}^{\mathrm{src}},\widetilde{a}_{i}^{\mathrm{src}})
-
\ell_{h}(\boldsymbol{\varphi}_{i}^{\mathrm{src}},a_{i}^{\mathrm{src}})
\right)
\right]\\
&=
\mathbb{E}
\left[
\sup_{h\in\mathcal{H}_{\upsilon}}
\frac{1}{N_{\mathrm{src}}}
\sum_{i=1}^{N_{\mathrm{src}}}
\varepsilon_{i}
\left(
\ell_{h}(\widetilde{\boldsymbol{\varphi}}_{i}^{\mathrm{src}},\widetilde{a}_{i}^{\mathrm{src}})
-
\ell_{h}(\boldsymbol{\varphi}_{i}^{\mathrm{src}},a_{i}^{\mathrm{src}})
\right)
\right]\\
&\leq
\mathbb{E}
\left[
\sup_{h\in\mathcal{H}_{\upsilon}}
\frac{1}{N_{\mathrm{src}}}
\sum_{i=1}^{N_{\mathrm{src}}}
\varepsilon_{i}
\ell_{h}(\widetilde{\boldsymbol{\varphi}}_{i}^{\mathrm{src}},\widetilde{a}_{i}^{\mathrm{src}})
\right]
+
\mathbb{E}
\left[
\sup_{h\in\mathcal{H}_{\upsilon}}
\frac{1}{N_{\mathrm{src}}}
\sum_{i=1}^{N_{\mathrm{src}}}
\left(
-\varepsilon_{i}
\right)
\ell_{h}(\boldsymbol{\varphi}_{i}^{\mathrm{src}},a_{i}^{\mathrm{src}})
\right]\\
&=
2\mathfrak{R}_{N_{\mathrm{src}}}(\mathcal{F}_{\upsilon}),
\end{aligned}
\end{equation}
where the third line uses the symmetry of the Rademacher variables and the last line uses the fact that $\varepsilon_{i}$ and $-\varepsilon_{i}$ have the same distribution. Consequently, with probability at least $1-\delta$,
\begin{equation}
\sup_{h\in\mathcal{H}_{\upsilon}}
\left(
\mathcal{L}_{\mathrm{src},\upsilon}(h)
-
\widehat{\mathcal{L}}_{\mathrm{src},\upsilon}(h)
\right)
\leq
2\mathfrak{R}_{N_{\mathrm{src}}}(\mathcal{F}_{\upsilon})
+
B_{\ell,\upsilon}
\sqrt{
\frac{\log(1/\delta)}{2N_{\mathrm{src}}}
}.
\end{equation}
\par
Now fix $\epsilon>0$. By the definition of $\lambda_{\upsilon}^{\star}$, there exists $h_{\epsilon}\in\mathcal{H}_{\upsilon}$ such that
\begin{equation}
\mathcal{L}_{\mathrm{src},\upsilon}(h_{\epsilon})
+
\mathcal{L}_{\mathrm{tar},\upsilon}(h_{\epsilon})
\leq
\lambda_{\upsilon}^{\star}
+
\epsilon.
\end{equation}
\par
For an arbitrary $h\in\mathcal{H}_{\upsilon}$,
\begin{equation}
\begin{aligned}
\mathcal{L}_{\mathrm{tar},\upsilon}(h)
&=
\mathbb{E}_{\mathbb{P}_{\mathrm{tar},\upsilon}}
\big[
\ell_{h}(\boldsymbol{\varphi},a)
\big]
\leq
\mathbb{E}_{\mathbb{P}_{\mathrm{tar},\upsilon}}
\big[
\ell_{h_{\epsilon}}(\boldsymbol{\varphi},a)
\big]
+
\mathbb{E}_{\mathbb{P}_{\mathrm{tar},\upsilon}}
\big[
\big|
\ell_{h}(\boldsymbol{\varphi},a)
-
\ell_{h_{\epsilon}}(\boldsymbol{\varphi},a)
\big|
\big]\\
&\leq
\mathcal{L}_{\mathrm{tar},\upsilon}(h_{\epsilon})
+
\mathbb{E}_{\mathbb{P}_{\mathrm{src},\upsilon}}
\big[
\big|
\ell_{h}(\boldsymbol{\varphi},a)
-
\ell_{h_{\epsilon}}(\boldsymbol{\varphi},a)
\big|
\big]
+
\operatorname{disc}_{\upsilon}^{\mathrm{pair}}
\big(
\mathbb{P}_{\mathrm{src},\upsilon},
\mathbb{P}_{\mathrm{tar},\upsilon}
\big)\\
&\leq
\mathcal{L}_{\mathrm{tar},\upsilon}(h_{\epsilon})
+
\mathcal{L}_{\mathrm{src},\upsilon}(h)
+
\mathcal{L}_{\mathrm{src},\upsilon}(h_{\epsilon})
+
\operatorname{disc}_{\upsilon}^{\mathrm{pair}}
\big(
\mathbb{P}_{\mathrm{src},\upsilon},
\mathbb{P}_{\mathrm{tar},\upsilon}
\big),
\end{aligned}
\end{equation}
where the second inequality follows from the definition of $\operatorname{disc}_{\upsilon}^{\mathrm{pair}}$ and the third inequality uses $|u-v|\leq u+v$ for nonnegative numbers $u$ and $v$. Therefore,
\begin{equation}
\mathcal{L}_{\mathrm{tar},\upsilon}(h)
\leq
\mathcal{L}_{\mathrm{src},\upsilon}(h)
+
\operatorname{disc}_{\upsilon}^{\mathrm{pair}}
\big(
\mathbb{P}_{\mathrm{src},\upsilon},
\mathbb{P}_{\mathrm{tar},\upsilon}
\big)
+
\lambda_{\upsilon}^{\star}
+
\epsilon.
\end{equation}
\par
Intersecting this deterministic inequality with the previously derived uniform source-deviation event yields, simultaneously for all $h\in\mathcal{H}_{\upsilon}$,
\begin{equation}
\mathcal{L}_{\mathrm{tar},\upsilon}(h)
\leq
\widehat{\mathcal{L}}_{\mathrm{src},\upsilon}(h)
+
2\mathfrak{R}_{N_{\mathrm{src}}}(\mathcal{F}_{\upsilon})
+
B_{\ell,\upsilon}\sqrt{\frac{\log(1/\delta)}{2N_{\mathrm{src}}}}
+
\operatorname{disc}_{\upsilon}^{\mathrm{pair}}
\big(
\mathbb{P}_{\mathrm{src},\upsilon},
\mathbb{P}_{\mathrm{tar},\upsilon}
\big)
+
\lambda_{\upsilon}^{\star}
+
\epsilon.
\end{equation}
\par
Sending $\epsilon\downarrow 0$ proves the first inequality in the theorem. The second inequality follows immediately from $\mathfrak{R}_{N_{\mathrm{src}}}(\mathcal{F}_{\upsilon})\leq\Psi_{\upsilon}^{\mathrm{cmp}}/\sqrt{N_{\mathrm{src}}}$.\par
\end{proof}
\begin{corollary}\label{cor:unified_bound_erm}
Under the same probability event as in Theorem~\ref{thm:unified_bound}, the source empirical risk minimizer satisfies
\begin{equation}
\mathcal{L}_{\mathrm{tar},\upsilon}
\big(
\widehat{h}_{\mathrm{src},\upsilon}
\big)
\leq
\widehat{\mathcal{L}}_{\mathrm{src},\upsilon}
\big(
\widehat{h}_{\mathrm{src},\upsilon}
\big)
+
\frac{2\Psi_{\upsilon}^{\mathrm{cmp}}}{\sqrt{N_{\mathrm{src}}}}
+
B_{\ell,\upsilon}\sqrt{\frac{\log(1/\delta)}{2N_{\mathrm{src}}}}
+
W_{1,\mathfrak{c}_{\upsilon}}
\big(
\mathbb{P}_{\mathrm{src},\upsilon},
\mathbb{P}_{\mathrm{tar},\upsilon}
\big)
+
\lambda_{\upsilon}^{\star}.
\end{equation}
\par
Since $\mathcal{L}_{\mathrm{tar},\upsilon}^{\star}\geq 0$, the same right-hand side also upper-bounds the excess target risk $\mathcal{E}_{\mathrm{tar},\upsilon}(\widehat{h}_{\mathrm{src},\upsilon})$.\par
\end{corollary}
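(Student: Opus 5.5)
The corollary follows by specializing Theorem~\ref{thm:unified_bound} and then applying Lemma~\ref{lem:pairwise_transport}. Work on the probability event of Theorem~\ref{thm:unified_bound}, which has probability at least $1-\delta$ over the draw of $\mathcal{S}_{\mathrm{src},\upsilon}$. On that event, inequality \eqref{eq:unified_target_risk_bound_specialized} holds simultaneously for every $h\in\mathcal{H}_{\upsilon}$.

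\textbf{Step 1 (specialize to the ERM).} The argmin set defining $\widehat{h}_{\mathrm{src},\upsilon}$ is nonempty, as already argued in Subsection II-E, so $\widehat{h}_{\mathrm{src},\upsilon}\in\mathcal{H}_{\upsilon}$. The uniformity over $h$ handles the data dependence of $\widehat{h}_{\mathrm{src},\upsilon}$: it is a random element of $\mathcal{H}_{\upsilon}$, but the bound holds for all members at once on a single event. Setting $h=\widehat{h}_{\mathrm{src},\upsilon}$ in \eqref{eq:unified_target_risk_bound_specialized} therefore gives the stated inequality, except that the discrepancy term $\operatorname{disc}_{\upsilon}^{\mathrm{pair}}(\mathbb{P}_{\mathrm{src},\upsilon},\mathbb{P}_{\mathrm{tar},\upsilon})$ appears in place of $W_{1,\mathfrak{c}_{\upsilon}}$.

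\textbf{Step 2 (replace the discrepancy).} Lemma~\ref{lem:pairwise_transport} gives
\[
\operatorname{disc}_{\upsilon}^{\mathrm{pair}}(\mathbb{P}_{\mathrm{src},\upsilon},\mathbb{P}_{\mathrm{tar},\upsilon})\leq W_{1,\mathfrak{c}_{\upsilon}}(\mathbb{P}_{\mathrm{src},\upsilon},\mathbb{P}_{\mathrm{tar},\upsilon}).
\]
Both laws are supported in $\mathfrak{X}_{\upsilon}\times\mathcal{A}$, which is exactly the setting of the lemma. Substituting this into the bound from Step 1 only enlarges the right-hand side, which yields the displayed inequality of the corollary.

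\textbf{Step 3 (excess risk).} By definition, $\mathcal{E}_{\mathrm{tar},\upsilon}(\widehat{h}_{\mathrm{src},\upsilon})=\mathcal{L}_{\mathrm{tar},\upsilon}(\widehat{h}_{\mathrm{src},\upsilon})-\mathcal{L}^{\star}_{\mathrm{tar},\upsilon}$. The cross-entropy loss is nonnegative, so $\mathcal{L}_{\mathrm{tar},\upsilon}(h)\geq 0$ for every $h$, and hence the infimum satisfies $\mathcal{L}^{\star}_{\mathrm{tar},\upsilon}\geq0$. It follows that $\mathcal{E}_{\mathrm{tar},\upsilon}(\widehat{h}_{\mathrm{src},\upsilon})\leq\mathcal{L}_{\mathrm{tar},\upsilon}(\widehat{h}_{\mathrm{src},\upsilon})$, and the right-hand side from Step 2 also bounds the excess target risk.

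The only step that needs care is Step 1. Plugging a data-dependent hypothesis into a fixed-$h$ bound would be invalid, so the argument relies on Theorem~\ref{thm:unified_bound} holding simultaneously for all $h\in\mathcal{H}_{\upsilon}$. Everything else is direct substitution.
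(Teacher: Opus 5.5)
Your proposal is correct and matches the paper's proof: you apply Theorem~\ref{thm:unified_bound} at $h=\widehat{h}_{\mathrm{src},\upsilon}$, replace $\operatorname{disc}_{\upsilon}^{\mathrm{pair}}$ by $W_{1,\mathfrak{c}_{\upsilon}}$ via Lemma~\ref{lem:pairwise_transport}, and use $\mathcal{L}_{\mathrm{tar},\upsilon}^{\star}\geq 0$ for the excess-risk claim. Your explicit remark that the data-dependent minimizer can be plugged in only because the theorem holds simultaneously over $\mathcal{H}_{\upsilon}$ is a point the paper leaves implicit.
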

\begin{proof}
Theorem~\ref{thm:unified_bound} is applied with $h=\widehat{h}_{\mathrm{src},\upsilon}$.
Lemma~\ref{lem:pairwise_transport} yields
\begin{equation}
\operatorname{disc}_{\upsilon}^{\mathrm{pair}}
\big(
\mathbb{P}_{\mathrm{src},\upsilon},
\mathbb{P}_{\mathrm{tar},\upsilon}
\big)
\leq
W_{1,\mathfrak{c}_{\upsilon}}
\big(
\mathbb{P}_{\mathrm{src},\upsilon},
\mathbb{P}_{\mathrm{tar},\upsilon}
\big).
\end{equation}
\par
The excess-risk claim follows from
\begin{equation}
\mathcal{E}_{\mathrm{tar},\upsilon}
\big(
\widehat{h}_{\mathrm{src},\upsilon}
\big)
=
\mathcal{L}_{\mathrm{tar},\upsilon}
\big(
\widehat{h}_{\mathrm{src},\upsilon}
\big)
-
\mathcal{L}_{\mathrm{tar},\upsilon}^{\star}
\leq
\mathcal{L}_{\mathrm{tar},\upsilon}
\big(
\widehat{h}_{\mathrm{src},\upsilon}
\big).
\end{equation}
\par
\end{proof}
The unified bound therefore separates the target-domain error into four transparent terms: $\widehat{\mathcal{L}}_{\mathrm{src},\upsilon}$ measures source empirical fit, $\Psi_{\upsilon}^{\mathrm{cmp}}/\sqrt{N_{\mathrm{src}}}$ together with the concentration factor measures estimation complexity, $W_{1,\mathfrak{c}_{\upsilon}}(\mathbb{P}_{\mathrm{src},\upsilon},\mathbb{P}_{\mathrm{tar},\upsilon})$ measures structured source-to-target shift on the feature-label space, and $\lambda_{\upsilon}^{\star}$ measures the irreducible joint approximation burden that cannot be removed without enlarging the hypothesis class or changing the representation. This representation-level Wasserstein term will be decomposed in Subsection III-B into cross-person, cross-view, and cross-wall contributions induced by human kinematics, radial-velocity projection, and wall propagation mismatch, respectively.\par

\subsection{Cross-Person, Cross-View, and Cross-Wall Generalization Bound}
The unified source-to-target result in Subsection III-A leaves the structured shift term in the compact Wasserstein form
\begin{equation}
W_{1,\mathfrak{c}_{\upsilon}}
\big(
\mathbb{P}_{\mathrm{src},\upsilon},
\mathbb{P}_{\mathrm{tar},\upsilon}
\big).
\end{equation}\par
The purpose of the present subsection is to open this term along the physically meaningful path ``person $\rightarrow$ view $\rightarrow$ wall'' and to retain, in explicit form, the activity-prior mismatch and the residual clutter/noise discrepancy that do not belong to the three principal physical coordinates. Such a decomposition is consistent with recent domain-generalization analyses \cite{SahliGeneralization2021} and with TWR HAR experiments that exhibit structured source-target mismatch \cite{GEBTWRHARConf2024}. Throughout this subsection, the constants $\mathfrak{c}_{\upsilon}$, $B_{\ell,\upsilon}$, and $J_{\upsilon}$ from Subsection III-A, the row-wise clutter-suppression operator $\mathcal{P}_{\mathrm{cs}}$ and the representation Lipschitz constant $L_{\upsilon}$ from Subsection II-C, and the complexity terms $\Psi_{\upsilon}^{\mathrm{cmp}}$ and $\lambda_{\upsilon}^{\star}$ from Subsection III-A retain their earlier definitions and are not redefined.\par
Define the intermediate domain descriptors
\begin{equation}
\mathfrak{d}_{0}
=
\left(
p_{\mathrm{src}},
\beta_{\mathrm{src}},
\boldsymbol{\xi}_{\mathrm{src}}
\right),\quad
\mathfrak{d}_{1}
=
\left(
p_{\mathrm{tar}},
\beta_{\mathrm{src}},
\boldsymbol{\xi}_{\mathrm{src}}
\right),\quad
\mathfrak{d}_{2}
=
\left(
p_{\mathrm{tar}},
\beta_{\mathrm{tar}},
\boldsymbol{\xi}_{\mathrm{src}}
\right),\quad
\mathfrak{d}_{3}
=
\left(
p_{\mathrm{tar}},
\beta_{\mathrm{tar}},
\boldsymbol{\xi}_{\mathrm{tar}}
\right),
\end{equation}
and abbreviate
\begin{equation}
\mathbb{P}_{q,\upsilon}
:=
\mathbb{P}_{\mathfrak{d}_{q},\upsilon},\qquad
q\in\{0,1,2,3\},
\end{equation}
where $\mathbb{P}_{0,\upsilon}=\mathbb{P}_{\mathrm{src},\upsilon}$ and $\mathbb{P}_{3,\upsilon}=\mathbb{P}_{\mathrm{tar},\upsilon}$. For each activity label $a\in\mathcal{A}$, let
\begin{equation}
\widetilde{\mathbf{Z}}_{\mathfrak{d},a}
:=
\mathcal{P}_{\mathrm{cs}}
\left(
\mathcal{G}_{\mathrm{echo}}
\big(
\{\mathbf{r}_{p,a,m},\mathbf{v}_{p,a,m},\varrho_{p,a,m}\}_{m=1}^{M};
\beta,\boldsymbol{\xi}
\big)
\right),
\end{equation}
where $\mathfrak{d}=(p,\beta,\boldsymbol{\xi})$, and $\mathcal{P}_{\mathrm{cs}}$ denotes the row-wise clutter-suppression operator introduced in Subsection II-C. The quantity $\widetilde{\mathbf{Z}}_{\mathfrak{d},a}$ is the ideal clutter-suppressed observation that keeps the moving-target echo but removes the nuisance clutter/noise addends. The induced ideal feature law is denoted by
\begin{equation}
\widetilde{\mathbb{P}}_{\mathfrak{d},\upsilon}^{(a)}
:=
\operatorname{Law}
\left(
\mathcal{R}_{\upsilon}
\big(
\widetilde{\mathbf{Z}}_{\mathfrak{d},a}
\big)
\right),
\end{equation}
and the corresponding activity-conditional laws along the path are abbreviated by
\begin{equation}
\widetilde{\mathbb{P}}_{q,\upsilon}^{(a)}
:=
\widetilde{\mathbb{P}}_{\mathfrak{d}_{q},\upsilon}^{(a)},\qquad
q\in\{0,1,2,3\}.
\end{equation}
\par
The associated ideal feature-label joint law is written as
\begin{equation}
\widetilde{\mathbb{P}}_{q,\upsilon}
:=
\sum_{a=1}^{C}
\alpha_{a}^{(\mathfrak{d}_{q})}
\widetilde{\mathbb{P}}_{q,\upsilon}^{(a)}
\otimes
\delta_{a},
\end{equation}
where $\delta_{a}$ denotes the point mass at label $a$ and $\alpha_{a}^{(\mathfrak{d}_{q})}=\mathbb{P}(A=a\mid \mathfrak{d}_{q})$ is the domain-dependent activity prior introduced in Subsection II-E.\par
For the later path decomposition, define the three activity-prior total-variation fragments
\begin{equation}
\mathfrak{A}_{\mathrm{pri}}^{(q,q+1)}
:=
\frac{1}{2}
\sum_{a=1}^{C}
\left|
\alpha_{a}^{(\mathfrak{d}_{q})}
-
\alpha_{a}^{(\mathfrak{d}_{q+1})}
\right|,\qquad
q\in\{0,1,2\},
\end{equation}
and their aggregate
\begin{equation}
\mathfrak{A}_{\mathrm{pri}}
:=
\mathfrak{A}_{\mathrm{pri}}^{(0,1)}
+
\mathfrak{A}_{\mathrm{pri}}^{(1,2)}
+
\mathfrak{A}_{\mathrm{pri}}^{(2,3)}.
\end{equation}
\par
The later prior-mismatch control also uses
\begin{equation}
\mathfrak{C}_{\mathrm{pri},\upsilon}
:=
8J_{\upsilon}B_{\varphi,\upsilon}
+
2B_{\ell,\upsilon},
\end{equation}
where the feature-diameter contribution $8J_{\upsilon}B_{\varphi,\upsilon}$ follows from $\|\boldsymbol{\varphi}-\boldsymbol{\varphi}'\|_{2}\leq 2B_{\varphi,\upsilon}$ on $\mathfrak{X}_{\upsilon}$.\par
To complement the cross-person kinematic discrepancy from Subsection II-A with a reflectivity mismatch descriptor, define the phase-aligned scattering coefficient
\begin{equation}
\widetilde{\varrho}_{p,a,m}(\phi)
:=
\varrho_{p,a,m}(T_{a}\phi),\qquad
\phi\in[0,1],
\label{eq:phase_aligned_reflectivity}
\end{equation}
and
\begin{equation}
\Sigma_{p,p'}^{(\mathrm{ref})}
:=
\sum_{a=1}^{C}
\rho_{a}
\frac{1}{M}
\sum_{m=1}^{M}
\sup_{\phi\in[0,1]}
\left|
\widetilde{\varrho}_{p,a,m}(\phi)
-
\widetilde{\varrho}_{p',a,m}(\phi)
\right|,
\label{eq:reflectivity_mismatch}
\end{equation}
where $\{\rho_{a}\}_{a=1}^{C}$ are the activity weights used in $D_{p,p'}^{(\mathrm{cp})}$ and $\sum_{a=1}^{C}\rho_{a}=1$. For a fixed coefficient $\omega_{\varrho}>0$, define the extended cross-person discrepancy
\begin{equation}
\overline{D}_{p,p'}^{(\mathrm{cp})}
:=
D_{p,p'}^{(\mathrm{cp})}
+
\omega_{\varrho}\Sigma_{p,p'}^{(\mathrm{ref})}.
\label{eq:extended_cross_person_discrepancy}
\end{equation}
\par
The cross-view and cross-wall discrepancy coordinates are fixed as
\begin{equation}
\mathfrak{V}_{\beta,\beta'}
:=
\left\|
\mathbf{q}_{\beta}
-
\mathbf{q}_{\beta'}
\right\|_{2}
+
\left\|
\mathbf{u}_{\beta}
-
\mathbf{u}_{\beta'}
\right\|_{2}
+
\left|
\tau_{0,\beta}
-
\tau_{0,\beta'}
\right|,
\end{equation}
and
\begin{equation}
\mathfrak{W}_{\boldsymbol{\xi},\boldsymbol{\xi}'}^{(\beta)}
:=
\left|
A_{\mathrm{w}}(\omega_{c};\beta,\boldsymbol{\xi})
-
A_{\mathrm{w}}(\omega_{c};\beta,\boldsymbol{\xi}')
\right|
+
\left|
\tau_{\mathrm{w}}(\beta,\boldsymbol{\xi})
-
\tau_{\mathrm{w}}(\beta,\boldsymbol{\xi}')
\right|,
\end{equation}
respectively. Since the same wall can be interrogated from different views in the step $\mathfrak{d}_{1}\rightarrow\mathfrak{d}_{2}$, the carrier-frequency wall response is additionally assumed to be locally Lipschitz in the view coordinate at fixed wall descriptor, namely
\begin{equation}
\left|
A_{\mathrm{w}}(\omega_{c};\beta,\boldsymbol{\xi})
-
A_{\mathrm{w}}(\omega_{c};\beta',\boldsymbol{\xi})
\right|
\leq
L_{A,\beta}
\mathfrak{V}_{\beta,\beta'},
\end{equation}
\begin{equation}
\left|
\tau_{\mathrm{w}}(\beta,\boldsymbol{\xi})
-
\tau_{\mathrm{w}}(\beta',\boldsymbol{\xi})
\right|
\leq
L_{\tau,\beta}
\mathfrak{V}_{\beta,\beta'},
\end{equation}
for all admissible $\beta$, $\beta'$, and $\boldsymbol{\xi}$, where $L_{A,\beta}>0$ and $L_{\tau,\beta}>0$ are view-Lipschitz wall constants. Finally, let $c_{\mathrm{sg}}>0$ denote a universal first-moment conversion constant for the sub-Gaussian noise envelope, and define the nuisance residual
\begin{equation}
\mathfrak{N}_{\mathrm{res}}
:=
\sqrt{N_{\tau}L_{t}}
\sup_{\tau,t}
\left|
b_{z}^{(\beta_{\mathrm{src}},\boldsymbol{\xi}_{\mathrm{src}})}(\tau,t)
-
b_{z}^{(\beta_{\mathrm{tar}},\boldsymbol{\xi}_{\mathrm{tar}})}(\tau,t)
\right|
+
c_{\mathrm{sg}}\sqrt{N_{\tau}L_{t}}
\big(
\varsigma_{\mathrm{n}}(\beta_{\mathrm{src}},\boldsymbol{\xi}_{\mathrm{src}})
+
\varsigma_{\mathrm{n}}(\beta_{\mathrm{tar}},\boldsymbol{\xi}_{\mathrm{tar}})
\big),
\end{equation}
where the first addend is the clutter-residual Frobenius envelope induced by the entrywise bound after stacking the $N_{\tau}L_{t}$ observation grid.\par
The later constants also use
\begin{equation}
\rho_{\min}
:=
\min_{1\leq a\leq C}\rho_{a},
\qquad
T_{\max}
:=
\max_{1\leq a\leq C}T_{a},
\qquad
B_{\mathrm{pc}}
:=
\sup_{\beta}
\left\|
\mathbf{q}_{\beta}
\right\|_{2},
\qquad
d_{\max}
:=
2(B_{r}+B_{\mathrm{pc}}),
\end{equation}
where $\rho_{\min}>0$ is assumed. Moreover, the waveform autocorrelation is assumed to be Lipschitz continuous on the relevant delay interval, namely
\begin{equation}
\left|
\chi_{u}(\Delta)
-
\chi_{u}(\Delta')
\right|
\leq
L_{\chi}
\left|
\Delta
-
\Delta'
\right|,
\end{equation}
where $L_{\chi}>0$ is the autocorrelation Lipschitz constant.\par
\begin{lemma}\label{lem:path_decomp}
Define, for each adjacent pair $(q,q+1)\in\{(0,1),(1,2),(2,3)\}$,
\begin{equation}
\mathfrak{T}_{\upsilon}^{(q,q+1)}
:=
\sup_{a\in\mathcal{A}}
W_{1,\|\cdot\|_{2}}
\big(
\widetilde{\mathbb{P}}_{q,\upsilon}^{(a)},
\widetilde{\mathbb{P}}_{q+1,\upsilon}^{(a)}
\big),
\end{equation}
where $W_{1,\|\cdot\|_{2}}$ denotes the $1$-Wasserstein distance on the feature space with ground cost $\|\boldsymbol{\varphi}-\boldsymbol{\varphi}'\|_{2}$. Then
\begin{equation}
W_{1,\mathfrak{c}_{\upsilon}}
\big(
\widetilde{\mathbb{P}}_{0,\upsilon},
\widetilde{\mathbb{P}}_{3,\upsilon}
\big)
\leq
4J_{\upsilon}
\left(
\mathfrak{T}_{\upsilon}^{(0,1)}
+
\mathfrak{T}_{\upsilon}^{(1,2)}
+
\mathfrak{T}_{\upsilon}^{(2,3)}
\right)
+
\mathfrak{C}_{\mathrm{pri},\upsilon}\mathfrak{A}_{\mathrm{pri}}.
\end{equation}
\end{lemma}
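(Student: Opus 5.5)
The argument combines three ingredients: the triangle inequality for $W_{1,\mathfrak{c}_{\upsilon}}$ along the path $\widetilde{\mathbb{P}}_{0,\upsilon}\to\widetilde{\mathbb{P}}_{1,\upsilon}\to\widetilde{\mathbb{P}}_{2,\upsilon}\to\widetilde{\mathbb{P}}_{3,\upsilon}$, a per-step coupling that matches labels classwise, and a total-variation correction for the mismatch of activity priors.

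\emph{Triangle inequality.} The cost $\mathfrak{c}_{\upsilon}$ is a metric on $\mathfrak{X}_{\upsilon}\times\mathcal{A}$: it is a positive combination of the Euclidean metric and the discrete metric on labels. Hence $W_{1,\mathfrak{c}_{\upsilon}}$ satisfies the triangle inequality, via the gluing lemma for couplings. This gives
\begin{equation*}
W_{1,\mathfrak{c}_{\upsilon}}\big(\widetilde{\mathbb{P}}_{0,\upsilon},\widetilde{\mathbb{P}}_{3,\upsilon}\big)\leq\sum_{q=0}^{2}W_{1,\mathfrak{c}_{\upsilon}}\big(\widetilde{\mathbb{P}}_{q,\upsilon},\widetilde{\mathbb{P}}_{q+1,\upsilon}\big).
\end{equation*}
It therefore suffices to show, for each adjacent pair,
\begin{equation*}
W_{1,\mathfrak{c}_{\upsilon}}\big(\widetilde{\mathbb{P}}_{q,\upsilon},\widetilde{\mathbb{P}}_{q+1,\upsilon}\big)\leq 4J_{\upsilon}\mathfrak{T}_{\upsilon}^{(q,q+1)}+\mathfrak{C}_{\mathrm{pri},\upsilon}\mathfrak{A}_{\mathrm{pri}}^{(q,q+1)}.
\end{equation*}
Summing these three bounds then yields the lemma.

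\emph{Coupling for one step.} Write $\alpha_{a}=\alpha_{a}^{(\mathfrak{d}_{q})}$, $\alpha'_{a}=\alpha_{a}^{(\mathfrak{d}_{q+1})}$, $m_{a}=\min(\alpha_{a},\alpha'_{a})$, and $\epsilon=\mathfrak{A}_{\mathrm{pri}}^{(q,q+1)}=1-\sum_{a}m_{a}$.

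1. For each label $a$ with $m_{a}>0$, take an (almost) optimal coupling $\pi_{a}$ of $\widetilde{\mathbb{P}}_{q,\upsilon}^{(a)}$ and $\widetilde{\mathbb{P}}_{q+1,\upsilon}^{(a)}$ for the Euclidean cost. It exists because both measures are supported in the compact ball $\mathfrak{X}_{\upsilon}$. Place the mass $m_{a}\,\pi_{a}\otimes\delta_{(a,a)}$, which pairs label $a$ with itself.
2. The leftover source mass is $\sum_{a}(\alpha_{a}-m_{a})\widetilde{\mathbb{P}}_{q,\upsilon}^{(a)}\otimes\delta_{a}$ and the leftover target mass is $\sum_{a}(\alpha'_{a}-m_{a})\widetilde{\mathbb{P}}_{q+1,\upsilon}^{(a)}\otimes\delta_{a}$. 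Both have total mass $\epsilon$. Couple them by any coupling, for instance the normalized product measure.
3. The result is a valid coupling in $\Pi(\widetilde{\mathbb{P}}_{q,\upsilon},\widetilde{\mathbb{P}}_{q+1,\upsilon})$. Its marginals are correct because $m_{a}+(\alpha_{a}-m_{a})=\alpha_{a}$ and, likewise, $m_{a}+(\alpha'_{a}-m_{a})=\alpha'_{a}$.

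\emph{Cost of the coupling.} The matched part pairs equal labels, so the indicator term vanishes. Its cost is at most
\begin{equation*}
4J_{\upsilon}\sum_{a}m_{a}W_{1,\|\cdot\|_{2}}\big(\widetilde{\mathbb{P}}^{(a)}_{q,\upsilon},\widetilde{\mathbb{P}}^{(a)}_{q+1,\upsilon}\big)\leq 4J_{\upsilon}\mathfrak{T}^{(q,q+1)}_{\upsilon},
\end{equation*}
using $\sum_{a}m_{a}\leq1$ and the supremum over $a$ in the definition of $\mathfrak{T}^{(q,q+1)}_{\upsilon}$. The unmatched part is bounded pointwise by the largest possible cost. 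On $\mathfrak{X}_{\upsilon}$ we have $\|\boldsymbol{\varphi}-\boldsymbol{\varphi}'\|_{2}\leq 2B_{\varphi,\upsilon}$, so $\mathfrak{c}_{\upsilon}\leq 8J_{\upsilon}B_{\varphi,\upsilon}+2B_{\ell,\upsilon}=\mathfrak{C}_{\mathrm{pri},\upsilon}$. Multiplying by the unmatched mass $\epsilon$ gives the contribution $\mathfrak{C}_{\mathrm{pri},\upsilon}\mathfrak{A}^{(q,q+1)}_{\mathrm{pri}}$.

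If exact minimizers $\pi_{a}$ are not to be invoked, take $\eta$-optimal couplings instead and let $\eta\downarrow0$.

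\emph{Main obstacle.} The only delicate step is justifying the triangle inequality for $W_{1,\mathfrak{c}_{\upsilon}}$ on this mixed continuous–discrete space. I would handle it directly with the gluing lemma. $\mathfrak{X}_{\upsilon}\times\mathcal{A}$ is Polish, and $\mathfrak{c}_{\upsilon}$ obeys the triangle inequality pointwise. So gluing near-optimal couplings of consecutive pairs and taking the $(1,3)$-marginal gives the desired bound.
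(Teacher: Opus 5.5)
Your proposal is correct and follows essentially the same route as the paper. For each adjacent pair you route the common label mass $\sum_{a}\min\{\alpha_{a}^{(\mathfrak{d}_{q})},\alpha_{a}^{(\mathfrak{d}_{q+1})}\}=1-\mathfrak{A}_{\mathrm{pri}}^{(q,q+1)}$ through classwise Euclidean-optimal couplings, and you send the residual mass through an arbitrary coupling whose pointwise cost is at most $\mathfrak{C}_{\mathrm{pri},\upsilon}$. You then chain the three steps with the triangle inequality for $W_{1,\mathfrak{c}_{\upsilon}}$; your marginal check and gluing-lemma justification of that inequality (since $\mathfrak{c}_{\upsilon}$ is a metric on $\mathfrak{X}_{\upsilon}\times\mathcal{A}$) just fill in details the paper leaves implicit.
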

\begin{proof}
Fix one adjacent pair $(q,q+1)\in\{(0,1),(1,2),(2,3)\}$ and abbreviate
\begin{equation}
\alpha_{a}^{-}
:=
\min
\left\{
\alpha_{a}^{(\mathfrak{d}_{q})},
\alpha_{a}^{(\mathfrak{d}_{q+1})}
\right\},
\end{equation}
and let $\pi_{a}^{\star}$ be an optimal coupling between $\widetilde{\mathbb{P}}_{q,\upsilon}^{(a)}$ and $\widetilde{\mathbb{P}}_{q+1,\upsilon}^{(a)}$ under the feature cost $\|\cdot\|_{2}$. Since
\begin{equation}
\sum_{a=1}^{C}\alpha_{a}^{-}
=
1-\mathfrak{A}_{\mathrm{pri}}^{(q,q+1)},
\end{equation}
the common label mass equals the complement of the activity-prior total variation. A coupling between the two joint laws is now constructed by transporting the common label mass through the same-label couplings $\{\pi_{a}^{\star}\}_{a=1}^{C}$ and by transporting the remaining mass through an arbitrary residual coupling on mismatched labels. Under this construction, the label mismatch probability equals $\mathfrak{A}_{\mathrm{pri}}^{(q,q+1)}$.\par
By the definition of $\mathfrak{c}_{\upsilon}$, every same-label transported pair contributes only the feature cost term
\begin{equation}
\mathfrak{c}_{\upsilon}
\big(
(\boldsymbol{\varphi},a),
(\boldsymbol{\varphi}',a)
\big)
=
4J_{\upsilon}
\left\|
\boldsymbol{\varphi}
-
\boldsymbol{\varphi}'
\right\|_{2},
\end{equation}
whereas every mismatched-label pair contributes at most
\begin{equation}
\mathfrak{c}_{\upsilon}
\big(
(\boldsymbol{\varphi},a),
(\boldsymbol{\varphi}',a')
\big)
\leq
8J_{\upsilon}B_{\varphi,\upsilon}
+
2B_{\ell,\upsilon}
=
\mathfrak{C}_{\mathrm{pri},\upsilon},\qquad a\neq a',
\end{equation}
where the diameter bound on $\mathfrak{X}_{\upsilon}$ has been used. Hence the residual mismatched-label contribution is bounded by $\mathfrak{C}_{\mathrm{pri},\upsilon}\mathfrak{A}_{\mathrm{pri}}^{(q,q+1)}$, whereas the same-label part is controlled by the optimal feature couplings. Consequently,
\begin{equation}
W_{1,\mathfrak{c}_{\upsilon}}
\big(
\widetilde{\mathbb{P}}_{q,\upsilon},
\widetilde{\mathbb{P}}_{q+1,\upsilon}
\big)
\leq
4J_{\upsilon}
\sum_{a=1}^{C}
\alpha_{a}^{-}
W_{1,\|\cdot\|_{2}}
\big(
\widetilde{\mathbb{P}}_{q,\upsilon}^{(a)},
\widetilde{\mathbb{P}}_{q+1,\upsilon}^{(a)}
\big)
+
\mathfrak{C}_{\mathrm{pri},\upsilon}\mathfrak{A}_{\mathrm{pri}}^{(q,q+1)}
\leq
4J_{\upsilon}
\mathfrak{T}_{\upsilon}^{(q,q+1)}
+
\mathfrak{C}_{\mathrm{pri},\upsilon}\mathfrak{A}_{\mathrm{pri}}^{(q,q+1)},
\end{equation}
where $\sum_{a=1}^{C}\alpha_{a}^{-}\leq 1$ has been used in the second step.\par
Applying the triangle inequality of the Wasserstein distance along the path $\mathfrak{d}_{0}\rightarrow\mathfrak{d}_{1}\rightarrow\mathfrak{d}_{2}\rightarrow\mathfrak{d}_{3}$ gives
\begin{equation}
W_{1,\mathfrak{c}_{\upsilon}}
\big(
\widetilde{\mathbb{P}}_{0,\upsilon},
\widetilde{\mathbb{P}}_{3,\upsilon}
\big)
\leq
W_{1,\mathfrak{c}_{\upsilon}}
\big(
\widetilde{\mathbb{P}}_{0,\upsilon},
\widetilde{\mathbb{P}}_{1,\upsilon}
\big)
+
W_{1,\mathfrak{c}_{\upsilon}}
\big(
\widetilde{\mathbb{P}}_{1,\upsilon},
\widetilde{\mathbb{P}}_{2,\upsilon}
\big)
+
W_{1,\mathfrak{c}_{\upsilon}}
\big(
\widetilde{\mathbb{P}}_{2,\upsilon},
\widetilde{\mathbb{P}}_{3,\upsilon}
\big).
\end{equation}
\par
Substituting the adjacent-pair bound three times and using
\begin{equation}
\mathfrak{A}_{\mathrm{pri}}
=
\mathfrak{A}_{\mathrm{pri}}^{(0,1)}
+
\mathfrak{A}_{\mathrm{pri}}^{(1,2)}
+
\mathfrak{A}_{\mathrm{pri}}^{(2,3)},
\end{equation}
proves the claim.\par
\end{proof}
The next three lemmas upper-bound the adjacent ideal feature transports in terms of cross-person, cross-view, and cross-wall coordinates, respectively. Their proofs all use the same propagation chain: matched-filter echo perturbation $\rightarrow$ row-wise clutter suppression $\rightarrow$ representation map.\par
\begin{lemma}\label{lem:cross_person_transport}
Let
\begin{equation}
\gamma_{\mathrm{per},r}
:=
4|\kappa_{0}|B_{\varrho}B_{\mathrm{w}}B_{\chi}d_{\max}d_{\min}^{-4}
+
\frac{4\pi B_{\Gamma}B_{\chi}}{\lambda_{c}}
+
\frac{2B_{\Gamma}L_{\chi}}{c},
\end{equation}
\begin{equation}
\gamma_{\mathrm{per},v}
:=
\frac{4\pi B_{\Gamma}B_{\chi}T_{\max}}{\lambda_{c}},
\end{equation}
and
\begin{equation}
\gamma_{\mathrm{per},\varrho}
:=
|\kappa_{0}|B_{\mathrm{w}}B_{\chi}d_{\min}^{-2}.
\end{equation}
\par
Define
\begin{equation}
\Gamma_{\mathrm{per}}
:=
M\sqrt{N_{\tau}L_{t}}
\left(
\frac{\gamma_{\mathrm{per},r}}{\rho_{\min}\omega_{r}}
+
\frac{\gamma_{\mathrm{per},v}}{\rho_{\min}\omega_{v}}
+
\frac{\gamma_{\mathrm{per},\varrho}}{\rho_{\min}\omega_{\varrho}}
\right),
\end{equation}
where $\omega_{r}>0$, $\omega_{v}>0$, and $\omega_{\varrho}>0$ are the position, velocity, and reflectivity weights appearing in $\overline{D}_{p,p'}^{(\mathrm{cp})}$. The state weight $\omega_{s}$ is allowed to remain merely nonnegative, because the later reduction uses only the monotone inequalities $\delta_{p,p'}^{(\mathrm{cp})}(a)\geq \omega_{r}\overline{\Delta}_{r,p,p'}(a)$ and $\delta_{p,p'}^{(\mathrm{cp})}(a)\geq \omega_{v}\overline{\Delta}_{v,p,p'}(a)$. Then
\begin{equation}
\mathfrak{T}_{\upsilon}^{(0,1)}
\leq
L_{\upsilon}\Gamma_{\mathrm{per}}
\overline{D}_{p_{\mathrm{src}},p_{\mathrm{tar}}}^{(\mathrm{cp})}.
\end{equation}
\end{lemma}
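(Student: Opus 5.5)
The plan is to bound the Wasserstein distance through a synchronous coupling. For fixed activity $a$, the step $\mathfrak{d}_{0}\rightarrow\mathfrak{d}_{1}$ changes only the subject index, with $\beta=\beta_{\mathrm{src}}$ and $\boldsymbol{\xi}=\boldsymbol{\xi}_{\mathrm{src}}$ held fixed. Couple the two ideal feature laws $\widetilde{\mathbb{P}}_{0,\upsilon}^{(a)}$ and $\widetilde{\mathbb{P}}_{1,\upsilon}^{(a)}$ by driving both observations with the same acquisition randomness (the random phase realization), and align subjects through the common normalized phase $\phi$. Under this coupling,
\begin{equation}
W_{1,\|\cdot\|_{2}}\big(\widetilde{\mathbb{P}}_{0,\upsilon}^{(a)},\widetilde{\mathbb{P}}_{1,\upsilon}^{(a)}\big)\leq \mathbb{E}\big\|\mathcal{R}_{\upsilon}(\widetilde{\mathbf{Z}}_{\mathfrak{d}_{0},a})-\mathcal{R}_{\upsilon}(\widetilde{\mathbf{Z}}_{\mathfrak{d}_{1},a})\big\|_{2}\leq L_{\upsilon}\,\mathbb{E}\big\|\widetilde{\mathbf{Z}}_{\mathfrak{d}_{0},a}-\widetilde{\mathbf{Z}}_{\mathfrak{d}_{1},a}\big\|_{F}.
\end{equation}
The second inequality is the representation Lipschitz bound of Subsection II-C. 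Because $\mathcal{P}_{\mathrm{cs}}$ subtracts a row mean, it is an orthogonal projection along each row and hence has operator norm at most $1$. So it suffices to bound the Frobenius difference of the ideal matched-filter outputs. That difference is at most $\sqrt{N_{\tau}L_{t}}$ times the sup over $(\tau,t)$ of the entrywise difference, which in turn is at most $\sum_{m}$ of the per-scatterer differences.

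For each scatterer $m$ at fixed $(\tau,t)$, write $\Gamma\chi-\Gamma'\chi'=(\Gamma-\Gamma')\chi+\Gamma'(\chi-\chi')$. The second piece satisfies
\[
|\chi-\chi'|\leq L_{\chi}|\tau_{p,a,m}-\tau_{p',a,m}|\leq \frac{2L_{\chi}}{c}\|\mathbf{r}-\mathbf{r}'\|_{2},
\]
using $|d-d'|\leq 2\|\mathbf{r}-\mathbf{r}'\|_{2}$. Combined with $|\Gamma'|\leq B_{\Gamma}$, this produces the $2B_{\Gamma}L_{\chi}/c$ term of $\gamma_{\mathrm{per},r}$. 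The piece $(\Gamma-\Gamma')\chi$ is split by telescoping through the three factors of $\Gamma$:
\begin{enumerate}
\item The reflectivity factor gives $|\kappa_{0}|B_{\mathrm{w}}d_{\min}^{-2}B_{\chi}|\varrho-\varrho'|$, which is the $\gamma_{\mathrm{per},\varrho}$ term.
\item The spreading factor uses $|d^{-2}-d'^{-2}|=|d'^{2}-d^{2}|/(d^{2}d'^{2})\leq 2d_{\max}\cdot 2\|\mathbf{r}-\mathbf{r}'\|_{2}\,d_{\min}^{-4}$, which gives the first term of $\gamma_{\mathrm{per},r}$.
\item The carrier phase uses $|e^{jx}-e^{jx'}|\leq|x-x'|$.
\end{enumerate}

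For the carrier phase I will use the far-field form $\frac{4\pi}{\lambda_{c}}\mathbf{u}_{\beta}^{\top}\mathbf{r}$ combined with the integral identity for $\mathbf{u}_{\beta}^{\top}\mathbf{r}(t)$. The initial-position part gives $\frac{4\pi}{\lambda_{c}}\|\mathbf{r}-\mathbf{r}'\|$ and hence the $4\pi B_{\Gamma}B_{\chi}/\lambda_{c}$ term. The integrated-velocity part gives at most $\frac{4\pi}{\lambda_{c}}T_{\max}\sup\|\mathbf{v}-\mathbf{v}'\|$, which is $\gamma_{\mathrm{per},v}$. Every position, velocity and reflectivity difference is a phase-aligned sup, so summing over $m$ yields
\[
M\big(\gamma_{\mathrm{per},r}\overline{\Delta}_{r}(a)+\gamma_{\mathrm{per},v}\overline{\Delta}_{v}(a)+\gamma_{\mathrm{per},\varrho}\overline{\Delta}_{\varrho}(a)\big),
\]
where $\overline{\Delta}_{\varrho}(a)$ is the $M$-averaged reflectivity sup.

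Finally, convert activity-conditional quantities to the aggregated discrepancy, since $\mathfrak{T}^{(0,1)}_{\upsilon}$ takes a sup over $a$. Nonnegativity of all terms gives
\[
\omega_{r}\overline{\Delta}_{r}(a)\leq\delta^{(\mathrm{cp})}(a)\leq \rho_{\min}^{-1}D^{(\mathrm{cp})},
\]
and similarly for $\overline{\Delta}_{v}(a)$. Likewise $\overline{\Delta}_{\varrho}(a)\leq\rho_{\min}^{-1}\Sigma^{(\mathrm{ref})}$. Each of $D^{(\mathrm{cp})}$ and $\omega_{\varrho}\Sigma^{(\mathrm{ref})}$ is bounded by $\overline{D}^{(\mathrm{cp})}$. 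This produces $\Gamma_{\mathrm{per}}$ exactly.

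I expect the main obstacle to be the phase term. The exact delay appears in the carrier phase, while the velocity contribution is only visible through the far-field linearization and the integral identity. I would carry out the estimate within the paper's adopted far-field model, with $\tau_{0,\beta}$ and $\mathbf{u}_{\beta}$ common to both subjects so that those terms cancel. I would also need to justify the coupling, i.e. that the ideal echoes are deterministic functions of $\phi$ and the shared randomness, so that the sup bounds hold pathwise.
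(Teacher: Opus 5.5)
Your proposal is correct and follows essentially the same route as the paper's proof. The shared steps are the add-and-subtract split of $\Gamma\chi_{u}$; the factor-by-factor telescoping of $\Gamma$ (reflectivity, spreading via $|d^{-2}-d'^{-2}|\leq 2d_{\max}d_{\min}^{-4}|d-d'|$, and the far-field phase split into an initial projected-position part and an accumulated Doppler part); the $\sqrt{N_{\tau}L_{t}}$ stacking; the non-expansiveness of $\mathcal{P}_{\mathrm{cs}}$; the Lipschitz bound for $\mathcal{R}_{\upsilon}$; and the $\rho_{\min}$, $\omega_{r}$, $\omega_{v}$, $\omega_{\varrho}$ reduction to $\overline{D}_{p_{\mathrm{src}},p_{\mathrm{tar}}}^{(\mathrm{cp})}$.

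The two obstacles you flag are milder than you expect. As modeled, the ideal echoes are deterministic given $(p,a,\beta,\boldsymbol{\xi})$, so the coupling is trivial and the paper simply applies the Lipschitz bound. Keeping the exact delay in the carrier phase already gives $2\pi f_{c}|\tau_{p,a,m}^{(\beta)}(t)-\tau_{p',a,m}^{(\beta)}(t)|\leq \frac{4\pi}{\lambda_{c}}\|\mathbf{r}_{p,a,m}(t)-\mathbf{r}_{p',a,m}(t)\|_{2}$, so the $\gamma_{\mathrm{per},v}$ contribution is pure slack.
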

\begin{proof}
Fix one activity label $a\in\mathcal{A}$. The two adjacent domains $\mathfrak{d}_{0}$ and $\mathfrak{d}_{1}$ share the same view $\beta_{\mathrm{src}}$ and the same wall parameter $\boldsymbol{\xi}_{\mathrm{src}}$, so only the subject index changes. Let
\begin{equation}
\delta_{a}^{\mathrm{per}}
:=
\delta_{p_{\mathrm{src}},p_{\mathrm{tar}}}^{(\mathrm{cp})}(a).
\end{equation}
\par
Since $\rho_{\min}\leq \rho_{a}$ and
\begin{equation}
D_{p_{\mathrm{src}},p_{\mathrm{tar}}}^{(\mathrm{cp})}
=
\sum_{\bar{a}=1}^{C}
\rho_{\bar{a}}
\delta_{p_{\mathrm{src}},p_{\mathrm{tar}}}^{(\mathrm{cp})}(\bar{a}),
\end{equation}
one has
\begin{equation}
\delta_{a}^{\mathrm{per}}
\leq
\frac{D_{p_{\mathrm{src}},p_{\mathrm{tar}}}^{(\mathrm{cp})}}{\rho_{\min}}
\leq
\frac{\overline{D}_{p_{\mathrm{src}},p_{\mathrm{tar}}}^{(\mathrm{cp})}}{\rho_{\min}}.
\end{equation}
\par
Therefore, by \eqref{eq:cross_person_discrepancy} and \eqref{eq:reflectivity_mismatch},
\begin{equation}
\overline{\Delta}_{r,p_{\mathrm{src}},p_{\mathrm{tar}}}(a)
\leq
\frac{\delta_{a}^{\mathrm{per}}}{\omega_{r}}
\leq
\frac{\overline{D}_{p_{\mathrm{src}},p_{\mathrm{tar}}}^{(\mathrm{cp})}}{\rho_{\min}\omega_{r}},
\end{equation}
\begin{equation}
\overline{\Delta}_{v,p_{\mathrm{src}},p_{\mathrm{tar}}}(a)
\leq
\frac{\delta_{a}^{\mathrm{per}}}{\omega_{v}}
\leq
\frac{\overline{D}_{p_{\mathrm{src}},p_{\mathrm{tar}}}^{(\mathrm{cp})}}{\rho_{\min}\omega_{v}},
\end{equation}
and
\begin{equation}
\frac{1}{M}
\sum_{m=1}^{M}
\sup_{\phi\in[0,1]}
\left|
\widetilde{\varrho}_{p_{\mathrm{src}},a,m}(\phi)
-
\widetilde{\varrho}_{p_{\mathrm{tar}},a,m}(\phi)
\right|
\leq
\frac{\Sigma_{p_{\mathrm{src}},p_{\mathrm{tar}}}^{(\mathrm{ref})}}{\rho_{\min}}
\leq
\frac{\overline{D}_{p_{\mathrm{src}},p_{\mathrm{tar}}}^{(\mathrm{cp})}}{\rho_{\min}\omega_{\varrho}}.
\end{equation}
\par
For every slow time $t\in[0,T_{a}]$, let $\phi=t/T_{a}$. By \eqref{eq:uniform_activity_duration} and \eqref{eq:phase_aligned_human_state},
\begin{equation}
\left\|
\mathbf{r}_{p_{\mathrm{src}},a,m}(t)
-
\mathbf{r}_{p_{\mathrm{tar}},a,m}(t)
\right\|_{2}
\leq
\sup_{\phi'\in[0,1]}
\left\|
\widetilde{\mathbf{r}}_{p_{\mathrm{src}},a,m}(\phi')
-
\widetilde{\mathbf{r}}_{p_{\mathrm{tar}},a,m}(\phi')
\right\|_{2},
\end{equation}
\begin{equation}
\left\|
\mathbf{v}_{p_{\mathrm{src}},a,m}(t)
-
\mathbf{v}_{p_{\mathrm{tar}},a,m}(t)
\right\|_{2}
\leq
\sup_{\phi'\in[0,1]}
\left\|
\widetilde{\mathbf{v}}_{p_{\mathrm{src}},a,m}(\phi')
-
\widetilde{\mathbf{v}}_{p_{\mathrm{tar}},a,m}(\phi')
\right\|_{2},
\end{equation}
and
\begin{equation}
\left|
\varrho_{p_{\mathrm{src}},a,m}(t)
-
\varrho_{p_{\mathrm{tar}},a,m}(t)
\right|
\leq
\sup_{\phi'\in[0,1]}
\left|
\widetilde{\varrho}_{p_{\mathrm{src}},a,m}(\phi')
-
\widetilde{\varrho}_{p_{\mathrm{tar}},a,m}(\phi')
\right|.
\end{equation}
\par
Consider now the matched-filtered moving-target echo at a fixed sample pair $(\tau,t)$. Using the notation of Subsection II-B, write the signal-only component as
\begin{equation}
s_{p,a}^{(\beta,\boldsymbol{\xi})}(\tau,t)
:=
\sum_{m=1}^{M}
\Gamma_{p,a,m}^{(\beta,\boldsymbol{\xi})}(t)
\chi_{u}
\big(
\tau
-
\tau_{\mathrm{w}}(\beta,\boldsymbol{\xi})
-
\tau_{p,a,m}^{(\beta)}(t)
\big).
\end{equation}
\par
For the same view-wall pair $(\beta_{\mathrm{src}},\boldsymbol{\xi}_{\mathrm{src}})$, add and subtract
\begin{equation}
\Gamma_{p_{\mathrm{tar}},a,m}^{(\beta_{\mathrm{src}},\boldsymbol{\xi}_{\mathrm{src}})}(t)
\chi_{u}
\big(
\tau
-
\tau_{\mathrm{w}}(\beta_{\mathrm{src}},\boldsymbol{\xi}_{\mathrm{src}})
-
\tau_{p_{\mathrm{src}},a,m}^{(\beta_{\mathrm{src}})}(t)
\big),
\end{equation}
inside the difference. Then
\begin{equation}
\begin{aligned}
\big|
s_{p_{\mathrm{src}},a}^{(\beta_{\mathrm{src}},\boldsymbol{\xi}_{\mathrm{src}})}(\tau,t)
-
s_{p_{\mathrm{tar}},a}^{(\beta_{\mathrm{src}},\boldsymbol{\xi}_{\mathrm{src}})}(\tau,t)
\big|
&\leq
\sum_{m=1}^{M}
\big|
\Gamma_{p_{\mathrm{src}},a,m}^{(\beta_{\mathrm{src}},\boldsymbol{\xi}_{\mathrm{src}})}(t)
-
\Gamma_{p_{\mathrm{tar}},a,m}^{(\beta_{\mathrm{src}},\boldsymbol{\xi}_{\mathrm{src}})}(t)
\big|
B_{\chi}\\
&\quad+
\sum_{m=1}^{M}
B_{\Gamma}
\big|
\chi_{u}
\big(
\tau-\tau_{\mathrm{w}}-\tau_{p_{\mathrm{src}},a,m}^{(\beta_{\mathrm{src}})}(t)
\big)
-
\chi_{u}
\big(
\tau-\tau_{\mathrm{w}}-\tau_{p_{\mathrm{tar}},a,m}^{(\beta_{\mathrm{src}})}(t)
\big)
\big|,
\end{aligned}
\end{equation}
where $\tau_{\mathrm{w}}=\tau_{\mathrm{w}}(\beta_{\mathrm{src}},\boldsymbol{\xi}_{\mathrm{src}})$ is used for compactness.\par
The autocorrelation term is bounded by the delay difference. Since
\begin{equation}
\tau_{p,a,m}^{(\beta_{\mathrm{src}})}(t)
=
\frac{d_{p,a,m}^{(\beta_{\mathrm{src}})}(t)}{c},
\end{equation}
and
\begin{equation}
\left|
d_{p_{\mathrm{src}},a,m}^{(\beta_{\mathrm{src}})}(t)
-
d_{p_{\mathrm{tar}},a,m}^{(\beta_{\mathrm{src}})}(t)
\right|
\leq
2
\left\|
\mathbf{r}_{p_{\mathrm{src}},a,m}(t)
-
\mathbf{r}_{p_{\mathrm{tar}},a,m}(t)
\right\|_{2},
\end{equation}
one has
\begin{equation}
\big|
\chi_{u}
\big(
\tau-\tau_{\mathrm{w}}-\tau_{p_{\mathrm{src}},a,m}^{(\beta_{\mathrm{src}})}(t)
\big)
-
\chi_{u}
\big(
\tau-\tau_{\mathrm{w}}-\tau_{p_{\mathrm{tar}},a,m}^{(\beta_{\mathrm{src}})}(t)
\big)
\big|
\leq
\frac{2L_{\chi}}{c}
\left\|
\mathbf{r}_{p_{\mathrm{src}},a,m}(t)
-
\mathbf{r}_{p_{\mathrm{tar}},a,m}(t)
\right\|_{2}.
\end{equation}
\par
It remains to control the coefficient difference. Since the wall factor is identical here, the carrier phase is decomposed as in Subsection II-B into an initial projected-position factor and an accumulated micro-Doppler factor. By adding and subtracting four intermediate terms, by using $\big|e^{-jx}-e^{-jy}\big|\leq |x-y|$, and by using
\begin{equation}
\left|
d^{-2}
-
d'^{-2}
\right|
=
\frac{|d+d'||d-d'|}{d^{2}d'^{2}}
\leq
2d_{\max}d_{\min}^{-4}|d-d'|,
\end{equation}
one obtains
\begin{equation}
\begin{aligned}
\big|
\Gamma_{p_{\mathrm{src}},a,m}^{(\beta_{\mathrm{src}},\boldsymbol{\xi}_{\mathrm{src}})}(t)
-
\Gamma_{p_{\mathrm{tar}},a,m}^{(\beta_{\mathrm{src}},\boldsymbol{\xi}_{\mathrm{src}})}(t)
\big|
&\leq
|\kappa_{0}|B_{\mathrm{w}}d_{\min}^{-2}
\big|
\varrho_{p_{\mathrm{src}},a,m}(t)
-
\varrho_{p_{\mathrm{tar}},a,m}(t)
\big|\\
&\quad+
4|\kappa_{0}|B_{\varrho}B_{\mathrm{w}}d_{\max}d_{\min}^{-4}
\left\|
\mathbf{r}_{p_{\mathrm{src}},a,m}(t)
-
\mathbf{r}_{p_{\mathrm{tar}},a,m}(t)
\right\|_{2}\\
&\quad+
\frac{4\pi B_{\Gamma}}{\lambda_{c}}
\left\|
\mathbf{r}_{p_{\mathrm{src}},a,m}(0)
-
\mathbf{r}_{p_{\mathrm{tar}},a,m}(0)
\right\|_{2}\\
&\quad+
2\pi B_{\Gamma}
\left|
\int_{0}^{t}
\big(
f_{p_{\mathrm{src}},a,m}^{(\beta_{\mathrm{src}})}(\zeta)
-
f_{p_{\mathrm{tar}},a,m}^{(\beta_{\mathrm{src}})}(\zeta)
\big)
d\zeta
\right|,
\end{aligned}
\end{equation}
where the third line comes from the initial projected-position phase and the last line uses the accumulated micro-Doppler phase representation from Subsection II-B. Because
\begin{equation}
f_{p,a,m}^{(\beta_{\mathrm{src}})}(\zeta)
\approx
\frac{2}{\lambda_{c}}
\mathbf{u}_{\beta_{\mathrm{src}}}^{\top}
\mathbf{v}_{p,a,m}(\zeta),
\end{equation}
the radial-velocity difference satisfies
\begin{equation}
\left|
f_{p_{\mathrm{src}},a,m}^{(\beta_{\mathrm{src}})}(\zeta)
-
f_{p_{\mathrm{tar}},a,m}^{(\beta_{\mathrm{src}})}(\zeta)
\right|
\leq
\frac{2}{\lambda_{c}}
\left\|
\mathbf{v}_{p_{\mathrm{src}},a,m}(\zeta)
-
\mathbf{v}_{p_{\mathrm{tar}},a,m}(\zeta)
\right\|_{2},
\end{equation}
and therefore, with the change of variables $\zeta=T_{a}\phi$ together with $t\leq T_{a}\leq T_{\max}$,
\begin{equation}
\begin{aligned}
\left|
\int_{0}^{t}
\big(
f_{p_{\mathrm{src}},a,m}^{(\beta_{\mathrm{src}})}(\zeta)
-
f_{p_{\mathrm{tar}},a,m}^{(\beta_{\mathrm{src}})}(\zeta)
\big)
d\zeta
\right|
&\leq
\frac{2T_{a}}{\lambda_{c}}
\sup_{\phi\in[0,t/T_{a}]}
\left\|
\widetilde{\mathbf{v}}_{p_{\mathrm{src}},a,m}(\phi)
-
\widetilde{\mathbf{v}}_{p_{\mathrm{tar}},a,m}(\phi)
\right\|_{2}\\
&\leq
\frac{2T_{\max}}{\lambda_{c}}
\sup_{\phi\in[0,1]}
\left\|
\widetilde{\mathbf{v}}_{p_{\mathrm{src}},a,m}(\phi)
-
\widetilde{\mathbf{v}}_{p_{\mathrm{tar}},a,m}(\phi)
\right\|_{2}.
\end{aligned}
\end{equation}
\par
Moreover, since $t=0$ corresponds to $\phi=0$ under the common activity duration $T_{a}$,
\begin{equation}
\left\|
\mathbf{r}_{p_{\mathrm{src}},a,m}(0)
-
\mathbf{r}_{p_{\mathrm{tar}},a,m}(0)
\right\|_{2}
\leq
\sup_{\phi\in[0,1]}
\left\|
\widetilde{\mathbf{r}}_{p_{\mathrm{src}},a,m}(\phi)
-
\widetilde{\mathbf{r}}_{p_{\mathrm{tar}},a,m}(\phi)
\right\|_{2}.
\end{equation}
\par
Substituting these estimates into the signal difference yields
\begin{equation}
\begin{aligned}
\big|
s_{p_{\mathrm{src}},a}^{(\beta_{\mathrm{src}},\boldsymbol{\xi}_{\mathrm{src}})}(\tau,t)
-
s_{p_{\mathrm{tar}},a}^{(\beta_{\mathrm{src}},\boldsymbol{\xi}_{\mathrm{src}})}(\tau,t)
\big|
&\leq
M\gamma_{\mathrm{per},r}
\overline{\Delta}_{r,p_{\mathrm{src}},p_{\mathrm{tar}}}(a)
+
M\gamma_{\mathrm{per},v}
\overline{\Delta}_{v,p_{\mathrm{src}},p_{\mathrm{tar}}}(a)\\
&\quad+
M\gamma_{\mathrm{per},\varrho}
\frac{1}{M}
\sum_{m=1}^{M}
\sup_{\phi\in[0,1]}
\left|
\widetilde{\varrho}_{p_{\mathrm{src}},a,m}(\phi)
-
\widetilde{\varrho}_{p_{\mathrm{tar}},a,m}(\phi)
\right|\\
&\leq
M
\left(
\frac{\gamma_{\mathrm{per},r}}{\rho_{\min}\omega_{r}}
+
\frac{\gamma_{\mathrm{per},v}}{\rho_{\min}\omega_{v}}
+
\frac{\gamma_{\mathrm{per},\varrho}}{\rho_{\min}\omega_{\varrho}}
\right)
\overline{D}_{p_{\mathrm{src}},p_{\mathrm{tar}}}^{(\mathrm{cp})}.
\end{aligned}
\end{equation}
\par
For compactness, define the signal-only matched-filtered matrix
\begin{equation}
\mathbf{S}_{p,a}^{(\beta,\boldsymbol{\xi})}
:=
\mathcal{G}_{\mathrm{echo}}
\big(
\{\mathbf{r}_{p,a,m},\mathbf{v}_{p,a,m},\varrho_{p,a,m}\}_{m=1}^{M};
\beta,\boldsymbol{\xi}
\big),
\end{equation}
where the clutter and noise terms are excluded by construction.\par
After sampling over the $N_{\tau}L_{t}$ range-slow-time grid, the Frobenius norm of the signal-only matrix difference is bounded by
\begin{equation}
\left\|
\mathbf{S}_{p_{\mathrm{src}},a}^{(\beta_{\mathrm{src}},\boldsymbol{\xi}_{\mathrm{src}})}
-
\mathbf{S}_{p_{\mathrm{tar}},a}^{(\beta_{\mathrm{src}},\boldsymbol{\xi}_{\mathrm{src}})}
\right\|_{F}
\leq
\Gamma_{\mathrm{per}}
\overline{D}_{p_{\mathrm{src}},p_{\mathrm{tar}}}^{(\mathrm{cp})}.
\end{equation}
\par
Because $\mathcal{P}_{\mathrm{cs}}$ is an orthogonal projection applied row-wise, it is non-expansive in Frobenius norm. Therefore,
\begin{equation}
\left\|
\widetilde{\mathbf{Z}}_{\mathfrak{d}_{0},a}
-
\widetilde{\mathbf{Z}}_{\mathfrak{d}_{1},a}
\right\|_{F}
\leq
\Gamma_{\mathrm{per}}
\overline{D}_{p_{\mathrm{src}},p_{\mathrm{tar}}}^{(\mathrm{cp})}.
\end{equation}
\par
Applying the representation Lipschitz property finally gives
\begin{equation}
W_{1,\|\cdot\|_{2}}
\big(
\widetilde{\mathbb{P}}_{0,\upsilon}^{(a)},
\widetilde{\mathbb{P}}_{1,\upsilon}^{(a)}
\big)
\leq
L_{\upsilon}\Gamma_{\mathrm{per}}
\overline{D}_{p_{\mathrm{src}},p_{\mathrm{tar}}}^{(\mathrm{cp})}.
\end{equation}
\par
Taking the supremum over $a\in\mathcal{A}$ proves the claim.\par
\end{proof}
\begin{lemma}\label{lem:cross_view_transport}
Let
\begin{equation}
\gamma_{\mathrm{view},q}
:=
4|\kappa_{0}|B_{\varrho}B_{\mathrm{w}}B_{\chi}d_{\max}d_{\min}^{-4},
\end{equation}
\begin{equation}
\gamma_{\mathrm{view},w}
:=
|\kappa_{0}|B_{\varrho}B_{\chi}d_{\min}^{-2}L_{A,\beta}
+
\left(
2\pi f_{c}B_{\Gamma}B_{\chi}
+
B_{\Gamma}L_{\chi}
\right)
L_{\tau,\beta},
\end{equation}
\begin{equation}
\gamma_{\mathrm{view},u}
:=
\frac{4\pi B_{\Gamma}B_{\chi}B_{r}}{\lambda_{c}}
+
\frac{4\pi B_{\Gamma}B_{\chi}T_{\max}B_{v}}{\lambda_{c}}
+
\frac{2B_{\Gamma}L_{\chi}B_{r}}{c},
\end{equation}
and
\begin{equation}
\gamma_{\mathrm{view},\tau}
:=
2\pi f_{c}B_{\Gamma}B_{\chi}
+
B_{\Gamma}L_{\chi}.
\end{equation}
\par
Define
\begin{equation}
\Gamma_{\mathrm{view}}
:=
M\sqrt{N_{\tau}L_{t}}
\big(
\gamma_{\mathrm{view},q}
+
\gamma_{\mathrm{view},w}
+
\gamma_{\mathrm{view},u}
+
\gamma_{\mathrm{view},\tau}
\big).
\end{equation}
\par
Then
\begin{equation}
\mathfrak{T}_{\upsilon}^{(1,2)}
\leq
L_{\upsilon}\Gamma_{\mathrm{view}}
\mathfrak{V}_{\beta_{\mathrm{src}},\beta_{\mathrm{tar}}}.
\end{equation}
\end{lemma}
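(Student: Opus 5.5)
The argument mirrors the proof of Lemma~\ref{lem:cross_person_transport}, with the perturbed coordinate now the view. Fix $a\in\mathcal{A}$. The domains $\mathfrak{d}_{1}$ and $\mathfrak{d}_{2}$ share the subject $p_{\mathrm{tar}}$ and the wall $\boldsymbol{\xi}_{\mathrm{src}}$. Hence the trajectories $\mathbf{r}_{p_{\mathrm{tar}},a,m}$, the velocities $\mathbf{v}_{p_{\mathrm{tar}},a,m}$ and the reflectivities $\varrho_{p_{\mathrm{tar}},a,m}$ are identical in both domains. Only four things change: $\mathbf{q}_{\beta}$, $\mathbf{u}_{\beta}$ and $\tau_{0,\beta}$, which enter the delay and the carrier phase, and the view-dependence of $A_{\mathrm{w}}$ and $\tau_{\mathrm{w}}$. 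The plan is to bound the entrywise difference of the signal-only components $s_{p_{\mathrm{tar}},a}^{(\beta_{\mathrm{src}},\boldsymbol{\xi}_{\mathrm{src}})}(\tau,t)-s_{p_{\mathrm{tar}},a}^{(\beta_{\mathrm{tar}},\boldsymbol{\xi}_{\mathrm{src}})}(\tau,t)$ by $M(\gamma_{\mathrm{view},q}+\gamma_{\mathrm{view},w}+\gamma_{\mathrm{view},u}+\gamma_{\mathrm{view},\tau})\,\mathfrak{V}_{\beta_{\mathrm{src}},\beta_{\mathrm{tar}}}$. We then pass to the Frobenius norm with the factor $\sqrt{N_{\tau}L_{t}}$, use that the row-wise mean cancellation $\mathcal{P}_{\mathrm{cs}}$ is non-expansive, and apply the Lipschitz bound of $\mathcal{R}_{\upsilon}$. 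The ideal observations are deterministic given $a$, so the trivial coupling of the two point masses bounds $W_{1,\|\cdot\|_{2}}$. Taking the supremum over $a$ gives the claim.

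For the entrywise estimate, split each summand $m$ in the same way as before. One piece is the coefficient difference $|\Gamma^{(\beta_{\mathrm{src}})}-\Gamma^{(\beta_{\mathrm{tar}})}|\,B_{\chi}$. The other is $B_{\Gamma}$ times the autocorrelation difference, which is at most $B_{\Gamma}L_{\chi}\big(|\tau_{\mathrm{w}}(\beta_{\mathrm{src}},\cdot)-\tau_{\mathrm{w}}(\beta_{\mathrm{tar}},\cdot)|+|\tau^{(\beta_{\mathrm{src}})}_{p,a,m}(t)-\tau^{(\beta_{\mathrm{tar}})}_{p,a,m}(t)|\big)$. The wall-delay part is bounded by $L_{\tau,\beta}\mathfrak{V}$. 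For the propagation delay, use the far-field form $\tau_{0,\beta}-\frac{2}{c}\mathbf{u}_{\beta}^{\top}\mathbf{r}$. This gives $|\tau_{0,\beta}-\tau_{0,\beta'}|+\frac{2B_{r}}{c}\|\mathbf{u}_{\beta}-\mathbf{u}_{\beta'}\|_{2}$, which produces the $B_{\Gamma}L_{\chi}$ part of $\gamma_{\mathrm{view},\tau}$ and the $2B_{\Gamma}L_{\chi}B_{r}/c$ part of $\gamma_{\mathrm{view},u}$.

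The coefficient difference is telescoped through four factors of $\Gamma$.
\begin{enumerate}
\item The wall amplitude gives $|\kappa_{0}|B_{\varrho}d_{\min}^{-2}L_{A,\beta}\mathfrak{V}$.
\item The spreading factor $d^{-2}$ is handled with the inequality $|d^{-2}-d'^{-2}|\leq 2d_{\max}d_{\min}^{-4}|d-d'|$ from Lemma~\ref{lem:cross_person_transport}. Here $|d^{(\beta)}-d^{(\beta')}|\leq 2\|\mathbf{q}_{\beta}-\mathbf{q}_{\beta'}\|_{2}$ by the reverse triangle inequality, which yields $\gamma_{\mathrm{view},q}$.
\item The wall-delay carrier phase $e^{-j2\pi f_{c}\tau_{\mathrm{w}}}$ uses $|e^{-jx}-e^{-jy}|\leq|x-y|$ and contributes $2\pi f_{c}B_{\Gamma}L_{\tau,\beta}\mathfrak{V}$.
\item The propagation carrier phase is decomposed as in Subsection II-B into $e^{-j2\pi f_{c}\tau_{0,\beta}}$, the initial projected-position factor $e^{j\frac{4\pi}{\lambda_{c}}\mathbf{u}_{\beta}^{\top}\mathbf{r}(0)}$ and the accumulated Doppler factor. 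These give, respectively, $2\pi f_{c}|\tau_{0,\beta}-\tau_{0,\beta'}|$, $\frac{4\pi}{\lambda_{c}}B_{r}\|\mathbf{u}_{\beta}-\mathbf{u}_{\beta'}\|_{2}$, and $2\pi\cdot\frac{2}{\lambda_{c}}T_{\max}B_{v}\|\mathbf{u}_{\beta}-\mathbf{u}_{\beta'}\|_{2}$, each multiplied by $B_{\Gamma}$.
\end{enumerate}
Collecting terms and bounding each of $\|\mathbf{q}_{\beta}-\mathbf{q}_{\beta'}\|_{2}$, $\|\mathbf{u}_{\beta}-\mathbf{u}_{\beta'}\|_{2}$ and $|\tau_{0,\beta}-\tau_{0,\beta'}|$ by $\mathfrak{V}_{\beta,\beta'}$ reproduces the four $\gamma$ constants, up to harmless overcounting.

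The main obstacle is the consistency of the phase bookkeeping. The spreading factor uses the exact distance $d^{(\beta)}$, while the carrier phase uses the far-field linearization. Mixing the exact and linearized delays could leave an uncontrolled residual. I would follow the paper's convention from Lemma~\ref{lem:cross_person_transport}: treat the far-field phase representation as the model for the phase and for the autocorrelation argument, and the exact $d$ only inside the amplitude. I would also check that the sum over the four phase factors, each bounded by its own Lipschitz estimate, does not require constants beyond those listed. If a leftover cross term appears, such as $2\pi f_{c}B_{\Gamma}B_{\chi}$ multiplying $L_{\tau,\beta}$, it is already absorbed in $\gamma_{\mathrm{view},w}$.
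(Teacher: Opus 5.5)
Your proposal is correct and follows the paper's proof essentially step for step. It uses the same split of the signal difference into a coefficient part times $B_{\chi}$ and a kernel part times $B_{\Gamma}L_{\chi}$, the same far-field delay and radial-velocity bounds, the same telescoping of $\Gamma$ through the wall amplitude, the spreading factor and the four carrier-phase factors, and the same passage through $\sqrt{N_{\tau}L_{t}}$, the non-expansiveness of $\mathcal{P}_{\mathrm{cs}}$ and the Lipschitz bound of $\mathcal{R}_{\upsilon}$. Your bookkeeping convention (exact $d$ only in the spreading factor, the linearized delay in both the carrier phase and the autocorrelation argument) is exactly the one the paper adopts, and it is what makes the stated $\gamma$ constants come out.
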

\begin{proof}
Fix one activity label $a\in\mathcal{A}$. The two adjacent domains $\mathfrak{d}_{1}$ and $\mathfrak{d}_{2}$ share the same subject $p_{\mathrm{tar}}$ and the same wall parameter $\boldsymbol{\xi}_{\mathrm{src}}$, whereas the view changes from $\beta_{\mathrm{src}}$ to $\beta_{\mathrm{tar}}$. The wall descriptor is fixed, but the effective wall response still changes with the view through $A_{\mathrm{w}}(\omega_{c};\beta,\boldsymbol{\xi}_{\mathrm{src}})$ and $\tau_{\mathrm{w}}(\beta,\boldsymbol{\xi}_{\mathrm{src}})$. For one scatterer index $m$ and one slow-time instant $t$, the two-way distance difference satisfies
\begin{equation}
\left|
d_{p_{\mathrm{tar}},a,m}^{(\beta_{\mathrm{src}})}(t)
-
d_{p_{\mathrm{tar}},a,m}^{(\beta_{\mathrm{tar}})}(t)
\right|
\leq
2
\left\|
\mathbf{q}_{\beta_{\mathrm{src}}}
-
\mathbf{q}_{\beta_{\mathrm{tar}}}
\right\|_{2},
\end{equation}
because only the radar phase-center location changes in the exact expression $d_{p,a,m}^{(\beta)}(t)=2\|\mathbf{r}_{p,a,m}(t)-\mathbf{q}_{\beta}\|_{2}$. The far-field delay linearization from Subsection II-B yields
\begin{equation}
\tau_{p_{\mathrm{tar}},a,m}^{(\beta)}(t)
\approx
\tau_{0,\beta}
-
\frac{2}{c}
\mathbf{u}_{\beta}^{\top}
\mathbf{r}_{p_{\mathrm{tar}},a,m}(t),
\end{equation}
so that
\begin{equation}
\begin{aligned}
\big|
\tau_{p_{\mathrm{tar}},a,m}^{(\beta_{\mathrm{src}})}(t)
-
\tau_{p_{\mathrm{tar}},a,m}^{(\beta_{\mathrm{tar}})}(t)
\big|
&\leq
\left|
\tau_{0,\beta_{\mathrm{src}}}
-
\tau_{0,\beta_{\mathrm{tar}}}
\right|
+
\frac{2}{c}
\left|
\big(
\mathbf{u}_{\beta_{\mathrm{src}}}
-
\mathbf{u}_{\beta_{\mathrm{tar}}}
\big)^{\top}
\mathbf{r}_{p_{\mathrm{tar}},a,m}(t)
\right|\\
&\leq
\left|
\tau_{0,\beta_{\mathrm{src}}}
-
\tau_{0,\beta_{\mathrm{tar}}}
\right|
+
\frac{2B_{r}}{c}
\left\|
\mathbf{u}_{\beta_{\mathrm{src}}}
-
\mathbf{u}_{\beta_{\mathrm{tar}}}
\right\|_{2}.
\end{aligned}
\label{eq:view_delay_difference}
\end{equation}
\par
Likewise, the micro-Doppler frequency relation
\begin{equation}
f_{p_{\mathrm{tar}},a,m}^{(\beta)}(t)
\approx
\frac{2}{\lambda_{c}}
\mathbf{u}_{\beta}^{\top}
\mathbf{v}_{p_{\mathrm{tar}},a,m}(t),
\end{equation}
gives the explicit radial-velocity projection bound
\begin{equation}
\big|
f_{p_{\mathrm{tar}},a,m}^{(\beta_{\mathrm{src}})}(t)
-
f_{p_{\mathrm{tar}},a,m}^{(\beta_{\mathrm{tar}})}(t)
\big|
\leq
\frac{2}{\lambda_{c}}
\left|
\big(
\mathbf{u}_{\beta_{\mathrm{src}}}
-
\mathbf{u}_{\beta_{\mathrm{tar}}}
\big)^{\top}
\mathbf{v}_{p_{\mathrm{tar}},a,m}(t)
\right|
\leq
\frac{2B_{v}}{\lambda_{c}}
\left\|
\mathbf{u}_{\beta_{\mathrm{src}}}
-
\mathbf{u}_{\beta_{\mathrm{tar}}}
\right\|_{2}.
\label{eq:view_radial_velocity_projection}
\end{equation}
\par
The view-Lipschitz wall assumptions yield
\begin{equation}
\left|
A_{\mathrm{w}}(\omega_{c};\beta_{\mathrm{src}},\boldsymbol{\xi}_{\mathrm{src}})
-
A_{\mathrm{w}}(\omega_{c};\beta_{\mathrm{tar}},\boldsymbol{\xi}_{\mathrm{src}})
\right|
\leq
L_{A,\beta}
\mathfrak{V}_{\beta_{\mathrm{src}},\beta_{\mathrm{tar}}},
\label{eq:view_wall_amplitude_difference}
\end{equation}
\begin{equation}
\left|
\tau_{\mathrm{w}}(\beta_{\mathrm{src}},\boldsymbol{\xi}_{\mathrm{src}})
-
\tau_{\mathrm{w}}(\beta_{\mathrm{tar}},\boldsymbol{\xi}_{\mathrm{src}})
\right|
\leq
L_{\tau,\beta}
\mathfrak{V}_{\beta_{\mathrm{src}},\beta_{\mathrm{tar}}}.
\label{eq:view_wall_delay_difference}
\end{equation}
\par
The matched-filtered moving-target signal is compared exactly as in the previous lemma. Adding and subtracting an intermediate term yields
\begin{equation}
\begin{aligned}
\big|
s_{p_{\mathrm{tar}},a}^{(\beta_{\mathrm{src}},\boldsymbol{\xi}_{\mathrm{src}})}(\tau,t)
-
s_{p_{\mathrm{tar}},a}^{(\beta_{\mathrm{tar}},\boldsymbol{\xi}_{\mathrm{src}})}(\tau,t)
\big|
&\leq
\sum_{m=1}^{M}
\big|
\Gamma_{p_{\mathrm{tar}},a,m}^{(\beta_{\mathrm{src}},\boldsymbol{\xi}_{\mathrm{src}})}(t)
-
\Gamma_{p_{\mathrm{tar}},a,m}^{(\beta_{\mathrm{tar}},\boldsymbol{\xi}_{\mathrm{src}})}(t)
\big|
B_{\chi}\\
&\quad+
\sum_{m=1}^{M}
B_{\Gamma}
\big|
\chi_{u}
\big(
\tau-\tau_{\mathrm{w}}^{\mathrm{src}}-\tau_{p_{\mathrm{tar}},a,m}^{(\beta_{\mathrm{src}})}(t)
\big)
-
\chi_{u}
\big(
\tau-\tau_{\mathrm{w}}^{\mathrm{tar}}-\tau_{p_{\mathrm{tar}},a,m}^{(\beta_{\mathrm{tar}})}(t)
\big)
\big|,
\end{aligned}
\end{equation}
where $\tau_{\mathrm{w}}^{\mathrm{src}}=\tau_{\mathrm{w}}(\beta_{\mathrm{src}},\boldsymbol{\xi}_{\mathrm{src}})$ and $\tau_{\mathrm{w}}^{\mathrm{tar}}=\tau_{\mathrm{w}}(\beta_{\mathrm{tar}},\boldsymbol{\xi}_{\mathrm{src}})$. The second term is bounded by
\begin{equation}
\big|
\chi_{u}
\big(
\tau-\tau_{\mathrm{w}}^{\mathrm{src}}-\tau_{p_{\mathrm{tar}},a,m}^{(\beta_{\mathrm{src}})}(t)
\big)
-
\chi_{u}
\big(
\tau-\tau_{\mathrm{w}}^{\mathrm{tar}}-\tau_{p_{\mathrm{tar}},a,m}^{(\beta_{\mathrm{tar}})}(t)
\big)
\big|
\leq
L_{\chi}
\left(
\big|
\tau_{\mathrm{w}}^{\mathrm{src}}
-
\tau_{\mathrm{w}}^{\mathrm{tar}}
\big|
+
\big|
\tau_{p_{\mathrm{tar}},a,m}^{(\beta_{\mathrm{src}})}(t)
-
\tau_{p_{\mathrm{tar}},a,m}^{(\beta_{\mathrm{tar}})}(t)
\big|
\right).
\label{eq:view_kernel_difference}
\end{equation}
\par
To control the coefficient difference, note that the scattering coefficient remains unchanged, whereas the view-dependent wall factor, the spherical-spreading factor, the wall-delay phase, the nominal reference delay phase, the initial projected-position phase, and the accumulated radial-velocity phase all vary with the view. By adding and subtracting suitable intermediate terms,
\begin{equation}
\begin{aligned}
\big|
\Gamma_{p_{\mathrm{tar}},a,m}^{(\beta_{\mathrm{src}},\boldsymbol{\xi}_{\mathrm{src}})}(t)
-
\Gamma_{p_{\mathrm{tar}},a,m}^{(\beta_{\mathrm{tar}},\boldsymbol{\xi}_{\mathrm{src}})}(t)
\big|
&\leq
|\kappa_{0}|B_{\varrho}d_{\min}^{-2}
\left|
A_{\mathrm{w}}(\omega_{c};\beta_{\mathrm{src}},\boldsymbol{\xi}_{\mathrm{src}})
-
A_{\mathrm{w}}(\omega_{c};\beta_{\mathrm{tar}},\boldsymbol{\xi}_{\mathrm{src}})
\right|\\
&\quad+
4|\kappa_{0}|B_{\varrho}B_{\mathrm{w}}d_{\max}d_{\min}^{-4}
\left\|
\mathbf{q}_{\beta_{\mathrm{src}}}
-
\mathbf{q}_{\beta_{\mathrm{tar}}}
\right\|_{2}\\
&\quad+
2\pi f_{c}B_{\Gamma}
\left|
\tau_{\mathrm{w}}^{\mathrm{src}}
-
\tau_{\mathrm{w}}^{\mathrm{tar}}
\right|\\
&\quad+
\frac{4\pi B_{\Gamma}B_{r}}{\lambda_{c}}
\left\|
\mathbf{u}_{\beta_{\mathrm{src}}}
-
\mathbf{u}_{\beta_{\mathrm{tar}}}
\right\|_{2}
+
2\pi B_{\Gamma}f_{c}
\left|
\tau_{0,\beta_{\mathrm{src}}}
-
\tau_{0,\beta_{\mathrm{tar}}}
\right|\\
&\quad+
2\pi B_{\Gamma}
\left|
\int_{0}^{t}
\big(
f_{p_{\mathrm{tar}},a,m}^{(\beta_{\mathrm{src}})}(\zeta)
-
f_{p_{\mathrm{tar}},a,m}^{(\beta_{\mathrm{tar}})}(\zeta)
\big)
d\zeta
\right|,
\end{aligned}
\label{eq:view_coefficient_difference}
\end{equation}
where the second and third lines collect the wall-delay, initial projected-position, and nominal reference-delay phases, and the last line is controlled by the radial-velocity projection bound:
\begin{equation}
\left|
\int_{0}^{t}
\big(
f_{p_{\mathrm{tar}},a,m}^{(\beta_{\mathrm{src}})}(\zeta)
-
f_{p_{\mathrm{tar}},a,m}^{(\beta_{\mathrm{tar}})}(\zeta)
\big)
d\zeta
\right|
\leq
\frac{2T_{\max}B_{v}}{\lambda_{c}}
\left\|
\mathbf{u}_{\beta_{\mathrm{src}}}
-
\mathbf{u}_{\beta_{\mathrm{tar}}}
\right\|_{2}.
\label{eq:view_phase_integral_difference}
\end{equation}
\par
Hence, by combining \eqref{eq:view_delay_difference}, \eqref{eq:view_radial_velocity_projection}, \eqref{eq:view_wall_amplitude_difference}, \eqref{eq:view_wall_delay_difference}, \eqref{eq:view_kernel_difference}, \eqref{eq:view_coefficient_difference}, and \eqref{eq:view_phase_integral_difference} with the definition of $\mathfrak{V}_{\beta_{\mathrm{src}},\beta_{\mathrm{tar}}}$,
\begin{equation}
\big|
s_{p_{\mathrm{tar}},a}^{(\beta_{\mathrm{src}},\boldsymbol{\xi}_{\mathrm{src}})}(\tau,t)
-
s_{p_{\mathrm{tar}},a}^{(\beta_{\mathrm{tar}},\boldsymbol{\xi}_{\mathrm{src}})}(\tau,t)
\big|
\leq
M
\big(
\gamma_{\mathrm{view},q}
+
\gamma_{\mathrm{view},w}
+
\gamma_{\mathrm{view},u}
+
\gamma_{\mathrm{view},\tau}
\big)
\mathfrak{V}_{\beta_{\mathrm{src}},\beta_{\mathrm{tar}}}.
\end{equation}
\par
Therefore,
\begin{equation}
\left\|
\widetilde{\mathbf{Z}}_{\mathfrak{d}_{1},a}
-
\widetilde{\mathbf{Z}}_{\mathfrak{d}_{2},a}
\right\|_{F}
\leq
\Gamma_{\mathrm{view}}
\mathfrak{V}_{\beta_{\mathrm{src}},\beta_{\mathrm{tar}}},
\end{equation}
where the non-expansiveness of $\mathcal{P}_{\mathrm{cs}}$ has again been used. Applying the Lipschitz continuity of $\mathcal{R}_{\upsilon}$ yields
\begin{equation}
W_{1,\|\cdot\|_{2}}
\big(
\widetilde{\mathbb{P}}_{1,\upsilon}^{(a)},
\widetilde{\mathbb{P}}_{2,\upsilon}^{(a)}
\big)
\leq
L_{\upsilon}\Gamma_{\mathrm{view}}
\mathfrak{V}_{\beta_{\mathrm{src}},\beta_{\mathrm{tar}}}.
\end{equation}
\par
Taking the supremum over $a\in\mathcal{A}$ proves the claim.\par
\end{proof}
\begin{lemma}\label{lem:cross_wall_transport}
Let
\begin{equation}
\gamma_{\mathrm{wall},A}
:=
|\kappa_{0}|B_{\varrho}B_{\chi}d_{\min}^{-2},
\end{equation}
and
\begin{equation}
\gamma_{\mathrm{wall},\tau}
:=
2\pi f_{c}B_{\Gamma}B_{\chi}
+
B_{\Gamma}L_{\chi}.
\end{equation}
\par
Define
\begin{equation}
\Gamma_{\mathrm{wall}}
:=
M\sqrt{N_{\tau}L_{t}}
\big(
\gamma_{\mathrm{wall},A}
+
\gamma_{\mathrm{wall},\tau}
\big).
\end{equation}
\par
Then
\begin{equation}
\mathfrak{T}_{\upsilon}^{(2,3)}
\leq
L_{\upsilon}\Gamma_{\mathrm{wall}}
\mathfrak{W}_{\boldsymbol{\xi}_{\mathrm{src}},\boldsymbol{\xi}_{\mathrm{tar}}}^{(\beta_{\mathrm{tar}})}.
\end{equation}
\end{lemma}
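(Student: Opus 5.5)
The proof follows the same propagation chain as Lemmas~\ref{lem:cross_person_transport} and~\ref{lem:cross_view_transport}: matched-filter echo perturbation, then row-wise clutter suppression, then the representation map. Fix an activity label $a\in\mathcal{A}$. The adjacent domains $\mathfrak{d}_{2}$ and $\mathfrak{d}_{3}$ share the subject $p_{\mathrm{tar}}$ and the view $\beta_{\mathrm{tar}}$. Only the wall descriptor changes, from $\boldsymbol{\xi}_{\mathrm{src}}$ to $\boldsymbol{\xi}_{\mathrm{tar}}$. So the trajectories $\mathbf{r}_{p_{\mathrm{tar}},a,m}(t)$, the velocities, the reflectivities $\varrho_{p_{\mathrm{tar}},a,m}(t)$, the distances $d_{p_{\mathrm{tar}},a,m}^{(\beta_{\mathrm{tar}})}(t)$ and the delays $\tau_{p_{\mathrm{tar}},a,m}^{(\beta_{\mathrm{tar}})}(t)$ coincide in both domains. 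The signal-only components $s_{p_{\mathrm{tar}},a}^{(\beta_{\mathrm{tar}},\boldsymbol{\xi})}(\tau,t)$ therefore differ only through $A_{\mathrm{w}}(\omega_{c};\beta_{\mathrm{tar}},\boldsymbol{\xi})$ and $\tau_{\mathrm{w}}(\beta_{\mathrm{tar}},\boldsymbol{\xi})$. These two differences are exactly the summands of $\mathfrak{W}_{\boldsymbol{\xi}_{\mathrm{src}},\boldsymbol{\xi}_{\mathrm{tar}}}^{(\beta_{\mathrm{tar}})}$. No Lipschitz assumption in $\boldsymbol{\xi}$ is needed, because the discrepancy coordinate is defined directly on the wall response.

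First, add and subtract $\Gamma_{p_{\mathrm{tar}},a,m}^{(\beta_{\mathrm{tar}},\boldsymbol{\xi}_{\mathrm{tar}})}(t)\chi_{u}(\tau-\tau_{\mathrm{w}}^{\mathrm{src}}-\tau_{p_{\mathrm{tar}},a,m}^{(\beta_{\mathrm{tar}})}(t))$ in each summand. This bounds the pointwise signal difference by two sums:
\[
\sum_{m=1}^{M}B_{\chi}\big|\Gamma^{(\beta_{\mathrm{tar}},\boldsymbol{\xi}_{\mathrm{src}})}_{p_{\mathrm{tar}},a,m}(t)-\Gamma^{(\beta_{\mathrm{tar}},\boldsymbol{\xi}_{\mathrm{tar}})}_{p_{\mathrm{tar}},a,m}(t)\big|
\quad\text{and}\quad
\sum_{m=1}^{M}B_{\Gamma}L_{\chi}\big|\tau_{\mathrm{w}}^{\mathrm{src}}-\tau_{\mathrm{w}}^{\mathrm{tar}}\big|.
\]
Here $\tau_{\mathrm{w}}^{\mathrm{src}}=\tau_{\mathrm{w}}(\beta_{\mathrm{tar}},\boldsymbol{\xi}_{\mathrm{src}})$ and $\tau_{\mathrm{w}}^{\mathrm{tar}}=\tau_{\mathrm{w}}(\beta_{\mathrm{tar}},\boldsymbol{\xi}_{\mathrm{tar}})$, and the second sum comes from the Lipschitz property of $\chi_{u}$.

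For the coefficient difference, write $\Gamma$ as a common factor $\kappa_{0}\varrho\,d^{-2}e^{-j2\pi f_{c}\tau_{p,a,m}}$ times $A_{\mathrm{w}}e^{-j2\pi f_{c}\tau_{\mathrm{w}}}$. Add and subtract $A_{\mathrm{w}}(\omega_{c};\beta_{\mathrm{tar}},\boldsymbol{\xi}_{\mathrm{tar}})e^{-j2\pi f_{c}\tau_{\mathrm{w}}^{\mathrm{src}}}$. Then apply $|e^{-jx}-e^{-jy}|\leq|x-y|$, together with $|\varrho|\leq B_{\varrho}$, $d\geq d_{\min}$ and $|A_{\mathrm{w}}|\leq B_{\mathrm{w}}$. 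This gives
\[
|\kappa_{0}|B_{\varrho}d_{\min}^{-2}\big|A_{\mathrm{w}}^{\mathrm{src}}-A_{\mathrm{w}}^{\mathrm{tar}}\big|+2\pi f_{c}B_{\Gamma}\big|\tau_{\mathrm{w}}^{\mathrm{src}}-\tau_{\mathrm{w}}^{\mathrm{tar}}\big|,
\]
using $B_{\Gamma}=|\kappa_{0}|B_{\varrho}B_{\mathrm{w}}d_{\min}^{-2}$. Multiplying by $B_{\chi}$ and adding the kernel term gives, for every $(\tau,t)$, the bound
\[
M\big(\gamma_{\mathrm{wall},A}|A_{\mathrm{w}}^{\mathrm{src}}-A_{\mathrm{w}}^{\mathrm{tar}}|+\gamma_{\mathrm{wall},\tau}|\tau_{\mathrm{w}}^{\mathrm{src}}-\tau_{\mathrm{w}}^{\mathrm{tar}}|\big)\leq M(\gamma_{\mathrm{wall},A}+\gamma_{\mathrm{wall},\tau})\,\mathfrak{W}_{\boldsymbol{\xi}_{\mathrm{src}},\boldsymbol{\xi}_{\mathrm{tar}}}^{(\beta_{\mathrm{tar}})}.
\]

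The remaining steps repeat the earlier lemmas. Stacking the $N_{\tau}L_{t}$ sample entries multiplies the entrywise bound by $\sqrt{N_{\tau}L_{t}}$, which gives the Frobenius bound with $\Gamma_{\mathrm{wall}}$. Non-expansiveness of $\mathcal{P}_{\mathrm{cs}}$, an orthogonal projection on each row, transfers this bound to $\|\widetilde{\mathbf{Z}}_{\mathfrak{d}_{2},a}-\widetilde{\mathbf{Z}}_{\mathfrak{d}_{3},a}\|_{F}$. The representation Lipschitz constant $L_{\upsilon}$ then bounds the feature difference. To pass to $W_{1,\|\cdot\|_{2}}$, use the coupling that pairs the two ideal features generated by the same realization of the remaining randomness (activity phase and so on). Its expected cost is at most the deterministic bound. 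Taking $\sup_{a}$ finishes the proof.

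The only delicate point is this coupling step. It requires both ideal laws to be driven by a common underlying realization, so that the synchronous coupling is well defined. This is implicit in the way Lemmas~\ref{lem:cross_person_transport} and~\ref{lem:cross_view_transport} were argued, and I would state it in the same manner. The coefficient algebra itself is routine, since the wall shift never touches the kinematic quantities.
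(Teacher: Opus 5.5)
Your proposal is correct and follows the paper's proof step for step. It uses the same add-and-subtract of the target coefficient against the source kernel, the same amplitude/phase split of $\Gamma$ via $|e^{-jx}-e^{-jy}|\leq|x-y|$, the same $\sqrt{N_{\tau}L_{t}}$ stacking, the non-expansiveness of $\mathcal{P}_{\mathrm{cs}}$, and the Lipschitz transfer through $\mathcal{R}_{\upsilon}$. Your explicit remark that the $W_{1}$ bound relies on a synchronous coupling of the two ideal features only makes precise a step the paper leaves implicit.
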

\begin{proof}
Fix one activity label $a\in\mathcal{A}$. The domains $\mathfrak{d}_{2}$ and $\mathfrak{d}_{3}$ share the same subject $p_{\mathrm{tar}}$ and the same view $\beta_{\mathrm{tar}}$, so only the wall parameter changes. For one scatterer index $m$ and one sample pair $(\tau,t)$, write
\begin{equation}
\begin{aligned}
\big|
s_{p_{\mathrm{tar}},a}^{(\beta_{\mathrm{tar}},\boldsymbol{\xi}_{\mathrm{src}})}(\tau,t)
-
s_{p_{\mathrm{tar}},a}^{(\beta_{\mathrm{tar}},\boldsymbol{\xi}_{\mathrm{tar}})}(\tau,t)
\big|
&\leq
\sum_{m=1}^{M}
\big|
\Gamma_{p_{\mathrm{tar}},a,m}^{(\beta_{\mathrm{tar}},\boldsymbol{\xi}_{\mathrm{src}})}(t)
-
\Gamma_{p_{\mathrm{tar}},a,m}^{(\beta_{\mathrm{tar}},\boldsymbol{\xi}_{\mathrm{tar}})}(t)
\big|
B_{\chi}\\
&\quad+
\sum_{m=1}^{M}
B_{\Gamma}
\big|
\chi_{u}
\big(
\tau-\tau_{\mathrm{w}}(\beta_{\mathrm{tar}},\boldsymbol{\xi}_{\mathrm{src}})
-\tau_{p_{\mathrm{tar}},a,m}^{(\beta_{\mathrm{tar}})}(t)
\big)\\
&\qquad\qquad-
\chi_{u}
\big(
\tau-\tau_{\mathrm{w}}(\beta_{\mathrm{tar}},\boldsymbol{\xi}_{\mathrm{tar}})
-\tau_{p_{\mathrm{tar}},a,m}^{(\beta_{\mathrm{tar}})}(t)
\big)
\big|.
\end{aligned}
\end{equation}
\par
The delay-kernel term is directly bounded by the wall-induced excess-delay difference:
\begin{equation}
\begin{gathered}
\big|
\chi_{u}
\big(
\tau-\tau_{\mathrm{w}}(\beta_{\mathrm{tar}},\boldsymbol{\xi}_{\mathrm{src}})
-\tau_{p_{\mathrm{tar}},a,m}^{(\beta_{\mathrm{tar}})}(t)
\big)
-
\chi_{u}
\big(
\tau-\tau_{\mathrm{w}}(\beta_{\mathrm{tar}},\boldsymbol{\xi}_{\mathrm{tar}})
-\tau_{p_{\mathrm{tar}},a,m}^{(\beta_{\mathrm{tar}})}(t)
\big)
\big|\\
\leq
L_{\chi}
\left|
\tau_{\mathrm{w}}(\beta_{\mathrm{tar}},\boldsymbol{\xi}_{\mathrm{src}})
-
\tau_{\mathrm{w}}(\beta_{\mathrm{tar}},\boldsymbol{\xi}_{\mathrm{tar}})
\right|.
\end{gathered}
\end{equation}
\par
For the coefficient term, only the wall transmission factor and the wall-induced carrier phase change. Hence,
\begin{equation}
\begin{aligned}
\big|
\Gamma_{p_{\mathrm{tar}},a,m}^{(\beta_{\mathrm{tar}},\boldsymbol{\xi}_{\mathrm{src}})}(t)
-
\Gamma_{p_{\mathrm{tar}},a,m}^{(\beta_{\mathrm{tar}},\boldsymbol{\xi}_{\mathrm{tar}})}(t)
\big|
&\leq
|\kappa_{0}|B_{\varrho}d_{\min}^{-2}
\left|
A_{\mathrm{w}}(\omega_{c};\beta_{\mathrm{tar}},\boldsymbol{\xi}_{\mathrm{src}})
-
A_{\mathrm{w}}(\omega_{c};\beta_{\mathrm{tar}},\boldsymbol{\xi}_{\mathrm{tar}})
\right|\\
&\quad+
2\pi f_{c}B_{\Gamma}
\left|
\tau_{\mathrm{w}}(\beta_{\mathrm{tar}},\boldsymbol{\xi}_{\mathrm{src}})
-
\tau_{\mathrm{w}}(\beta_{\mathrm{tar}},\boldsymbol{\xi}_{\mathrm{tar}})
\right|,
\end{aligned}
\end{equation}
where $\big|e^{-jx}-e^{-jy}\big|\leq |x-y|$ has been used again. Therefore,
\begin{equation}
\begin{aligned}
\big|
s_{p_{\mathrm{tar}},a}^{(\beta_{\mathrm{tar}},\boldsymbol{\xi}_{\mathrm{src}})}(\tau,t)
-
s_{p_{\mathrm{tar}},a}^{(\beta_{\mathrm{tar}},\boldsymbol{\xi}_{\mathrm{tar}})}(\tau,t)
\big|
&\leq
M\gamma_{\mathrm{wall},A}
\left|
A_{\mathrm{w}}(\omega_{c};\beta_{\mathrm{tar}},\boldsymbol{\xi}_{\mathrm{src}})
-
A_{\mathrm{w}}(\omega_{c};\beta_{\mathrm{tar}},\boldsymbol{\xi}_{\mathrm{tar}})
\right|\\
&\quad+
M\gamma_{\mathrm{wall},\tau}
\left|
\tau_{\mathrm{w}}(\beta_{\mathrm{tar}},\boldsymbol{\xi}_{\mathrm{src}})
-
\tau_{\mathrm{w}}(\beta_{\mathrm{tar}},\boldsymbol{\xi}_{\mathrm{tar}})
\right|.
\end{aligned}
\end{equation}
\par
After stacking the grid samples and applying the non-expansiveness of $\mathcal{P}_{\mathrm{cs}}$,
\begin{equation}
\left\|
\widetilde{\mathbf{Z}}_{\mathfrak{d}_{2},a}
-
\widetilde{\mathbf{Z}}_{\mathfrak{d}_{3},a}
\right\|_{F}
\leq
\Gamma_{\mathrm{wall}}
\mathfrak{W}_{\boldsymbol{\xi}_{\mathrm{src}},\boldsymbol{\xi}_{\mathrm{tar}}}^{(\beta_{\mathrm{tar}})}.
\end{equation}
\par
Applying the Lipschitz continuity of $\mathcal{R}_{\upsilon}$ gives
\begin{equation}
W_{1,\|\cdot\|_{2}}
\big(
\widetilde{\mathbb{P}}_{2,\upsilon}^{(a)},
\widetilde{\mathbb{P}}_{3,\upsilon}^{(a)}
\big)
\leq
L_{\upsilon}\Gamma_{\mathrm{wall}}
\mathfrak{W}_{\boldsymbol{\xi}_{\mathrm{src}},\boldsymbol{\xi}_{\mathrm{tar}}}^{(\beta_{\mathrm{tar}})}.
\end{equation}
\par
Taking the supremum over $a\in\mathcal{A}$ proves the claim.\par
\end{proof}
The three adjacent transports now admit explicit factor-level upper bounds. It remains to insert them into the unified source-to-target bound and to retain the nuisance residual in explicit form.\par
\begin{theorem}\label{thm:structured_bound}
Let $\delta\in(0,1)$. Under the same high-probability event as the unified source-to-target result in Subsection III-A, the inequality
\begin{equation}
\begin{aligned}
\mathcal{L}_{\mathrm{tar},\upsilon}(h)
\leq{}&
\widehat{\mathcal{L}}_{\mathrm{src},\upsilon}(h)
+
\frac{2\Psi_{\upsilon}^{\mathrm{cmp}}}{\sqrt{N_{\mathrm{src}}}}
+
B_{\ell,\upsilon}
\sqrt{
\frac{\log(1/\delta)}{2N_{\mathrm{src}}}
}
+
\mathfrak{B}_{\upsilon}^{\mathrm{per}}
+
\mathfrak{B}_{\upsilon}^{\mathrm{view}}\\
&+
\mathfrak{B}_{\upsilon}^{\mathrm{wall}}
+
4J_{\upsilon}L_{\upsilon}\mathfrak{N}_{\mathrm{res}}
+
\mathfrak{C}_{\mathrm{pri},\upsilon}\mathfrak{A}_{\mathrm{pri}}
+
\lambda_{\upsilon}^{\star},
\end{aligned}
\end{equation}
holds simultaneously for all $h\in\mathcal{H}_{\upsilon}$, where
\begin{equation}
\mathfrak{B}_{\upsilon}^{\mathrm{per}}
:=
4J_{\upsilon}L_{\upsilon}\Gamma_{\mathrm{per}}
\overline{D}_{p_{\mathrm{src}},p_{\mathrm{tar}}}^{(\mathrm{cp})},
\end{equation}
\begin{equation}
\mathfrak{B}_{\upsilon}^{\mathrm{view}}
:=
4J_{\upsilon}L_{\upsilon}\Gamma_{\mathrm{view}}
\mathfrak{V}_{\beta_{\mathrm{src}},\beta_{\mathrm{tar}}},
\end{equation}
and
\begin{equation}
\mathfrak{B}_{\upsilon}^{\mathrm{wall}}
:=
4J_{\upsilon}L_{\upsilon}\Gamma_{\mathrm{wall}}
\mathfrak{W}_{\boldsymbol{\xi}_{\mathrm{src}},\boldsymbol{\xi}_{\mathrm{tar}}}^{(\beta_{\mathrm{tar}})}.
\end{equation}
\end{theorem}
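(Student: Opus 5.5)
Starting from Theorem~\ref{thm:unified_bound} in its specialized form \eqref{eq:unified_target_risk_bound_specialized}, I would first apply Lemma~\ref{lem:pairwise_transport} to replace $\operatorname{disc}_{\upsilon}^{\mathrm{pair}}(\mathbb{P}_{\mathrm{src},\upsilon},\mathbb{P}_{\mathrm{tar},\upsilon})$ by $W_{1,\mathfrak{c}_{\upsilon}}(\mathbb{P}_{0,\upsilon},\mathbb{P}_{3,\upsilon})$. This works on the same high-probability event, since the replacement is deterministic. All remaining work then goes into bounding this single Wasserstein term. By the triangle inequality for $W_{1,\mathfrak{c}_{\upsilon}}$ (the cost is a metric on $\mathfrak{X}_{\upsilon}\times\mathcal{A}$), I would split it as
\begin{equation*}
W_{1,\mathfrak{c}_{\upsilon}}(\mathbb{P}_{0,\upsilon},\mathbb{P}_{3,\upsilon})
\leq
W_{1,\mathfrak{c}_{\upsilon}}(\mathbb{P}_{0,\upsilon},\widetilde{\mathbb{P}}_{0,\upsilon})
+
W_{1,\mathfrak{c}_{\upsilon}}(\widetilde{\mathbb{P}}_{0,\upsilon},\widetilde{\mathbb{P}}_{3,\upsilon})
+
W_{1,\mathfrak{c}_{\upsilon}}(\widetilde{\mathbb{P}}_{3,\upsilon},\mathbb{P}_{3,\upsilon}).
\end{equation*}
The middle term is handled by Lemma~\ref{lem:path_decomp}, which gives $4J_{\upsilon}(\mathfrak{T}_{\upsilon}^{(0,1)}+\mathfrak{T}_{\upsilon}^{(1,2)}+\mathfrak{T}_{\upsilon}^{(2,3)})+\mathfrak{C}_{\mathrm{pri},\upsilon}\mathfrak{A}_{\mathrm{pri}}$. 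Into this I would insert Lemmas~\ref{lem:cross_person_transport}, \ref{lem:cross_view_transport}, and \ref{lem:cross_wall_transport}. That produces exactly $\mathfrak{B}_{\upsilon}^{\mathrm{per}}+\mathfrak{B}_{\upsilon}^{\mathrm{view}}+\mathfrak{B}_{\upsilon}^{\mathrm{wall}}$ together with the prior term.

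The two outer terms compare actual with ideal feature laws under the same domain and the same label prior. For these I would use the synchronous coupling: on each acquisition, pair $\mathcal{R}_{\upsilon}(\widehat{\mathbf{Z}}_{\mathfrak{d},a})$ with $\mathcal{R}_{\upsilon}(\widetilde{\mathbf{Z}}_{\mathfrak{d},a})$, keeping the label fixed. This kills the indicator part of $\mathfrak{c}_{\upsilon}$ and leaves
\begin{equation*}
4J_{\upsilon}\,\mathbb{E}\|\mathcal{R}_{\upsilon}(\widehat{\mathbf{Z}})-\mathcal{R}_{\upsilon}(\widetilde{\mathbf{Z}})\|_{2}
\leq
4J_{\upsilon}L_{\upsilon}\,\mathbb{E}\|\mathcal{P}_{\mathrm{cs}}(\mathbf{K}+\mathbf{W})\|_{F}.
\end{equation*}
Non-expansiveness of $\mathcal{P}_{\mathrm{cs}}$ then lets me bound this entrywise over the $N_{\tau}L_{t}$ grid.

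The main obstacle is matching these two outer contributions to the single quantity $\mathfrak{N}_{\mathrm{res}}$. The noise part is easy: $c_{\mathrm{sg}}\sqrt{N_{\tau}L_{t}}\,\varsigma_{\mathrm{n}}$ at source and at target follows from the sub-Gaussian first-moment bound and Jensen. The clutter part is harder, because $\mathfrak{N}_{\mathrm{res}}$ contains only the clutter \emph{difference} $b_{z}^{(\mathrm{src})}-b_{z}^{(\mathrm{tar})}$, not each clutter separately. My first attempt would exploit that matched clutter is slow-time invariant, so it is annihilated by the row-wise mean cancellation $\mathcal{P}_{\mathrm{cs}}$. If that holds, the only surviving clutter effect is through any slow-time-varying residual, which I would bound by the source-target difference. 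To make this rigorous, I would instead define the two outer couplings relative to a common reference clutter: transport source actual to ideal-plus-source-clutter, and target actual to ideal-plus-target-clutter. The clutter terms then only appear jointly, as a difference, when the path is closed, and are bounded by $\sqrt{N_{\tau}L_{t}}\sup|b_{z}^{(\mathrm{src})}-b_{z}^{(\mathrm{tar})}|$. Summing everything gives $4J_{\upsilon}L_{\upsilon}\mathfrak{N}_{\mathrm{res}}$, and substituting back into the unified bound yields the stated inequality uniformly over $h\in\mathcal{H}_{\upsilon}$.
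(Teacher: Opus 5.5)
\textbf{Overall.} Your decomposition differs from the paper's, and the step you flag as the main obstacle contains a genuine gap.

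\textbf{Where your route departs from the paper's.} You split $W_{1,\mathfrak{c}_{\upsilon}}(\mathbb{P}_{0,\upsilon},\mathbb{P}_{3,\upsilon})$ as actual$\to$ideal, ideal$\to$ideal, ideal$\to$actual. The paper instead couples the actual source and target observations directly. It writes $\widehat{\mathbf{Z}}_{\mathfrak{d}_{q},a}=\widetilde{\mathbf{Z}}_{\mathfrak{d}_{q},a}+\mathbf{R}_{\mathfrak{d}_{q},a}$ for $q\in\{0,3\}$, so that only $\mathbf{R}_{\mathfrak{d}_{0},a}-\mathbf{R}_{\mathfrak{d}_{3},a}$ enters. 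That is where the clutter \emph{difference} in $\mathfrak{N}_{\mathrm{res}}$ comes from.

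Set $\mathbf{K}_{0}:=\mathbf{K}^{(\beta_{\mathrm{src}},\boldsymbol{\xi}_{\mathrm{src}})}$ and $\mathbf{K}_{3}:=\mathbf{K}^{(\beta_{\mathrm{tar}},\boldsymbol{\xi}_{\mathrm{tar}})}$, and define $\mathbf{W}_{0},\mathbf{W}_{3}$ analogously. You correctly see that your naive split produces $\|\mathcal{P}_{\mathrm{cs}}\mathbf{K}_{0}\|_{F}+\|\mathcal{P}_{\mathrm{cs}}\mathbf{K}_{3}\|_{F}$, which $\mathfrak{N}_{\mathrm{res}}$ does not dominate. The ``annihilated by $\mathcal{P}_{\mathrm{cs}}$'' escape would need exact slow-time invariance of $b^{(\beta,\boldsymbol{\xi})}$, which is never assumed.

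\textbf{The gap is in your repair.} Once the outer steps stop at ``ideal plus own clutter'', the middle term compares $\operatorname{Law}\mathcal{R}_{\upsilon}(\widetilde{\mathbf{Z}}_{\mathfrak{d}_{0},a}+\mathcal{P}_{\mathrm{cs}}\mathbf{K}_{0})$ with $\operatorname{Law}\mathcal{R}_{\upsilon}(\widetilde{\mathbf{Z}}_{\mathfrak{d}_{3},a}+\mathcal{P}_{\mathrm{cs}}\mathbf{K}_{3})$. It no longer compares $\widetilde{\mathbb{P}}_{0,\upsilon}$ with $\widetilde{\mathbb{P}}_{3,\upsilon}$. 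So you cannot invoke Lemma~\ref{lem:path_decomp}, nor the feature-level conclusions $\mathfrak{T}_{\upsilon}^{(q,q+1)}$ of the three transport lemmas. Because $\mathcal{R}_{\upsilon}$ is nonlinear, the clutter also cannot be added back in feature space.

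Here is a concrete failure. Take equal view and wall, so $\mathbf{K}_{0}=\mathbf{K}_{3}$, and let $\widetilde{\mathbf{Z}}_{\mathfrak{d}_{3},a}=-\widetilde{\mathbf{Z}}_{\mathfrak{d}_{0},a}$ (a reflectivity sign flip). Then the ideal RTM laws coincide, while the cluttered ones generically differ once $\mathcal{P}_{\mathrm{cs}}\mathbf{K}_{0}\neq\mathbf{0}$.

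\textbf{How to close it.} Chain at the matrix level before applying $\mathcal{R}_{\upsilon}$. The proofs of Lemmas~\ref{lem:cross_person_transport}--\ref{lem:cross_wall_transport} actually establish Frobenius bounds on $\|\widetilde{\mathbf{Z}}_{\mathfrak{d}_{q},a}-\widetilde{\mathbf{Z}}_{\mathfrak{d}_{q+1},a}\|_{F}$. By linearity of $\mathcal{P}_{\mathrm{cs}}$, for each $a$,
\begin{equation*}
\begin{aligned}
\|\widehat{\mathbf{Z}}_{\mathfrak{d}_{0},a}-\widehat{\mathbf{Z}}_{\mathfrak{d}_{3},a}\|_{F}
\leq{}&
\Gamma_{\mathrm{per}}\overline{D}^{(\mathrm{cp})}_{p_{\mathrm{src}},p_{\mathrm{tar}}}
+\Gamma_{\mathrm{view}}\mathfrak{V}_{\beta_{\mathrm{src}},\beta_{\mathrm{tar}}}
+\Gamma_{\mathrm{wall}}\mathfrak{W}^{(\beta_{\mathrm{tar}})}_{\boldsymbol{\xi}_{\mathrm{src}},\boldsymbol{\xi}_{\mathrm{tar}}}\\
&+\|\mathcal{P}_{\mathrm{cs}}(\mathbf{K}_{0}-\mathbf{K}_{3})\|_{F}
+\|\mathcal{P}_{\mathrm{cs}}\mathbf{W}_{0}\|_{F}
+\|\mathcal{P}_{\mathrm{cs}}\mathbf{W}_{3}\|_{F}.
\end{aligned}
\end{equation*}
Then proceed as follows:
\begin{enumerate}
\item Apply $L_{\upsilon}$ once, and only to the admissible observations in $\mathfrak{Z}$.
\item Take expectations; the last three terms are then bounded by $\mathfrak{N}_{\mathrm{res}}$.
\item Treat labels by a single coupling with mismatch mass $\frac{1}{2}\sum_{a}|\alpha_{a}^{(\mathfrak{d}_{0})}-\alpha_{a}^{(\mathfrak{d}_{3})}|\leq\mathfrak{A}_{\mathrm{pri}}$, at cost at most $\mathfrak{C}_{\mathrm{pri},\upsilon}$.
\end{enumerate}
This proves the theorem and makes the three-piece split unnecessary: it is the paper's direct coupling, finished at the matrix level.

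\textbf{A remark on the paper's own step.} The paper's intermediate claim $W_{1,\mathfrak{c}_{\upsilon}}(\mathbb{P}_{0,\upsilon},\mathbb{P}_{3,\upsilon})\leq W_{1,\mathfrak{c}_{\upsilon}}(\widetilde{\mathbb{P}}_{0,\upsilon},\widetilde{\mathbb{P}}_{3,\upsilon})+4J_{\upsilon}L_{\upsilon}\mathfrak{N}_{\mathrm{res}}$ has the same defect. Its display controls $L_{\upsilon}\|\widetilde{\mathbf{Z}}_{\mathfrak{d}_{0},a}-\widetilde{\mathbf{Z}}_{\mathfrak{d}_{3},a}\|_{F}$, not the ideal feature transport, and the example above with small noise violates it. The matrix-level chain is what actually justifies the stated bound.
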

\begin{proof}
The unified result from Subsection III-A gives, on a probability-$1-\delta$ event independent of the present factorization,
\begin{equation}
\mathcal{L}_{\mathrm{tar},\upsilon}(h)
\leq
\widehat{\mathcal{L}}_{\mathrm{src},\upsilon}(h)
+
\frac{2\Psi_{\upsilon}^{\mathrm{cmp}}}{\sqrt{N_{\mathrm{src}}}}
+
B_{\ell,\upsilon}
\sqrt{
\frac{\log(1/\delta)}{2N_{\mathrm{src}}}
}
+
W_{1,\mathfrak{c}_{\upsilon}}
\big(
\mathbb{P}_{0,\upsilon},
\mathbb{P}_{3,\upsilon}
\big)
+
\lambda_{\upsilon}^{\star},
\end{equation}
simultaneously for all $h\in\mathcal{H}_{\upsilon}$. Hence only the source-to-target Wasserstein term needs to be refined.\par
Fix one activity label $a\in\mathcal{A}$. Let the full clutter-suppressed source and target observations under the direct source-target coupling be written as
\begin{equation}
\widehat{\mathbf{Z}}_{\mathfrak{d}_{0},a}
=
\widetilde{\mathbf{Z}}_{\mathfrak{d}_{0},a}
+
\mathbf{R}_{\mathfrak{d}_{0},a},
\qquad
\widehat{\mathbf{Z}}_{\mathfrak{d}_{3},a}
=
\widetilde{\mathbf{Z}}_{\mathfrak{d}_{3},a}
+
\mathbf{R}_{\mathfrak{d}_{3},a},
\end{equation}
where $\mathbf{R}_{\mathfrak{d}_{0},a}$ and $\mathbf{R}_{\mathfrak{d}_{3},a}$ collect the clutter/noise residuals after row-wise clutter suppression. Since $\mathcal{P}_{\mathrm{cs}}$ is non-expansive and $\mathcal{R}_{\upsilon}$ is $L_{\upsilon}$-Lipschitz,
\begin{equation}
\left\|
\mathcal{R}_{\upsilon}
\big(
\widehat{\mathbf{Z}}_{\mathfrak{d}_{0},a}
\big)
-
\mathcal{R}_{\upsilon}
\big(
\widehat{\mathbf{Z}}_{\mathfrak{d}_{3},a}
\big)
\right\|_{2}
\leq
L_{\upsilon}
\left\|
\widehat{\mathbf{Z}}_{\mathfrak{d}_{0},a}
-
\widehat{\mathbf{Z}}_{\mathfrak{d}_{3},a}
\right\|_{F}
\leq
L_{\upsilon}
\left\|
\widetilde{\mathbf{Z}}_{\mathfrak{d}_{0},a}
-
\widetilde{\mathbf{Z}}_{\mathfrak{d}_{3},a}
\right\|_{F}
+
L_{\upsilon}
\left\|
\mathbf{R}_{\mathfrak{d}_{0},a}
-
\mathbf{R}_{\mathfrak{d}_{3},a}
\right\|_{F}.
\end{equation}
\par
By construction of $\mathfrak{N}_{\mathrm{res}}$, the residual part satisfies
\begin{equation}
\mathbb{E}
\left[
\left\|
\mathbf{R}_{\mathfrak{d}_{0},a}
-
\mathbf{R}_{\mathfrak{d}_{3},a}
\right\|_{F}
\right]
\leq
\mathfrak{N}_{\mathrm{res}},
\end{equation}
where the clutter portion is dominated by the first addend in $\mathfrak{N}_{\mathrm{res}}$ and the sub-Gaussian noise portion is dominated by the second addend through the first-to-second-moment comparison encoded by $c_{\mathrm{sg}}$.\par
Consequently, after optimizing over all same-label couplings and then over the label coupling exactly as in Lemma~\ref{lem:path_decomp}, one obtains
\begin{equation}
W_{1,\mathfrak{c}_{\upsilon}}
\big(
\mathbb{P}_{0,\upsilon},
\mathbb{P}_{3,\upsilon}
\big)
\leq
W_{1,\mathfrak{c}_{\upsilon}}
\big(
\widetilde{\mathbb{P}}_{0,\upsilon},
\widetilde{\mathbb{P}}_{3,\upsilon}
\big)
+
4J_{\upsilon}L_{\upsilon}\mathfrak{N}_{\mathrm{res}}.
\end{equation}
\par
Applying Lemma~\ref{lem:path_decomp} now gives
\begin{equation}
W_{1,\mathfrak{c}_{\upsilon}}
\big(
\mathbb{P}_{0,\upsilon},
\mathbb{P}_{3,\upsilon}
\big)
\leq
4J_{\upsilon}
\left(
\mathfrak{T}_{\upsilon}^{(0,1)}
+
\mathfrak{T}_{\upsilon}^{(1,2)}
+
\mathfrak{T}_{\upsilon}^{(2,3)}
\right)
+
\mathfrak{C}_{\mathrm{pri},\upsilon}\mathfrak{A}_{\mathrm{pri}}
+
4J_{\upsilon}L_{\upsilon}\mathfrak{N}_{\mathrm{res}}.
\end{equation}
\par
Finally, Lemmas~\ref{lem:cross_person_transport}, \ref{lem:cross_view_transport}, and~\ref{lem:cross_wall_transport} imply
\begin{equation}
\mathfrak{T}_{\upsilon}^{(0,1)}
+
\mathfrak{T}_{\upsilon}^{(1,2)}
+
\mathfrak{T}_{\upsilon}^{(2,3)}
\leq
L_{\upsilon}\Gamma_{\mathrm{per}}
\overline{D}_{p_{\mathrm{src}},p_{\mathrm{tar}}}^{(\mathrm{cp})}
+
L_{\upsilon}\Gamma_{\mathrm{view}}
\mathfrak{V}_{\beta_{\mathrm{src}},\beta_{\mathrm{tar}}}
+
L_{\upsilon}\Gamma_{\mathrm{wall}}
\mathfrak{W}_{\boldsymbol{\xi}_{\mathrm{src}},\boldsymbol{\xi}_{\mathrm{tar}}}^{(\beta_{\mathrm{tar}})}.
\end{equation}
\par
Substituting this bound into the unified source-to-target inequality proves the theorem.\par
\end{proof}
\begin{corollary}\label{cor:structured_bound_erm}
Under the same probability event as in Theorem~\ref{thm:structured_bound}, the source empirical risk minimizer satisfies
\begin{equation}
\begin{aligned}
\mathcal{L}_{\mathrm{tar},\upsilon}
\big(
\widehat{h}_{\mathrm{src},\upsilon}
\big)
\leq{}&
\widehat{\mathcal{L}}_{\mathrm{src},\upsilon}
\big(
\widehat{h}_{\mathrm{src},\upsilon}
\big)
+
\frac{2\Psi_{\upsilon}^{\mathrm{cmp}}}{\sqrt{N_{\mathrm{src}}}}
+
B_{\ell,\upsilon}
\sqrt{
\frac{\log(1/\delta)}{2N_{\mathrm{src}}}
}
+
\mathfrak{B}_{\upsilon}^{\mathrm{per}}\\
&+
\mathfrak{B}_{\upsilon}^{\mathrm{view}}
+
\mathfrak{B}_{\upsilon}^{\mathrm{wall}}
+
4J_{\upsilon}L_{\upsilon}\mathfrak{N}_{\mathrm{res}}
+
\mathfrak{C}_{\mathrm{pri},\upsilon}\mathfrak{A}_{\mathrm{pri}}
+
\lambda_{\upsilon}^{\star}.
\end{aligned}
\end{equation}
\par
Since $\mathcal{L}_{\mathrm{tar},\upsilon}^{\star}\geq 0$, the same right-hand side also upper-bounds the excess target risk $\mathcal{E}_{\mathrm{tar},\upsilon}(\widehat{h}_{\mathrm{src},\upsilon})$.\par
\end{corollary}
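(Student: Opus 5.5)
The corollary is a direct specialization of Theorem~\ref{thm:structured_bound}, so I would argue in exactly the way Corollary~\ref{cor:unified_bound_erm} was obtained from Theorem~\ref{thm:unified_bound}. The plan is to fix the high-probability event from Theorem~\ref{thm:unified_bound}, which has probability at least $1-\delta$ over $\mathcal{S}_{\mathrm{src},\upsilon}$. On that event Theorem~\ref{thm:structured_bound} gives its inequality simultaneously for every $h\in\mathcal{H}_{\upsilon}$.

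The key step is to check that $\widehat{h}_{\mathrm{src},\upsilon}$ is a legitimate choice of $h$. It is a member of $\mathcal{H}_{\upsilon}$, because the argmin set over the compact parameter set $\mathfrak{P}_{\upsilon}$ was shown in Subsection II-E to be nonempty. It is, however, a data-dependent hypothesis. This is harmless because the bound holds uniformly over $\mathcal{H}_{\upsilon}$ on the fixed event. The uniformity comes from the supremum in the McDiarmid/Rademacher step of Theorem~\ref{thm:unified_bound}. The deterministic decomposition of the Wasserstein term does not depend on $h$ at all. Substituting $h=\widehat{h}_{\mathrm{src},\upsilon}$ therefore yields the displayed inequality term by term: $\mathfrak{B}_{\upsilon}^{\mathrm{per}}$, $\mathfrak{B}_{\upsilon}^{\mathrm{view}}$, $\mathfrak{B}_{\upsilon}^{\mathrm{wall}}$, $4J_{\upsilon}L_{\upsilon}\mathfrak{N}_{\mathrm{res}}$, $\mathfrak{C}_{\mathrm{pri},\upsilon}\mathfrak{A}_{\mathrm{pri}}$ and $\lambda_{\upsilon}^{\star}$ are all independent of $h$.

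For the excess-risk statement I would note that $\ell_{\mathrm{ce}}\geq 0$ by Subsection II-E, so every target risk is nonnegative. Hence $\mathcal{L}_{\mathrm{tar},\upsilon}^{\star}=\inf_{h}\mathcal{L}_{\mathrm{tar},\upsilon}(h)\geq 0$, and
\[
\mathcal{E}_{\mathrm{tar},\upsilon}(\widehat{h}_{\mathrm{src},\upsilon})
=\mathcal{L}_{\mathrm{tar},\upsilon}(\widehat{h}_{\mathrm{src},\upsilon})-\mathcal{L}_{\mathrm{tar},\upsilon}^{\star}
\leq\mathcal{L}_{\mathrm{tar},\upsilon}(\widehat{h}_{\mathrm{src},\upsilon}).
\]
The same right-hand side therefore bounds the excess risk.

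There is no substantial obstacle. The only point needing care is the one already addressed: the event must be the uniform one, not a per-$h$ event, so that plugging in the data-dependent minimizer is valid. All the analytical work is contained in Theorem~\ref{thm:structured_bound} and Lemmas~\ref{lem:path_decomp}--\ref{lem:cross_wall_transport}.
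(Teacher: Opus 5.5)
Your proposal is correct and matches the paper's proof. You apply Theorem~\ref{thm:structured_bound} with $h=\widehat{h}_{\mathrm{src},\upsilon}$, and you get the excess-risk claim from $\mathcal{E}_{\mathrm{tar},\upsilon}(\widehat{h}_{\mathrm{src},\upsilon})\leq\mathcal{L}_{\mathrm{tar},\upsilon}(\widehat{h}_{\mathrm{src},\upsilon})$ using $\mathcal{L}_{\mathrm{tar},\upsilon}^{\star}\geq 0$. Your remark that the bound holds uniformly over $\mathcal{H}_{\upsilon}$ on a fixed event, so the data-dependent minimizer may be substituted, makes explicit a point the paper leaves implicit.
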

\begin{proof}
Theorem~\ref{thm:structured_bound} is applied with $h=\widehat{h}_{\mathrm{src},\upsilon}$. The excess-risk claim follows from
\begin{equation}
\mathcal{E}_{\mathrm{tar},\upsilon}
\big(
\widehat{h}_{\mathrm{src},\upsilon}
\big)
=
\mathcal{L}_{\mathrm{tar},\upsilon}
\big(
\widehat{h}_{\mathrm{src},\upsilon}
\big)
-
\mathcal{L}_{\mathrm{tar},\upsilon}^{\star}
\leq
\mathcal{L}_{\mathrm{tar},\upsilon}
\big(
\widehat{h}_{\mathrm{src},\upsilon}
\big).
\end{equation}
\par
\end{proof}
The decomposition above immediately specializes to the three pure transfer regimes. If only cross-person shift is present, namely $\beta_{\mathrm{src}}=\beta_{\mathrm{tar}}$, $\boldsymbol{\xi}_{\mathrm{src}}=\boldsymbol{\xi}_{\mathrm{tar}}$, and $\alpha_{a}^{(\mathfrak{d}_{1})}=\alpha_{a}^{(\mathfrak{d}_{2})}=\alpha_{a}^{(\mathfrak{d}_{3})}$ for every $a\in\mathcal{A}$, then $\mathfrak{B}_{\upsilon}^{\mathrm{view}}=\mathfrak{B}_{\upsilon}^{\mathrm{wall}}=0$ and only $\mathfrak{B}_{\upsilon}^{\mathrm{per}}$ remains. If only cross-view shift is present, then $\mathfrak{B}_{\upsilon}^{\mathrm{per}}=\mathfrak{B}_{\upsilon}^{\mathrm{wall}}=0$ after the same equal-domain substitutions. If only cross-wall shift is present, then $\mathfrak{B}_{\upsilon}^{\mathrm{per}}=\mathfrak{B}_{\upsilon}^{\mathrm{view}}=0$. Hence the general bound recovers the three canonical TWR HAR transfer settings by simply nulling the inactive physical coordinates and their corresponding activity-prior fragments.\par

\subsection{Bound Tightening via Physical Low-Dimensional Representations}
Subsection III-B leaves the representation choice only through the deterministic quantities $\Psi_{\upsilon}^{\mathrm{cmp}}$, $B_{\ell,\upsilon}$, $J_{\upsilon}L_{\upsilon}$, $J_{\upsilon}B_{\varphi,\upsilon}$, and $\lambda_{\upsilon}^{\star}$. The present subsection isolates this dependence and compares the three image-based inputs with the physics-guided low-dimensional input on the same source-to-target transfer instance. The comparison is motivated by recent few-shot or augmentation-oriented TWR HAR studies \cite{ZhouCorruption2024,ThroughWallSmallSample2018,WiringEffectsGAN2021}, as well as by representation-enhanced ConvGRU and adaptive Doppler modeling results \cite{PhysicsAwareGAN2023,Trajectory3DChannel2021}. Throughout this subsection, the representation set $\mathcal{V}=\{\mathrm{rt},\mathrm{dt},\mathrm{md},\mathrm{phy}\}$, the layer radii $\{\Omega_{q,\upsilon}\}_{q=0}^{Q_{\mathrm{net}}-1}$ from Subsection II-D, and all physical shift constants from Subsection III-B retain their earlier definitions and are not redefined.\par
Define the image-representation subset as
\begin{equation}
\mathcal{V}_{\mathrm{img}}
:=
\{\mathrm{rt},\mathrm{dt},\mathrm{md}\},
\end{equation}
and aggregate the representation-independent source-to-target physical remainder into
\begin{equation}
\mathfrak{E}_{\mathrm{dom}}
:=
\Gamma_{\mathrm{per}}
\overline{D}_{p_{\mathrm{src}},p_{\mathrm{tar}}}^{(\mathrm{cp})}
+
\Gamma_{\mathrm{view}}
\mathfrak{V}_{\beta_{\mathrm{src}},\beta_{\mathrm{tar}}}
+
\Gamma_{\mathrm{wall}}
\mathfrak{W}_{\boldsymbol{\xi}_{\mathrm{src}},\boldsymbol{\xi}_{\mathrm{tar}}}^{(\beta_{\mathrm{tar}})}
+
\mathfrak{N}_{\mathrm{res}},
\end{equation}
where $\mathfrak{E}_{\mathrm{dom}}\geq 0$ because each addend is nonnegative by construction. The symbol $\mathfrak{E}_{\mathrm{dom}}$ is used to distinguish this unscaled physical shift aggregate from the probability simplex $\Delta_{C}$ in Subsection II-D and the scaled multi-source penalties introduced later in Subsection III-D. The prior-mismatch coefficient from Subsection III-B is recalled as
\begin{equation}
\mathfrak{C}_{\mathrm{pri},\upsilon}
=
8J_{\upsilon}B_{\varphi,\upsilon}
+
2B_{\ell,\upsilon},
\end{equation}
and the entire representation-dependent non-empirical remainder is abbreviated by
\begin{equation}
\mathfrak{U}_{\upsilon}(N_{\mathrm{src}},\delta)
:=
\frac{2\Psi_{\upsilon}^{\mathrm{cmp}}}{\sqrt{N_{\mathrm{src}}}}
+
B_{\ell,\upsilon}\sqrt{\frac{\log(1/\delta)}{2N_{\mathrm{src}}}}
+
4J_{\upsilon}L_{\upsilon}\mathfrak{E}_{\mathrm{dom}}
+
\mathfrak{C}_{\mathrm{pri},\upsilon}\mathfrak{A}_{\mathrm{pri}}
+
\lambda_{\upsilon}^{\star},
\end{equation}
where $N_{\mathrm{src}}\geq 1$ is the source sample size and $\delta\in(0,1)$. Consequently, under the same high-probability event as the structured source-to-target result in Subsection III-B, one may rewrite the target risk bound as
\begin{equation}
\mathcal{L}_{\mathrm{tar},\upsilon}(h)
\leq
\widehat{\mathcal{L}}_{\mathrm{src},\upsilon}(h)
+
\mathfrak{U}_{\upsilon}(N_{\mathrm{src}},\delta),
\end{equation}
simultaneously for all $h\in\mathcal{H}_{\upsilon}$. Hence the entire bound-tightening question is reduced to a direct comparison of the penalties $\{\mathfrak{U}_{\upsilon}\}_{\upsilon\in\mathcal{V}}$.\par
\begin{theorem}\label{thm:weak_tightening}
Fix one comparator $\bar{\upsilon}\in\mathcal{V}_{\mathrm{img}}$. Assume that
\begin{equation}
\Psi_{\mathrm{phy}}^{\mathrm{cmp}}
\leq
\Psi_{\bar{\upsilon}}^{\mathrm{cmp}},
\qquad
B_{\ell,\mathrm{phy}}
\leq
B_{\ell,\bar{\upsilon}},
\qquad
J_{\mathrm{phy}}L_{\mathrm{phy}}
\leq
J_{\bar{\upsilon}}L_{\bar{\upsilon}},
\end{equation}
\begin{equation}
J_{\mathrm{phy}}B_{\varphi,\mathrm{phy}}
\leq
J_{\bar{\upsilon}}B_{\varphi,\bar{\upsilon}},
\qquad
\lambda_{\mathrm{phy}}^{\star}
\leq
\lambda_{\bar{\upsilon}}^{\star}.
\end{equation}
\par
Then, for every $N_{\mathrm{src}}\geq 1$ and every $\delta\in(0,1)$,
\begin{equation}
\mathfrak{U}_{\mathrm{phy}}(N_{\mathrm{src}},\delta)
\leq
\mathfrak{U}_{\bar{\upsilon}}(N_{\mathrm{src}},\delta).
\end{equation}
\par
Consequently, under the same probability event as the structured generalization result in Subsection III-B,
\begin{equation}
\mathcal{L}_{\mathrm{tar},\mathrm{phy}}(h)
\leq
\widehat{\mathcal{L}}_{\mathrm{src},\mathrm{phy}}(h)
+
\mathfrak{U}_{\bar{\upsilon}}(N_{\mathrm{src}},\delta),
\qquad
\forall h\in\mathcal{H}_{\mathrm{phy}}.
\end{equation}
\end{theorem}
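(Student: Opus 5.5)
The argument is a term-by-term monotone comparison of the five addends of $\mathfrak{U}_{\upsilon}(N_{\mathrm{src}},\delta)$, followed by substitution into the rewritten structured bound. The key observation is that the representation-independent quantities $\mathfrak{E}_{\mathrm{dom}}$ and $\mathfrak{A}_{\mathrm{pri}}$ are common to both sides, since they depend only on the physical shift constants of Subsection III-B and the fixed transfer instance. Likewise, $N_{\mathrm{src}}$ and $\delta$ are shared.

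First, the complexity term. Since $N_{\mathrm{src}}\geq 1$, the factor $2/\sqrt{N_{\mathrm{src}}}>0$, so $\Psi_{\mathrm{phy}}^{\mathrm{cmp}}\leq\Psi_{\bar{\upsilon}}^{\mathrm{cmp}}$ gives $2\Psi_{\mathrm{phy}}^{\mathrm{cmp}}/\sqrt{N_{\mathrm{src}}}\leq 2\Psi_{\bar{\upsilon}}^{\mathrm{cmp}}/\sqrt{N_{\mathrm{src}}}$.

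Second, the concentration term. For $\delta\in(0,1)$ we have $\log(1/\delta)>0$, so $\sqrt{\log(1/\delta)/(2N_{\mathrm{src}})}$ is a positive common factor, and $B_{\ell,\mathrm{phy}}\leq B_{\ell,\bar{\upsilon}}$ transfers.

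Third, the physical-shift term $4J_{\upsilon}L_{\upsilon}\mathfrak{E}_{\mathrm{dom}}$. Here I use $\mathfrak{E}_{\mathrm{dom}}\geq 0$ together with $J_{\mathrm{phy}}L_{\mathrm{phy}}\leq J_{\bar{\upsilon}}L_{\bar{\upsilon}}$. The only point needing care is that $\mathfrak{E}_{\mathrm{dom}}$ really is $\upsilon$-independent. The constants $\Gamma_{\mathrm{per}},\Gamma_{\mathrm{view}},\Gamma_{\mathrm{wall}}$ and $\mathfrak{N}_{\mathrm{res}}$ involve only echo-level quantities, and the factor $L_{\upsilon}$ was explicitly pulled out in Theorem~\ref{thm:structured_bound}. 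I will note this explicitly.

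Fourth, the prior term. Writing $\mathfrak{C}_{\mathrm{pri},\upsilon}=8J_{\upsilon}B_{\varphi,\upsilon}+2B_{\ell,\upsilon}$, the hypotheses $J_{\mathrm{phy}}B_{\varphi,\mathrm{phy}}\leq J_{\bar{\upsilon}}B_{\varphi,\bar{\upsilon}}$ and $B_{\ell,\mathrm{phy}}\leq B_{\ell,\bar{\upsilon}}$ give $\mathfrak{C}_{\mathrm{pri},\mathrm{phy}}\leq\mathfrak{C}_{\mathrm{pri},\bar{\upsilon}}$. Multiplying by the nonnegative total-variation aggregate $\mathfrak{A}_{\mathrm{pri}}\geq 0$ preserves the inequality.

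Fifth, the approximation term: $\lambda_{\mathrm{phy}}^{\star}\leq\lambda_{\bar{\upsilon}}^{\star}$ is assumed directly. Summing the five inequalities yields $\mathfrak{U}_{\mathrm{phy}}\leq\mathfrak{U}_{\bar{\upsilon}}$.

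For the consequence, I invoke the rewritten form of Theorem~\ref{thm:structured_bound} with $\upsilon=\mathrm{phy}$. On its high-probability event, $\mathcal{L}_{\mathrm{tar},\mathrm{phy}}(h)\leq\widehat{\mathcal{L}}_{\mathrm{src},\mathrm{phy}}(h)+\mathfrak{U}_{\mathrm{phy}}(N_{\mathrm{src}},\delta)$ holds for all $h\in\mathcal{H}_{\mathrm{phy}}$. Chaining this with the deterministic comparison just proved gives the claim.

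The main obstacle is interpretive rather than technical: the phrase ``the same probability event'' must be read as the event attached to the $\mathrm{phy}$ source sample. The comparator $\bar{\upsilon}$ enters only through deterministic constants, so no union bound over representations is needed. Apart from this point and the $\upsilon$-independence check in the third step, every step is an elementary monotonicity argument with nonnegative multipliers.
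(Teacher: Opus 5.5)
Your proposal is correct and matches the paper's proof. The paper likewise expands $\mathfrak{U}_{\bar{\upsilon}}(N_{\mathrm{src}},\delta)-\mathfrak{U}_{\mathrm{phy}}(N_{\mathrm{src}},\delta)$ term by term, splitting $\mathfrak{C}_{\mathrm{pri},\upsilon}=8J_{\upsilon}B_{\varphi,\upsilon}+2B_{\ell,\upsilon}$, uses $\mathfrak{E}_{\mathrm{dom}}\geq 0$ and $\mathfrak{A}_{\mathrm{pri}}\geq 0$ to conclude that every bracket is nonnegative, and then inserts the result into $\mathcal{L}_{\mathrm{tar},\mathrm{phy}}(h)\leq\widehat{\mathcal{L}}_{\mathrm{src},\mathrm{phy}}(h)+\mathfrak{U}_{\mathrm{phy}}(N_{\mathrm{src}},\delta)$; your two extra remarks, that $\mathfrak{E}_{\mathrm{dom}}$ does not depend on $\upsilon$ and that the relevant probability event is the one attached to the $\mathrm{phy}$ source sample, are accurate points the paper leaves implicit.
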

\begin{proof}
Since $\mathfrak{E}_{\mathrm{dom}}\geq 0$ and $\mathfrak{A}_{\mathrm{pri}}\geq 0$, it is sufficient to expand the penalty difference term by term. Using
\begin{equation}
\mathfrak{C}_{\mathrm{pri},\upsilon}
=
8J_{\upsilon}B_{\varphi,\upsilon}
+
2B_{\ell,\upsilon},
\end{equation}
one obtains
\begin{equation}
\begin{aligned}
\mathfrak{U}_{\bar{\upsilon}}(N_{\mathrm{src}},\delta)
-
\mathfrak{U}_{\mathrm{phy}}(N_{\mathrm{src}},\delta)
={}&
\frac{2(\Psi_{\bar{\upsilon}}^{\mathrm{cmp}}-\Psi_{\mathrm{phy}}^{\mathrm{cmp}})}{\sqrt{N_{\mathrm{src}}}}
+
(B_{\ell,\bar{\upsilon}}-B_{\ell,\mathrm{phy}})\sqrt{\frac{\log(1/\delta)}{2N_{\mathrm{src}}}}
+
4(J_{\bar{\upsilon}}L_{\bar{\upsilon}}-J_{\mathrm{phy}}L_{\mathrm{phy}})\mathfrak{E}_{\mathrm{dom}}\\
&+
\big(
8(J_{\bar{\upsilon}}B_{\varphi,\bar{\upsilon}}-J_{\mathrm{phy}}B_{\varphi,\mathrm{phy}})
+
2(B_{\ell,\bar{\upsilon}}-B_{\ell,\mathrm{phy}})
\big)\mathfrak{A}_{\mathrm{pri}}
+
\lambda_{\bar{\upsilon}}^{\star}-\lambda_{\mathrm{phy}}^{\star}.
\end{aligned}
\end{equation}
\par
Every bracket on the right-hand side is nonnegative by the hypotheses of the theorem. Therefore,
\begin{equation}
\mathfrak{U}_{\bar{\upsilon}}(N_{\mathrm{src}},\delta)
-
\mathfrak{U}_{\mathrm{phy}}(N_{\mathrm{src}},\delta)
\geq
0,
\end{equation}
which is equivalent to
\begin{equation}
\mathfrak{U}_{\mathrm{phy}}(N_{\mathrm{src}},\delta)
\leq
\mathfrak{U}_{\bar{\upsilon}}(N_{\mathrm{src}},\delta).
\end{equation}
\par
The displayed target risk consequence follows immediately by inserting the last inequality into
\begin{equation}
\mathcal{L}_{\mathrm{tar},\mathrm{phy}}(h)
\leq
\widehat{\mathcal{L}}_{\mathrm{src},\mathrm{phy}}(h)
+
\mathfrak{U}_{\mathrm{phy}}(N_{\mathrm{src}},\delta).
\end{equation}
\par
\end{proof}
The previous theorem identifies the exact coefficients that must not increase when the representation is switched from an image domain to a physical low-dimensional domain. A stronger result is obtained once the dimensional compression is tied back to the norm-controlled network constants from Subsection II-D.\par
\begin{theorem}\label{thm:strict_tightening}
Fix one comparator $\bar{\upsilon}\in\mathcal{V}_{\mathrm{img}}$, and let
\begin{equation}
\zeta_{\bar{\upsilon}}
:=
\sqrt{
\frac{d_{\mathrm{phy}}}{d_{\bar{\upsilon}}}
},
\end{equation}
where $d_{\mathrm{phy}}=7<d_{\bar{\upsilon}}$, so that $0<\zeta_{\bar{\upsilon}}<1$. Assume that there exists a compression coefficient $\vartheta_{\bar{\upsilon}}\in(0,1)$ such that
\begin{equation}
L_{\mathrm{phy}}
\leq
\vartheta_{\bar{\upsilon}}L_{\bar{\upsilon}},
\qquad
B_{\varphi,\mathrm{phy}}
\leq
\vartheta_{\bar{\upsilon}}B_{\varphi,\bar{\upsilon}},
\end{equation}
\begin{equation}
S_{1,\mathrm{phy}}
\leq
\zeta_{\bar{\upsilon}}S_{1,\bar{\upsilon}},
\end{equation}
\begin{equation}
S_{q,\mathrm{phy}}
\leq
S_{q,\bar{\upsilon}},
\qquad
F_{q,\mathrm{phy}}
\leq
F_{q,\bar{\upsilon}},
\qquad
E_{q,\mathrm{phy}}
\leq
E_{q,\bar{\upsilon}},
\qquad
q=2,\ldots,Q_{\mathrm{net}},
\end{equation}
and also the compatible technical side conditions
\begin{equation}
E_{1,\mathrm{phy}}
\leq
E_{1,\bar{\upsilon}},
\qquad
\frac{F_{q,\mathrm{phy}}}{S_{q,\mathrm{phy}}}
\leq
\frac{F_{q,\bar{\upsilon}}}{S_{q,\bar{\upsilon}}},
\qquad
q=1,\ldots,Q_{\mathrm{net}}.
\end{equation}
\par
Assume further that
\begin{equation}
\Psi_{\mathrm{phy}}^{\mathrm{cmp}}
\leq
\Psi_{\bar{\upsilon}}^{\mathrm{cmp}},
\qquad
\lambda_{\mathrm{phy}}^{\star}
\leq
\lambda_{\bar{\upsilon}}^{\star}
-
\varepsilon_{\bar{\upsilon}}^{\star},
\end{equation}
for some $\varepsilon_{\bar{\upsilon}}^{\star}>0$. Define the monotone comparison envelope
\begin{equation}
\overline{\Psi}_{\upsilon}^{\mathrm{cmp}}
:=
\kappa_{\mathrm{cmp}}
B_{\varphi,\upsilon}
\left(
\prod_{r=1}^{Q_{\mathrm{net}}-1}\Lambda_{r}
\right)
\left(
\prod_{r=1}^{Q_{\mathrm{net}}}S_{r,\upsilon}
\right)
\left(
\sum_{r=1}^{Q_{\mathrm{net}}}
\frac{F_{r,\upsilon}^{2}}{S_{r,\upsilon}^{2}}
\right)^{\frac{1}{2}},
\end{equation}
where $\kappa_{\mathrm{cmp}}>0$ is a representation-independent architecture constant determined only by the bounded feedforward-network topology and the norm-comparison argument, and $\Psi_{\upsilon}^{\mathrm{cmp}}\leq\overline{\Psi}_{\upsilon}^{\mathrm{cmp}}$ for every $\upsilon\in\mathcal{V}$. Then the following statements hold.\par
First,
\begin{equation}
J_{\mathrm{phy}}
<
J_{\bar{\upsilon}},
\end{equation}
\begin{equation}
\Omega_{q,\mathrm{phy}}
<
\Omega_{q,\bar{\upsilon}},
\qquad
q=1,\ldots,Q_{\mathrm{net}}-1,
\end{equation}
\begin{equation}
B_{s,\mathrm{phy}}
<
B_{s,\bar{\upsilon}},
\qquad
B_{\ell,\mathrm{phy}}
<
B_{\ell,\bar{\upsilon}},
\end{equation}
and
\begin{equation}
\overline{\Psi}_{\mathrm{phy}}^{\mathrm{cmp}}
<
\overline{\Psi}_{\bar{\upsilon}}^{\mathrm{cmp}}.
\end{equation}
\par
Second, if
\begin{equation}
\Delta_{\bar{\upsilon}}^{\mathrm{cmp}}
:=
\Psi_{\bar{\upsilon}}^{\mathrm{cmp}}
-
\Psi_{\mathrm{phy}}^{\mathrm{cmp}},
\qquad
\Delta_{\bar{\upsilon}}^{\mathrm{loss}}
:=
B_{\ell,\bar{\upsilon}}
-
B_{\ell,\mathrm{phy}},
\end{equation}
\begin{equation}
\Delta_{\bar{\upsilon}}^{\mathrm{dom}}
:=
J_{\bar{\upsilon}}L_{\bar{\upsilon}}
-
J_{\mathrm{phy}}L_{\mathrm{phy}},
\qquad
\Delta_{\bar{\upsilon}}^{\mathrm{pri}}
:=
J_{\bar{\upsilon}}B_{\varphi,\bar{\upsilon}}
-
J_{\mathrm{phy}}B_{\varphi,\mathrm{phy}},
\end{equation}
and
\begin{equation}
\Delta_{\bar{\upsilon}}^{\mathrm{apx}}
:=
\lambda_{\bar{\upsilon}}^{\star}
-
\lambda_{\mathrm{phy}}^{\star},
\end{equation}
then
\begin{equation}
\Delta_{\bar{\upsilon}}^{\mathrm{cmp}}
\geq
0,\qquad
\Delta_{\bar{\upsilon}}^{\mathrm{loss}}
>
0,\qquad
\Delta_{\bar{\upsilon}}^{\mathrm{dom}}
>
0,\qquad
\Delta_{\bar{\upsilon}}^{\mathrm{pri}}
>
0,\qquad
\Delta_{\bar{\upsilon}}^{\mathrm{apx}}
\geq
\varepsilon_{\bar{\upsilon}}^{\star}
>
0,
\end{equation}
and the penalty gap satisfies
\begin{equation}
\mathfrak{U}_{\bar{\upsilon}}(N_{\mathrm{src}},\delta)
-
\mathfrak{U}_{\mathrm{phy}}(N_{\mathrm{src}},\delta)
=
\mathfrak{M}_{\bar{\upsilon}}(N_{\mathrm{src}},\delta)
>
0,
\end{equation}
where
\begin{equation}
\mathfrak{M}_{\bar{\upsilon}}(N_{\mathrm{src}},\delta)
:=
\frac{2\Delta_{\bar{\upsilon}}^{\mathrm{cmp}}}{\sqrt{N_{\mathrm{src}}}}
+
\Delta_{\bar{\upsilon}}^{\mathrm{loss}}\sqrt{\frac{\log(1/\delta)}{2N_{\mathrm{src}}}}
+
4\Delta_{\bar{\upsilon}}^{\mathrm{dom}}\mathfrak{E}_{\mathrm{dom}}
+
\left(
8\Delta_{\bar{\upsilon}}^{\mathrm{pri}}
+
2\Delta_{\bar{\upsilon}}^{\mathrm{loss}}
\right)\mathfrak{A}_{\mathrm{pri}}
+
\Delta_{\bar{\upsilon}}^{\mathrm{apx}}.
\end{equation}
\par
Equivalently,
\begin{equation}
\mathfrak{U}_{\bar{\upsilon}}(N_{\mathrm{src}},\delta)
-
\mathfrak{U}_{\mathrm{phy}}(N_{\mathrm{src}},\delta)
=
\mathfrak{M}_{\bar{\upsilon}}(N_{\mathrm{src}},\delta)
>
0.
\end{equation}
\end{theorem}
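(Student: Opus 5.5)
The plan is to propagate the strict first-layer contraction $S_{1,\mathrm{phy}}\leq\zeta_{\bar{\upsilon}}S_{1,\bar{\upsilon}}<S_{1,\bar{\upsilon}}$ and the input contraction $B_{\varphi,\mathrm{phy}}\leq\vartheta_{\bar{\upsilon}}B_{\varphi,\bar{\upsilon}}<B_{\varphi,\bar{\upsilon}}$ through every explicit constant built in Subsection II-D and Subsection II-E. This uses only monotonicity and positivity: all $S_{q,\upsilon},F_{q,\upsilon},E_{q,\upsilon},\Lambda_{q},B_{\varphi,\upsilon}$ are strictly positive. The penalty identity is then taken verbatim from the term-by-term expansion already written out in the proof of Theorem~\ref{thm:weak_tightening}.

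\textbf{First statement.} For $J_{\upsilon}=(\prod_{r<Q_{\mathrm{net}}}\Lambda_{r})(\prod_{r\leq Q_{\mathrm{net}}}S_{r,\upsilon})$, the factor $S_{1}$ decreases strictly and $S_{q}$ for $q\geq 2$ does not increase. All factors are positive, so $J_{\mathrm{phy}}\leq\zeta_{\bar{\upsilon}}J_{\bar{\upsilon}}<J_{\bar{\upsilon}}$.

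For the hidden radii I will argue by induction on the recursion $\Omega_{q,\upsilon}=\Lambda_{q}(S_{q,\upsilon}\Omega_{q-1,\upsilon}+E_{q,\upsilon})$, starting from $\Omega_{0,\mathrm{phy}}=B_{\varphi,\mathrm{phy}}<\Omega_{0,\bar{\upsilon}}$.
\begin{itemize}
\item At $q=1$, the product $S_{1,\mathrm{phy}}\Omega_{0,\mathrm{phy}}$ is a product of two strictly smaller positive numbers, and $E_{1,\mathrm{phy}}\leq E_{1,\bar{\upsilon}}$ by the side condition.
\item At $q\geq 2$, the strict inequality $\Omega_{q-1,\mathrm{phy}}<\Omega_{q-1,\bar{\upsilon}}$ is multiplied by $0<S_{q,\mathrm{phy}}\leq S_{q,\bar{\upsilon}}$, which keeps it strict, and the offsets satisfy $E_{q,\mathrm{phy}}\leq E_{q,\bar{\upsilon}}$.
\end{itemize}
The same one-step argument applied to $B_{s,\upsilon}=S_{Q_{\mathrm{net}},\upsilon}\Omega_{Q_{\mathrm{net}}-1,\upsilon}+E_{Q_{\mathrm{net}},\upsilon}$ gives $B_{s,\mathrm{phy}}<B_{s,\bar{\upsilon}}$. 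When $Q_{\mathrm{net}}=1$ this reduces to $S_{1}B_{\varphi}+E_{1}$, where the first term is strictly smaller. Because $x\mapsto\log(1+(C-1)e^{2x})$ is strictly increasing, $B_{\ell,\mathrm{phy}}<B_{\ell,\bar{\upsilon}}$ follows.

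For $\overline{\Psi}^{\mathrm{cmp}}_{\upsilon}$, the factor $B_{\varphi,\upsilon}$ and the product of the $S_{r,\upsilon}$ both decrease strictly. The ratio sum $\sum_{r}F_{r,\upsilon}^{2}/S_{r,\upsilon}^{2}$ does not increase by the side condition $F_{q,\mathrm{phy}}/S_{q,\mathrm{phy}}\leq F_{q,\bar{\upsilon}}/S_{q,\bar{\upsilon}}$. The constant $\kappa_{\mathrm{cmp}}$ is common to both representations, so the envelope is strictly smaller for $\mathrm{phy}$.

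\textbf{Second statement.}
\begin{itemize}
\item $\Delta^{\mathrm{cmp}}\geq 0$ holds by hypothesis. I will remark that it is not derived from the strict envelope inequality, since $\Psi\leq\overline{\Psi}$ gives no ordering between the two $\Psi$'s.
\item $\Delta^{\mathrm{loss}}>0$ was just shown.
\item $\Delta^{\mathrm{dom}}>0$ because $J_{\mathrm{phy}}L_{\mathrm{phy}}<J_{\bar{\upsilon}}\cdot\vartheta_{\bar{\upsilon}}L_{\bar{\upsilon}}<J_{\bar{\upsilon}}L_{\bar{\upsilon}}$.
\item $\Delta^{\mathrm{pri}}>0$ by the same argument applied to $J_{\upsilon}B_{\varphi,\upsilon}$.
\item $\Delta^{\mathrm{apx}}\geq\varepsilon^{\star}_{\bar{\upsilon}}$ is the hypothesis rearranged.
\end{itemize}
Substituting $\mathfrak{C}_{\mathrm{pri},\upsilon}=8J_{\upsilon}B_{\varphi,\upsilon}+2B_{\ell,\upsilon}$ into the definition of $\mathfrak{U}_{\upsilon}$ and subtracting reproduces exactly $\mathfrak{M}_{\bar{\upsilon}}(N_{\mathrm{src}},\delta)$; this is the same algebra as in the proof of Theorem~\ref{thm:weak_tightening}. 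In $\mathfrak{M}_{\bar{\upsilon}}$, every term is a nonnegative difference multiplied by a nonnegative weight ($1/\sqrt{N_{\mathrm{src}}}$, $\sqrt{\log(1/\delta)/(2N_{\mathrm{src}})}$, $\mathfrak{E}_{\mathrm{dom}}\geq 0$, $\mathfrak{A}_{\mathrm{pri}}\geq0$). The last term $\Delta^{\mathrm{apx}}\geq\varepsilon^{\star}_{\bar{\upsilon}}>0$ carries no weight, so $\mathfrak{M}_{\bar{\upsilon}}>0$ even when $\mathfrak{E}_{\mathrm{dom}}=\mathfrak{A}_{\mathrm{pri}}=0$.

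The main obstacle is bookkeeping rather than depth: strictness must survive every multiplication. That requires positivity of every factor (guaranteed by $S,E,\Lambda,B_{\varphi}>0$) and the strict first-layer step, which needs $\zeta_{\bar{\upsilon}}<1$, that is $d_{\mathrm{phy}}=7<d_{\bar{\upsilon}}$. I will check the $q=1$ base case and the $Q_{\mathrm{net}}=1$ edge case explicitly, since there the strict drop must come from $S_{1}$ and $B_{\varphi}$ rather than from the induction hypothesis.
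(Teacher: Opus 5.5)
Your proposal is correct and follows essentially the same route as the paper's proof: the strict drop in $J_{\upsilon}$ from the $\zeta_{\bar{\upsilon}}$ factor on $S_{1}$, induction on $\Omega_{q,\upsilon}$ with the $q=1$ base case and the $Q_{\mathrm{net}}=1$ edge case handled separately, monotonicity of $x\mapsto\log(1+(C-1)e^{2x})$, the envelope comparison, and the term-by-term penalty expansion, with strict positivity of $\mathfrak{M}_{\bar{\upsilon}}$ supplied by $\Delta^{\mathrm{apx}}_{\bar{\upsilon}}\geq\varepsilon^{\star}_{\bar{\upsilon}}$. The differences are only cosmetic: the paper carries the explicit factor $\vartheta_{\bar{\upsilon}}\zeta_{\bar{\upsilon}}$ through the envelope and through $J_{\upsilon}L_{\upsilon}$ and $J_{\upsilon}B_{\varphi,\upsilon}$, whereas you argue strictness factor by factor, and both arguments implicitly need $C\geq 2$ for the strict loss-envelope inequality.
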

\begin{proof}
The network Lipschitz constant from Subsection II-D is
\begin{equation}
J_{\upsilon}
=
\left(
\prod_{r=1}^{Q_{\mathrm{net}}-1}\Lambda_{r}
\right)
\left(
\prod_{r=1}^{Q_{\mathrm{net}}}S_{r,\upsilon}
\right).
\end{equation}
\par
Therefore,
\begin{equation}
J_{\mathrm{phy}}
=
\left(
\prod_{r=1}^{Q_{\mathrm{net}}-1}\Lambda_{r}
\right)
S_{1,\mathrm{phy}}
\prod_{r=2}^{Q_{\mathrm{net}}}S_{r,\mathrm{phy}}
\leq
\zeta_{\bar{\upsilon}}
\left(
\prod_{r=1}^{Q_{\mathrm{net}}-1}\Lambda_{r}
\right)
S_{1,\bar{\upsilon}}
\prod_{r=2}^{Q_{\mathrm{net}}}S_{r,\bar{\upsilon}}
=
\zeta_{\bar{\upsilon}}J_{\bar{\upsilon}}
<
J_{\bar{\upsilon}},
\end{equation}
because $0<\zeta_{\bar{\upsilon}}<1$. This proves the first strict comparison.\par
For the hidden-state radii, recall that $\Omega_{0,\upsilon}=B_{\varphi,\upsilon}$ and
\begin{equation}
\Omega_{q,\upsilon}
=
\Lambda_{q}
\left(
S_{q,\upsilon}\Omega_{q-1,\upsilon}
+
E_{q,\upsilon}
\right),
\qquad
q=1,\ldots,Q_{\mathrm{net}}-1.
\end{equation}
\par
Since $\vartheta_{\bar{\upsilon}}\in(0,1)$, one has
\begin{equation}
\Omega_{0,\mathrm{phy}}
=
B_{\varphi,\mathrm{phy}}
\leq
\vartheta_{\bar{\upsilon}}B_{\varphi,\bar{\upsilon}}
<
B_{\varphi,\bar{\upsilon}}
=
\Omega_{0,\bar{\upsilon}}.
\end{equation}
\par
The score-envelope bound from Subsection II-D is
\begin{equation}
B_{s,\upsilon}
=
S_{Q_{\mathrm{net}},\upsilon}\Omega_{Q_{\mathrm{net}}-1,\upsilon}
+
E_{Q_{\mathrm{net}},\upsilon}.
\end{equation}
\par
If $Q_{\mathrm{net}}=1$, then no hidden-state index belongs to $\{1,\ldots,Q_{\mathrm{net}}-1\}$, and
\begin{equation}
B_{s,\mathrm{phy}}=S_{1,\mathrm{phy}}\Omega_{0,\mathrm{phy}}+E_{1,\mathrm{phy}}
\leq
\zeta_{\bar{\upsilon}}S_{1,\bar{\upsilon}}\Omega_{0,\mathrm{phy}}+E_{1,\bar{\upsilon}}
<
S_{1,\bar{\upsilon}}\Omega_{0,\bar{\upsilon}}+E_{1,\bar{\upsilon}}
=
B_{s,\bar{\upsilon}},
\end{equation}
\par
where the strict inequality follows from $0<\zeta_{\bar{\upsilon}}<1$, $S_{1,\bar{\upsilon}}>0$, and $\Omega_{0,\mathrm{phy}}<\Omega_{0,\bar{\upsilon}}$.\par
If $Q_{\mathrm{net}}\geq 2$, then for $q=1$ one has
\begin{equation}
\Omega_{1,\mathrm{phy}}
=
\Lambda_{1}\left(S_{1,\mathrm{phy}}\Omega_{0,\mathrm{phy}}+E_{1,\mathrm{phy}}\right)
\leq
\Lambda_{1}\left(\zeta_{\bar{\upsilon}}S_{1,\bar{\upsilon}}\Omega_{0,\mathrm{phy}}+E_{1,\bar{\upsilon}}\right)
<
\Lambda_{1}\left(S_{1,\bar{\upsilon}}\Omega_{0,\bar{\upsilon}}+E_{1,\bar{\upsilon}}\right)
=
\Omega_{1,\bar{\upsilon}},
\end{equation}
\par
where the strict inequality uses $0<\zeta_{\bar{\upsilon}}<1$, $S_{1,\bar{\upsilon}}>0$, and $\Omega_{0,\mathrm{phy}}<\Omega_{0,\bar{\upsilon}}$.\par
Assume now that $\Omega_{q-1,\mathrm{phy}}<\Omega_{q-1,\bar{\upsilon}}$ for some $q\in\{2,\ldots,Q_{\mathrm{net}}-1\}$. Then
\begin{equation}
\Omega_{q,\mathrm{phy}}
=
\Lambda_{q}\left(S_{q,\mathrm{phy}}\Omega_{q-1,\mathrm{phy}}+E_{q,\mathrm{phy}}\right)
\leq
\Lambda_{q}\left(S_{q,\bar{\upsilon}}\Omega_{q-1,\mathrm{phy}}+E_{q,\bar{\upsilon}}\right)
<
\Lambda_{q}\left(S_{q,\bar{\upsilon}}\Omega_{q-1,\bar{\upsilon}}+E_{q,\bar{\upsilon}}\right)
=
\Omega_{q,\bar{\upsilon}},
\end{equation}
\par
where the strict inequality uses $S_{q,\bar{\upsilon}}>0$ and $\Omega_{q-1,\mathrm{phy}}<\Omega_{q-1,\bar{\upsilon}}$.\par
By induction,
\begin{equation}
\Omega_{q,\mathrm{phy}}
<
\Omega_{q,\bar{\upsilon}},
\qquad
q=1,\ldots,Q_{\mathrm{net}}-1.
\end{equation}
\par
Consequently,
\begin{equation}
B_{s,\mathrm{phy}}
=
S_{Q_{\mathrm{net}},\mathrm{phy}}\Omega_{Q_{\mathrm{net}}-1,\mathrm{phy}}+E_{Q_{\mathrm{net}},\mathrm{phy}}
\leq
S_{Q_{\mathrm{net}},\bar{\upsilon}}\Omega_{Q_{\mathrm{net}}-1,\mathrm{phy}}+E_{Q_{\mathrm{net}},\bar{\upsilon}}
<
S_{Q_{\mathrm{net}},\bar{\upsilon}}\Omega_{Q_{\mathrm{net}}-1,\bar{\upsilon}}+E_{Q_{\mathrm{net}},\bar{\upsilon}}
=
B_{s,\bar{\upsilon}},
\end{equation}
\par
where the strict inequality uses $S_{Q_{\mathrm{net}},\bar{\upsilon}}>0$ and $\Omega_{Q_{\mathrm{net}}-1,\mathrm{phy}}<\Omega_{Q_{\mathrm{net}}-1,\bar{\upsilon}}$.\par
Since
\begin{equation}
B_{\ell,\upsilon}
=
\log
\left(
1
+
(C-1)e^{2B_{s,\upsilon}}
\right),
\end{equation}
and the map $x\mapsto\log(1+(C-1)e^{2x})$ is increasing on $[0,\infty)$, it follows that
\begin{equation}
B_{\ell,\mathrm{phy}}
<
B_{\ell,\bar{\upsilon}}.
\end{equation}
\par
Next, the comparison envelope satisfies
\begin{equation}
\begin{aligned}
\overline{\Psi}_{\mathrm{phy}}^{\mathrm{cmp}}
&=
\kappa_{\mathrm{cmp}}
B_{\varphi,\mathrm{phy}}
\left(
\prod_{r=1}^{Q_{\mathrm{net}}-1}\Lambda_{r}
\right)
\left(
\prod_{r=1}^{Q_{\mathrm{net}}}S_{r,\mathrm{phy}}
\right)
\left(
\sum_{r=1}^{Q_{\mathrm{net}}}
\frac{F_{r,\mathrm{phy}}^{2}}{S_{r,\mathrm{phy}}^{2}}
\right)^{\frac{1}{2}}\\
&\leq
\vartheta_{\bar{\upsilon}}\zeta_{\bar{\upsilon}}
\kappa_{\mathrm{cmp}}
B_{\varphi,\bar{\upsilon}}
\left(
\prod_{r=1}^{Q_{\mathrm{net}}-1}\Lambda_{r}
\right)
\left(
\prod_{r=1}^{Q_{\mathrm{net}}}S_{r,\bar{\upsilon}}
\right)
\left(
\sum_{r=1}^{Q_{\mathrm{net}}}
\frac{F_{r,\bar{\upsilon}}^{2}}{S_{r,\bar{\upsilon}}^{2}}
\right)^{\frac{1}{2}}
=
\vartheta_{\bar{\upsilon}}\zeta_{\bar{\upsilon}}
\overline{\Psi}_{\bar{\upsilon}}^{\mathrm{cmp}}
<
\overline{\Psi}_{\bar{\upsilon}}^{\mathrm{cmp}},
\end{aligned}
\end{equation}
where the first inequality uses the bounds on $B_{\varphi,\mathrm{phy}}$, $\{S_{r,\mathrm{phy}}\}_{r=1}^{Q_{\mathrm{net}}}$, and the ratio monotonicity $\frac{F_{q,\mathrm{phy}}}{S_{q,\mathrm{phy}}}\leq\frac{F_{q,\bar{\upsilon}}}{S_{q,\bar{\upsilon}}}$, whereas the strict inequality uses $0<\vartheta_{\bar{\upsilon}}\zeta_{\bar{\upsilon}}<1$.\par
The remaining ordered gaps now follow directly. The quantity $\Delta_{\bar{\upsilon}}^{\mathrm{cmp}}\geq 0$ is part of the explicit assumption $\Psi_{\mathrm{phy}}^{\mathrm{cmp}}\leq\Psi_{\bar{\upsilon}}^{\mathrm{cmp}}$. The loss gap satisfies $\Delta_{\bar{\upsilon}}^{\mathrm{loss}}>0$ by the already established strict inequality $B_{\ell,\mathrm{phy}}<B_{\ell,\bar{\upsilon}}$. Moreover,
\begin{equation}
J_{\mathrm{phy}}L_{\mathrm{phy}}
\leq
\vartheta_{\bar{\upsilon}}\zeta_{\bar{\upsilon}}J_{\bar{\upsilon}}L_{\bar{\upsilon}}
<
J_{\bar{\upsilon}}L_{\bar{\upsilon}},
\qquad
J_{\mathrm{phy}}B_{\varphi,\mathrm{phy}}
\leq
\vartheta_{\bar{\upsilon}}\zeta_{\bar{\upsilon}}J_{\bar{\upsilon}}B_{\varphi,\bar{\upsilon}}
<
J_{\bar{\upsilon}}B_{\varphi,\bar{\upsilon}},
\end{equation}
and hence
\begin{equation}
\Delta_{\bar{\upsilon}}^{\mathrm{dom}}
>
0,
\qquad
\Delta_{\bar{\upsilon}}^{\mathrm{pri}}
>
0.
\end{equation}
\par
Finally,
\begin{equation}
\Delta_{\bar{\upsilon}}^{\mathrm{apx}}
=
\lambda_{\bar{\upsilon}}^{\star}
-
\lambda_{\mathrm{phy}}^{\star}
\geq
\varepsilon_{\bar{\upsilon}}^{\star}
>
0.
\end{equation}
\par
Substituting the five gap definitions into the explicit penalty difference yields
\begin{equation}
\begin{aligned}
\mathfrak{U}_{\bar{\upsilon}}(N_{\mathrm{src}},\delta)
-
\mathfrak{U}_{\mathrm{phy}}(N_{\mathrm{src}},\delta)
={}&
\frac{2\Delta_{\bar{\upsilon}}^{\mathrm{cmp}}}{\sqrt{N_{\mathrm{src}}}}
+
\Delta_{\bar{\upsilon}}^{\mathrm{loss}}
\sqrt{
\frac{\log(1/\delta)}{2N_{\mathrm{src}}}
}
+
4\Delta_{\bar{\upsilon}}^{\mathrm{dom}}\mathfrak{E}_{\mathrm{dom}}
+
\left(
8\Delta_{\bar{\upsilon}}^{\mathrm{pri}}
+
2\Delta_{\bar{\upsilon}}^{\mathrm{loss}}
\right)\mathfrak{A}_{\mathrm{pri}}
+
\Delta_{\bar{\upsilon}}^{\mathrm{apx}}\\
=&
\mathfrak{M}_{\bar{\upsilon}}(N_{\mathrm{src}},\delta).
\end{aligned}
\end{equation}
\par
Because all addends are nonnegative and $\Delta_{\bar{\upsilon}}^{\mathrm{apx}}>0$, one has
\begin{equation}
\mathfrak{M}_{\bar{\upsilon}}(N_{\mathrm{src}},\delta)
>
0.
\end{equation}
\par
Therefore,
\begin{equation}
\mathfrak{U}_{\bar{\upsilon}}(N_{\mathrm{src}},\delta)
-
\mathfrak{U}_{\mathrm{phy}}(N_{\mathrm{src}},\delta)
=
\mathfrak{M}_{\bar{\upsilon}}(N_{\mathrm{src}},\delta)
>
0,
\end{equation}
which completes the proof.\par
\end{proof}
The preceding theorem controls only the additive non-empirical remainder. To transfer this improvement to the complete source-to-target upper bound, the source empirical risk term must also be compared across representations.\par
\begin{corollary}\label{cor:strict_tightening}
Fix one comparator $\bar{\upsilon}\in\mathcal{V}_{\mathrm{img}}$, and assume all conditions of Theorem~\ref{thm:strict_tightening}. Let $\widehat{h}_{\mathrm{src},\mathrm{phy}}$ and $\widehat{h}_{\mathrm{src},\bar{\upsilon}}$ denote the source empirical risk minimizers in $\mathcal{H}_{\mathrm{phy}}$ and $\mathcal{H}_{\bar{\upsilon}}$, respectively. If
\begin{equation}
\widehat{\mathcal{L}}_{\mathrm{src},\mathrm{phy}}
\big(
\widehat{h}_{\mathrm{src},\mathrm{phy}}
\big)
\leq
\widehat{\mathcal{L}}_{\mathrm{src},\bar{\upsilon}}
\big(
\widehat{h}_{\mathrm{src},\bar{\upsilon}}
\big)
+
\mathfrak{M}_{\bar{\upsilon}}(N_{\mathrm{src}},\delta),
\end{equation}
then, under the same probability event as the structured generalization result in Subsection III-B,
\begin{equation}
\mathcal{L}_{\mathrm{tar},\mathrm{phy}}
\big(
\widehat{h}_{\mathrm{src},\mathrm{phy}}
\big)
\leq
\widehat{\mathcal{L}}_{\mathrm{src},\mathrm{phy}}
\big(
\widehat{h}_{\mathrm{src},\mathrm{phy}}
\big)
+
\mathfrak{U}_{\mathrm{phy}}(N_{\mathrm{src}},\delta)
\leq
\widehat{\mathcal{L}}_{\mathrm{src},\bar{\upsilon}}
\big(
\widehat{h}_{\mathrm{src},\bar{\upsilon}}
\big)
+
\mathfrak{U}_{\bar{\upsilon}}(N_{\mathrm{src}},\delta).
\end{equation}
\par
In particular, if the displayed source empirical risk comparison is strict, then the complete target risk upper bound induced by the physics-guided low-dimensional representation is strictly tighter than that induced by $\bar{\upsilon}$.\par
\end{corollary}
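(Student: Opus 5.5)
The argument chains the rewritten structured bound of Subsection~III-C, specialized to $\upsilon=\mathrm{phy}$, with the penalty-gap identity of Theorem~\ref{thm:strict_tightening}. No new probabilistic ingredient is needed. Both inequalities are evaluated on the single high-probability event of Theorem~\ref{thm:structured_bound}. I take this event as the one on which the structured bound holds simultaneously for both representation classes $\mathcal{H}_{\mathrm{phy}}$ and $\mathcal{H}_{\bar{\upsilon}}$, as the corollary's phrasing ``under the same probability event'' intends.

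First, the reformulation $\mathcal{L}_{\mathrm{tar},\upsilon}(h)\leq\widehat{\mathcal{L}}_{\mathrm{src},\upsilon}(h)+\mathfrak{U}_{\upsilon}(N_{\mathrm{src}},\delta)$ holds for all $h\in\mathcal{H}_{\upsilon}$. It follows from Theorem~\ref{thm:structured_bound} by collecting $\mathfrak{B}_{\upsilon}^{\mathrm{per}}+\mathfrak{B}_{\upsilon}^{\mathrm{view}}+\mathfrak{B}_{\upsilon}^{\mathrm{wall}}+4J_{\upsilon}L_{\upsilon}\mathfrak{N}_{\mathrm{res}}=4J_{\upsilon}L_{\upsilon}\mathfrak{E}_{\mathrm{dom}}$. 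I apply it with $\upsilon=\mathrm{phy}$ and $h=\widehat{h}_{\mathrm{src},\mathrm{phy}}\in\mathcal{H}_{\mathrm{phy}}$; the argmin is nonempty by the compactness remark in Subsection~II-E. This gives the first displayed inequality directly.

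Second, for the comparison inequality, I add the hypothesis on empirical risks to the identity $\mathfrak{U}_{\mathrm{phy}}=\mathfrak{U}_{\bar{\upsilon}}-\mathfrak{M}_{\bar{\upsilon}}$ from Theorem~\ref{thm:strict_tightening}. This yields
\begin{equation*}
\widehat{\mathcal{L}}_{\mathrm{src},\mathrm{phy}}(\widehat{h}_{\mathrm{src},\mathrm{phy}})+\mathfrak{U}_{\mathrm{phy}}
\leq
\widehat{\mathcal{L}}_{\mathrm{src},\bar{\upsilon}}(\widehat{h}_{\mathrm{src},\bar{\upsilon}})+\mathfrak{M}_{\bar{\upsilon}}+\mathfrak{U}_{\bar{\upsilon}}-\mathfrak{M}_{\bar{\upsilon}},
\end{equation*}
and the $\mathfrak{M}_{\bar{\upsilon}}$ terms cancel exactly. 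If the empirical-risk hypothesis holds strictly, the same computation gives strict inequality. The right-hand side $\widehat{\mathcal{L}}_{\mathrm{src},\bar{\upsilon}}(\widehat{h}_{\mathrm{src},\bar{\upsilon}})+\mathfrak{U}_{\bar{\upsilon}}$ is exactly the complete upper bound that the structured result provides for $\mathcal{L}_{\mathrm{tar},\bar{\upsilon}}(\widehat{h}_{\mathrm{src},\bar{\upsilon}})$. The strict version of the comparison therefore says precisely that the physics-guided bound is strictly tighter, which is the ``in particular'' statement.

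The only delicate point is the bookkeeping of the probability event. The structured bounds for $\mathrm{phy}$ and for $\bar{\upsilon}$ rest on McDiarmid applied to different loss classes, so holding both simultaneously formally requires a union bound, costing $\delta\to\delta/2$. I would adopt the paper's convention of a common event and mention this remark explicitly. Note that the comparison inequality itself is deterministic: it compares two numerical quantities and needs no probability event. Only the first inequality does, and it involves the single class $\mathcal{H}_{\mathrm{phy}}$, so the proof of the displayed chain is unaffected by this issue.
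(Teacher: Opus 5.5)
Your proposal is correct and follows the paper's proof: the first inequality is the rewritten structured bound $\mathcal{L}_{\mathrm{tar},\mathrm{phy}}(h)\leq\widehat{\mathcal{L}}_{\mathrm{src},\mathrm{phy}}(h)+\mathfrak{U}_{\mathrm{phy}}(N_{\mathrm{src}},\delta)$ evaluated at $h=\widehat{h}_{\mathrm{src},\mathrm{phy}}$. The second follows by adding the empirical-risk hypothesis to the identity $\mathfrak{U}_{\bar{\upsilon}}-\mathfrak{U}_{\mathrm{phy}}=\mathfrak{M}_{\bar{\upsilon}}$ from Theorem~\ref{thm:strict_tightening}, with strictness inherited from a strict hypothesis; your further point that this second step is deterministic, so the displayed chain needs only the event for $\mathcal{H}_{\mathrm{phy}}$ and a union bound arises only if the right-hand side must also bound $\mathcal{L}_{\mathrm{tar},\bar{\upsilon}}(\widehat{h}_{\mathrm{src},\bar{\upsilon}})$ on the same event, is a correct refinement the paper leaves implicit.
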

\begin{proof}
The first inequality follows from the representation-dependent target risk bound
\begin{equation}
\mathcal{L}_{\mathrm{tar},\mathrm{phy}}
\big(
\widehat{h}_{\mathrm{src},\mathrm{phy}}
\big)
\leq
\widehat{\mathcal{L}}_{\mathrm{src},\mathrm{phy}}
\big(
\widehat{h}_{\mathrm{src},\mathrm{phy}}
\big)
+
\mathfrak{U}_{\mathrm{phy}}(N_{\mathrm{src}},\delta).
\end{equation}
\par
For the second inequality, combine the assumed empirical risk comparison with
\begin{equation}
\mathfrak{U}_{\bar{\upsilon}}(N_{\mathrm{src}},\delta)
-
\mathfrak{U}_{\mathrm{phy}}(N_{\mathrm{src}},\delta)
=
\mathfrak{M}_{\bar{\upsilon}}(N_{\mathrm{src}},\delta),
\end{equation}
which is provided by Theorem~\ref{thm:strict_tightening}. This gives
\begin{equation}
\begin{aligned}
\widehat{\mathcal{L}}_{\mathrm{src},\mathrm{phy}}
\big(
\widehat{h}_{\mathrm{src},\mathrm{phy}}
\big)
+
\mathfrak{U}_{\mathrm{phy}}(N_{\mathrm{src}},\delta)
\leq{}&
\widehat{\mathcal{L}}_{\mathrm{src},\bar{\upsilon}}
\big(
\widehat{h}_{\mathrm{src},\bar{\upsilon}}
\big)
+
\mathfrak{M}_{\bar{\upsilon}}(N_{\mathrm{src}},\delta)
+
\mathfrak{U}_{\mathrm{phy}}(N_{\mathrm{src}},\delta)\\
=&
\widehat{\mathcal{L}}_{\mathrm{src},\bar{\upsilon}}
\big(
\widehat{h}_{\mathrm{src},\bar{\upsilon}}
\big)
+
\mathfrak{U}_{\bar{\upsilon}}(N_{\mathrm{src}},\delta).
\end{aligned}
\label{eq:strict_tightening_chain}
\end{equation}
\par
If the empirical risk comparison is strict, then \eqref{eq:strict_tightening_chain} is strict as well, which proves the final claim.\par
\end{proof}
Hence the bound tightening caused by the physics-guided low-dimensional representation is traced to five explicit mechanisms: reduced complexity, reduced score-envelope loss, reduced physical sensitivity, reduced prior-coupling coefficient, and reduced joint source-target approximation burden. The only term that must still be checked from data is the source empirical fit, whereas every other gain is controlled directly by deterministic representation and network constants.\par

\subsection{Multi-Source Training and Parameter-Space Coverage}
Fix one representation index $\upsilon\in\mathcal{V}$. The loss class $\mathcal{F}_{\upsilon}$, the transport cost $\mathfrak{c}_{\upsilon}$, and the complexity term $\Psi_{\upsilon}^{\mathrm{cmp}}$ retain the definitions introduced in Subsection III-A; the loss-envelope constant $B_{\ell,\upsilon}$ retains the definition introduced in Subsection II-E; the network Lipschitz constant $J_{\upsilon}$ retains the definition introduced in Subsection II-D; the representation Lipschitz constant $L_{\upsilon}$ retains the definition introduced in Subsection II-C; and the constants $\Gamma_{\mathrm{per}}$, $\Gamma_{\mathrm{view}}$, $\Gamma_{\mathrm{wall}}$, $\mathfrak{C}_{\mathrm{pri},\upsilon}$, and $c_{\mathrm{sg}}$ retain the definitions introduced in Subsection III-B. The present subsection replaces the single source domain by a finite training-domain family and then isolates the gain caused by parameter-space coverage. This multi-source viewpoint follows information-theoretic transfer meta-learning and transfer-learning analyses \cite{TransferMetaLearning2022,TransferLearningIT2024}, together with minimax and domain-generalization results \cite{MinimaxExcessRisk2023,DomainGeneralizationIT2025}.\par
The loss notation and the prior-mismatch coefficient are recalled as
\begin{equation}
\ell_{h}(\boldsymbol{\varphi},a)
:=
\ell_{\mathrm{ce}}
\big(
h(\boldsymbol{\varphi}),
a
\big),
\qquad
\mathfrak{C}_{\mathrm{pri},\upsilon}
:=
8J_{\upsilon}B_{\varphi,\upsilon}
+
2B_{\ell,\upsilon},
\end{equation}
where $(\boldsymbol{\varphi},a)\in\mathfrak{X}_{\upsilon}\times\mathcal{A}$.\par
Let the training-domain set be
\begin{equation}
\mathcal{D}_{\mathrm{tr}}
:=
\left\{
\mathfrak{d}_{k}
:
k=1,\ldots,K
\right\},
\qquad
\mathfrak{d}_{k}
=
\left(
p_{k},
\beta_{k},
\boldsymbol{\xi}_{k}
\right),
\end{equation}
where $K\geq 1$ is the number of source domains. Fix the target-domain descriptor as
\begin{equation}
\mathfrak{d}_{\mathrm{tar}}
=
\left(
p_{\mathrm{tar}},
\beta_{\mathrm{tar}},
\boldsymbol{\xi}_{\mathrm{tar}}
\right).
\end{equation}
\par
For each $k\in\{1,\ldots,K\}$, abbreviate
\begin{equation}
\mathbb{P}_{k,\upsilon}
:=
\mathbb{P}_{\mathfrak{d}_{k},\upsilon},
\qquad
\mathcal{L}_{k,\upsilon}(h)
:=
\mathbb{E}_{(\boldsymbol{\varphi},a)\sim\mathbb{P}_{k,\upsilon}}
\big[
\ell_{h}(\boldsymbol{\varphi},a)
\big],
\end{equation}
and let
\begin{equation}
\mathcal{S}_{k,\upsilon}
=
\left\{
\left(
\boldsymbol{\varphi}_{k,i},
a_{k,i}
\right)
:
i=1,\ldots,N_{k}
\right\},
\end{equation}
where $N_{k}\geq 1$ and the pairs in $\mathcal{S}_{k,\upsilon}$ are independent draws from $\mathbb{P}_{k,\upsilon}$. The corresponding empirical risk is
\begin{equation}
\widehat{\mathcal{L}}_{k,\upsilon}(h)
:=
\frac{1}{N_{k}}
\sum_{i=1}^{N_{k}}
\ell_{h}
\big(
\boldsymbol{\varphi}_{k,i},
a_{k,i}
\big),
\end{equation}
where $h\in\mathcal{H}_{\upsilon}$ is arbitrary.\par
Define the total sample size and the canonical weights as
\begin{equation}
N_{\Sigma}
:=
\sum_{k=1}^{K}N_{k},
\qquad
w_{k}
:=
\frac{N_{k}}{N_{\Sigma}},
\qquad
k=1,\ldots,K,
\end{equation}
so that $\sum_{k=1}^{K}w_{k}=1$. The weighted multi-source empirical risk and population risk are defined by
\begin{equation}
\widehat{\mathcal{L}}_{\Sigma,\upsilon}(h)
:=
\sum_{k=1}^{K}
w_{k}
\widehat{\mathcal{L}}_{k,\upsilon}(h)
=
\frac{1}{N_{\Sigma}}
\sum_{k=1}^{K}
\sum_{i=1}^{N_{k}}
\ell_{h}
\big(
\boldsymbol{\varphi}_{k,i},
a_{k,i}
\big),
\end{equation}
\begin{equation}
\mathcal{L}_{\Sigma,\upsilon}(h)
:=
\sum_{k=1}^{K}
w_{k}
\mathcal{L}_{k,\upsilon}(h),
\end{equation}
and the associated pooled source law is
\begin{equation}
\mathbb{P}_{\Sigma,\upsilon}
:=
\sum_{k=1}^{K}
w_{k}
\mathbb{P}_{k,\upsilon},
\end{equation}
so that
\begin{equation}
\mathcal{L}_{\Sigma,\upsilon}(h)
=
\mathbb{E}_{(\boldsymbol{\varphi},a)\sim\mathbb{P}_{\Sigma,\upsilon}}
\big[
\ell_{h}(\boldsymbol{\varphi},a)
\big].
\end{equation}
\par
The multi-source joint approximation term is then
\begin{equation}
\lambda_{\Sigma,\upsilon}^{\star}
:=
\inf_{h\in\mathcal{H}_{\upsilon}}
\left(
\mathcal{L}_{\Sigma,\upsilon}(h)
+
\mathcal{L}_{\mathrm{tar},\upsilon}(h)
\right).
\end{equation}
\par
For each source-target pair $(k,\mathrm{tar})$, define the intermediate path
\begin{equation}
\mathfrak{d}_{k,0}
:=
\left(
p_{k},
\beta_{k},
\boldsymbol{\xi}_{k}
\right),\quad
\mathfrak{d}_{k,1}
:=
\left(
p_{\mathrm{tar}},
\beta_{k},
\boldsymbol{\xi}_{k}
\right),\quad
\mathfrak{d}_{k,2}
:=
\left(
p_{\mathrm{tar}},
\beta_{\mathrm{tar}},
\boldsymbol{\xi}_{k}
\right),\quad
\mathfrak{d}_{k,3}
:=
\left(
p_{\mathrm{tar}},
\beta_{\mathrm{tar}},
\boldsymbol{\xi}_{\mathrm{tar}}
\right),
\end{equation}
the prior fragments
\begin{equation}
\mathfrak{A}_{\mathrm{pri}}^{(k,q,q+1)}
:=
\frac{1}{2}
\sum_{a=1}^{C}
\left|
\alpha_{a}^{(\mathfrak{d}_{k,q})}
-
\alpha_{a}^{(\mathfrak{d}_{k,q+1})}
\right|,
\qquad
q\in\{0,1,2\},
\end{equation}
their aggregate
\begin{equation}
\mathfrak{A}_{\mathrm{pri}}^{(k,\mathrm{tar})}
:=
\mathfrak{A}_{\mathrm{pri}}^{(k,0,1)}
+
\mathfrak{A}_{\mathrm{pri}}^{(k,1,2)}
+
\mathfrak{A}_{\mathrm{pri}}^{(k,2,3)},
\end{equation}
and the nuisance residual
\begin{equation}
\mathfrak{N}_{\mathrm{res}}^{(k,\mathrm{tar})}
:=
\sqrt{N_{\tau}L_{t}}
\sup_{\tau,t}
\left|
b_{z}^{(\beta_{k},\boldsymbol{\xi}_{k})}(\tau,t)
-
b_{z}^{(\beta_{\mathrm{tar}},\boldsymbol{\xi}_{\mathrm{tar}})}(\tau,t)
\right|
+
c_{\mathrm{sg}}\sqrt{N_{\tau}L_{t}}
\left(
\varsigma_{\mathrm{n}}(\beta_{k},\boldsymbol{\xi}_{k})
+
\varsigma_{\mathrm{n}}(\beta_{\mathrm{tar}},\boldsymbol{\xi}_{\mathrm{tar}})
\right),
\end{equation}
and aggregate the entire structured source-$k$-to-target remainder into
\begin{equation}
\mathfrak{S}_{k,\upsilon}^{\mathrm{ms}\rightarrow\mathrm{tar}}
:=
4J_{\upsilon}L_{\upsilon}
\Big(
\Gamma_{\mathrm{per}}
\overline{D}_{p_{k},p_{\mathrm{tar}}}^{(\mathrm{cp})}
+
\Gamma_{\mathrm{view}}
\mathfrak{V}_{\beta_{k},\beta_{\mathrm{tar}}}
+
\Gamma_{\mathrm{wall}}
\mathfrak{W}_{\boldsymbol{\xi}_{k},\boldsymbol{\xi}_{\mathrm{tar}}}^{(\beta_{\mathrm{tar}})}
+
\mathfrak{N}_{\mathrm{res}}^{(k,\mathrm{tar})}
\Big)
+
\mathfrak{C}_{\mathrm{pri},\upsilon}
\mathfrak{A}_{\mathrm{pri}}^{(k,\mathrm{tar})},
\end{equation}
where $\mathfrak{S}_{k,\upsilon}^{\mathrm{ms}\rightarrow\mathrm{tar}}\geq 0$ because all components are nonnegative by construction. The multi-source empirical risk minimizer is written as
\begin{equation}
\widehat{h}_{\Sigma,\upsilon}
\in
\arg\min_{h\in\mathcal{H}_{\upsilon}}
\widehat{\mathcal{L}}_{\Sigma,\upsilon}(h),
\end{equation}
and is well-defined by the same compactness argument used in Subsection II-E.\par
\begin{theorem}\label{thm:multi_source_bound}
Let $\delta\in(0,1)$. Then, with probability at least $1-\delta$ over the independent draw of $\{\mathcal{S}_{k,\upsilon}\}_{k=1}^{K}$, the inequality
\begin{equation}
\mathcal{L}_{\mathrm{tar},\upsilon}(h)
\leq
\widehat{\mathcal{L}}_{\Sigma,\upsilon}(h)
+
\frac{2\Psi_{\upsilon}^{\mathrm{cmp}}}{\sqrt{N_{\Sigma}}}
+
B_{\ell,\upsilon}
\sqrt{
\frac{\log(1/\delta)}{2N_{\Sigma}}
}
+
\sum_{k=1}^{K}
w_{k}
\mathfrak{S}_{k,\upsilon}^{\mathrm{ms}\rightarrow\mathrm{tar}}
+
\lambda_{\Sigma,\upsilon}^{\star},
\qquad
\forall h\in\mathcal{H}_{\upsilon},
\end{equation}
holds simultaneously.
\end{theorem}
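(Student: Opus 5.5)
The proof will follow the two-stage structure of Theorem~\ref{thm:unified_bound}: a uniform deviation bound for the pooled empirical risk, then a deterministic transfer inequality from the pooled source law $\mathbb{P}_{\Sigma,\upsilon}$ to the target. Finally, the discrepancy will be opened source by source through Lemma~\ref{lem:path_decomp} and Lemmas~\ref{lem:cross_person_transport}--\ref{lem:cross_wall_transport}.

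\textbf{Concentration stage.} Let $\Xi_{\Sigma}:=\sup_{h\in\mathcal{H}_{\upsilon}}(\mathcal{L}_{\Sigma,\upsilon}(h)-\widehat{\mathcal{L}}_{\Sigma,\upsilon}(h))$. The $N_{\Sigma}$ pooled pairs are independent, though not identically distributed. Replacing any single pair changes $\widehat{\mathcal{L}}_{\Sigma,\upsilon}$ by at most $B_{\ell,\upsilon}/N_{\Sigma}$, so McDiarmid's inequality applies exactly as before and yields the deviation term $B_{\ell,\upsilon}\sqrt{\log(1/\delta)/(2N_{\Sigma})}$.

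For the expectation, I will use a ghost sample with the same per-domain structure, drawing $N_k$ copies from each $\mathbb{P}_{k,\upsilon}$. Symmetrization is valid for independent non-identical samples because each original and ghost pair is exchangeable. This gives $\mathbb{E}\,\Xi_{\Sigma}\le 2\mathfrak{R}^{\Sigma}$, where $\mathfrak{R}^{\Sigma}$ is the Rademacher complexity of $\mathcal{F}_{\upsilon}$ on the pooled sample.

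I then need $\mathfrak{R}^{\Sigma}\le\Psi_{\upsilon}^{\mathrm{cmp}}/\sqrt{N_{\Sigma}}$. The norm-based estimate behind \eqref{eq:complexity_specialization} is conditional on the sample. It uses only $\|\boldsymbol{\varphi}\|_2\le B_{\varphi,\upsilon}$ and the layer norm bounds, never the identical-distribution assumption, so it transfers verbatim with $N_{\Sigma}$ in place of $N_{\mathrm{src}}$. I will state this explicitly, since it is the one point where the earlier result is reused in a slightly different setting. The outcome is that, with probability at least $1-\delta$, $\mathcal{L}_{\Sigma,\upsilon}(h)\le\widehat{\mathcal{L}}_{\Sigma,\upsilon}(h)+2\Psi_{\upsilon}^{\mathrm{cmp}}/\sqrt{N_{\Sigma}}+B_{\ell,\upsilon}\sqrt{\log(1/\delta)/(2N_{\Sigma})}$ uniformly in $h$.

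\textbf{Transfer stage.} I repeat the $\epsilon$-argument from Theorem~\ref{thm:unified_bound} with $\mathbb{P}_{\mathrm{src},\upsilon}$ replaced by $\mathbb{P}_{\Sigma,\upsilon}$ and $\lambda^{\star}_{\upsilon}$ replaced by $\lambda^{\star}_{\Sigma,\upsilon}$. This gives $\mathcal{L}_{\mathrm{tar},\upsilon}(h)\le\mathcal{L}_{\Sigma,\upsilon}(h)+\operatorname{disc}^{\mathrm{pair}}_{\upsilon}(\mathbb{P}_{\Sigma,\upsilon},\mathbb{P}_{\mathrm{tar},\upsilon})+\lambda^{\star}_{\Sigma,\upsilon}$. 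By Lemma~\ref{lem:pairwise_transport}, the discrepancy is bounded by $W_{1,\mathfrak{c}_{\upsilon}}(\mathbb{P}_{\Sigma,\upsilon},\mathbb{P}_{\mathrm{tar},\upsilon})$.

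The key step is convexity of $W_1$ in its first argument. Take (near-)optimal couplings $\pi_k$ between $\mathbb{P}_{k,\upsilon}$ and $\mathbb{P}_{\mathrm{tar},\upsilon}$ and form $\sum_k w_k\pi_k$. This is a coupling of $\mathbb{P}_{\Sigma,\upsilon}$ and $\mathbb{P}_{\mathrm{tar},\upsilon}$, so $W_{1,\mathfrak{c}_{\upsilon}}(\mathbb{P}_{\Sigma,\upsilon},\mathbb{P}_{\mathrm{tar},\upsilon})\le\sum_k w_k W_{1,\mathfrak{c}_{\upsilon}}(\mathbb{P}_{k,\upsilon},\mathbb{P}_{\mathrm{tar},\upsilon})$.

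\textbf{Per-source decomposition.} For each $k$, I apply the argument in the proof of Theorem~\ref{thm:structured_bound} along the path $\mathfrak{d}_{k,0}\to\mathfrak{d}_{k,1}\to\mathfrak{d}_{k,2}\to\mathfrak{d}_{k,3}$. The steps are:
\begin{enumerate}
\item separate the nuisance residual, which costs $4J_{\upsilon}L_{\upsilon}\mathfrak{N}^{(k,\mathrm{tar})}_{\mathrm{res}}$;
\item apply Lemma~\ref{lem:path_decomp} with the prior fragments $\mathfrak{A}^{(k,q,q+1)}_{\mathrm{pri}}$;
\item apply Lemmas~\ref{lem:cross_person_transport}--\ref{lem:cross_wall_transport} with $(p_{\mathrm{src}},\beta_{\mathrm{src}},\boldsymbol{\xi}_{\mathrm{src}})$ replaced by $(p_k,\beta_k,\boldsymbol{\xi}_k)$.
\end{enumerate}
These lemmas are stated for an arbitrary source descriptor, so they apply unchanged, and the result is $W_{1,\mathfrak{c}_{\upsilon}}(\mathbb{P}_{k,\upsilon},\mathbb{P}_{\mathrm{tar},\upsilon})\le\mathfrak{S}^{\mathrm{ms}\to\mathrm{tar}}_{k,\upsilon}$.

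\textbf{Combination.} Chaining the deterministic transfer inequality with the concentration event, then sending $\epsilon\downarrow0$, yields the stated bound simultaneously for all $h$.

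I expect the main obstacle to be justifying the complexity reuse for non-identically distributed pooled samples. I would handle it by noting that the norm-based Rademacher bound holds conditionally on any fixed sample in $\mathfrak{X}_{\upsilon}^{N_{\Sigma}}\times\mathcal{A}^{N_{\Sigma}}$ and then taking expectations.
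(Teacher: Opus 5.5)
Your proposal is correct and follows the paper's proof essentially step for step: McDiarmid and symmetrization on the pooled sample with the norm-based complexity bound reused at $N_{\Sigma}$, then the $\epsilon$-transfer argument with $\lambda^{\star}_{\Sigma,\upsilon}$, then Lemma~\ref{lem:pairwise_transport}, then the mixture coupling $\sum_k w_k\pi_k$, and finally the per-source structured refinement yielding $\mathfrak{S}^{\mathrm{ms}\rightarrow\mathrm{tar}}_{k,\upsilon}$. Two of your steps are slightly more careful than the paper, without changing the route: you handle the non-identically distributed ghost sample and the sample-conditional complexity bound explicitly, and you use near-optimal couplings where the paper invokes compactness to obtain optimal ones.
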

\begin{proof}
The weighted empirical risk is exactly the empirical average over the concatenated multi-source sample,
\begin{equation}
\widehat{\mathcal{L}}_{\Sigma,\upsilon}(h)
=
\frac{1}{N_{\Sigma}}
\sum_{k=1}^{K}
\sum_{i=1}^{N_{k}}
\ell_{h}
\big(
\boldsymbol{\varphi}_{k,i},
a_{k,i}
\big).
\end{equation}
\par
Accordingly, define the multi-source Rademacher complexity by
\begin{equation}
\mathfrak{R}_{N_{\Sigma}}^{\mathrm{ms}}(\mathcal{F}_{\upsilon})
:=
\mathbb{E}_{\{\mathcal{S}_{k,\upsilon}\}_{k=1}^{K},\boldsymbol{\varepsilon}}
\left[
\sup_{h\in\mathcal{H}_{\upsilon}}
\frac{1}{N_{\Sigma}}
\sum_{k=1}^{K}
\sum_{i=1}^{N_{k}}
\varepsilon_{k,i}
\ell_{h}
\big(
\boldsymbol{\varphi}_{k,i},
a_{k,i}
\big)
\right],
\end{equation}
where $\{\varepsilon_{k,i}\}$ are independent Rademacher variables. The proof of the unified source-to-target result from Subsection III-A depends only on the independence and boundedness of $\ell_{h}$, the normalization by the total sample size, and the corresponding source law. Therefore, when that argument is applied to the concatenated sample and the pooled law $\mathbb{P}_{\Sigma,\upsilon}$, one obtains
\begin{equation}
\mathcal{L}_{\mathrm{tar},\upsilon}(h)
\leq
\widehat{\mathcal{L}}_{\Sigma,\upsilon}(h)
+
2\mathfrak{R}_{N_{\Sigma}}^{\mathrm{ms}}(\mathcal{F}_{\upsilon})
+
B_{\ell,\upsilon}
\sqrt{
\frac{\log(1/\delta)}{2N_{\Sigma}}
}
+
\operatorname{disc}_{\upsilon}^{\mathrm{pair}}
\big(
\mathbb{P}_{\Sigma,\upsilon},
\mathbb{P}_{\mathrm{tar},\upsilon}
\big)
+
\lambda_{\Sigma,\upsilon}^{\star},
\end{equation}
simultaneously for all $h\in\mathcal{H}_{\upsilon}$ on a probability-$1-\delta$ event.\par
The same norm-controlled complexity argument as in Subsection III-A yields
\begin{equation}
\mathfrak{R}_{N_{\Sigma}}^{\mathrm{ms}}(\mathcal{F}_{\upsilon})
\leq
\frac{\Psi_{\upsilon}^{\mathrm{cmp}}}{\sqrt{N_{\Sigma}}},
\end{equation}
and the pairwise-loss transport lemma from Subsection III-A implies
\begin{equation}
\operatorname{disc}_{\upsilon}^{\mathrm{pair}}
\big(
\mathbb{P}_{\Sigma,\upsilon},
\mathbb{P}_{\mathrm{tar},\upsilon}
\big)
\leq
W_{1,\mathfrak{c}_{\upsilon}}
\big(
\mathbb{P}_{\Sigma,\upsilon},
\mathbb{P}_{\mathrm{tar},\upsilon}
\big).
\end{equation}
\par
It remains to upper-bound the last Wasserstein term. For each $k\in\{1,\ldots,K\}$, let $\pi_{k}^{\star}\in\Pi(\mathbb{P}_{k,\upsilon},\mathbb{P}_{\mathrm{tar},\upsilon})$ be an optimal coupling under the cost $\mathfrak{c}_{\upsilon}$, and define the mixture coupling
\begin{equation}
\pi_{\Sigma}^{\star}
:=
\sum_{k=1}^{K}
w_{k}
\pi_{k}^{\star}.
\end{equation}
\par
Since $\mathfrak{X}_{\upsilon}$ is compact and $\mathcal{A}$ is finite, the product space $\mathfrak{X}_{\upsilon}\times\mathcal{A}$ is compact. Hence the continuous cost $\mathfrak{c}_{\upsilon}$ admits optimal couplings for both $(\mathbb{P}_{k,\upsilon},\mathbb{P}_{\mathrm{tar},\upsilon})$ and $(\mathbb{P}_{\Sigma,\upsilon},\mathbb{P}_{\mathrm{tar},\upsilon})$. For compactness, let $\mathfrak{m}_{\Sigma,\upsilon}^{\mathrm{tr}}$ denote the transported cost $\mathfrak{c}_{\upsilon}((\boldsymbol{\varphi},a),(\boldsymbol{\varphi}',a'))$ under $\pi_{\Sigma}^{\star}$, and let $\mathfrak{m}_{k,\upsilon}^{\mathrm{tr}}$ denote the same cost under $\pi_{k}^{\star}$. Since the first marginal of $\pi_{\Sigma}^{\star}$ is $\sum_{k=1}^{K}w_{k}\mathbb{P}_{k,\upsilon}=\mathbb{P}_{\Sigma,\upsilon}$ and the second marginal is $\mathbb{P}_{\mathrm{tar},\upsilon}$, one has $\pi_{\Sigma}^{\star}\in\Pi(\mathbb{P}_{\Sigma,\upsilon},\mathbb{P}_{\mathrm{tar},\upsilon})$. Therefore,
\begin{equation}
W_{1,\mathfrak{c}_{\upsilon}}
(\mathbb{P}_{\Sigma,\upsilon},\mathbb{P}_{\mathrm{tar},\upsilon})
\leq
\mathbb{E}_{\pi_{\Sigma}^{\star}}
\big[
\mathfrak{m}_{\Sigma,\upsilon}^{\mathrm{tr}}
\big]
=
\sum_{k=1}^{K}
w_{k}
\mathbb{E}_{\pi_{k}^{\star}}
\big[
\mathfrak{m}_{k,\upsilon}^{\mathrm{tr}}
\big]
=
\sum_{k=1}^{K}
w_{k}
W_{1,\mathfrak{c}_{\upsilon}}
(\mathbb{P}_{k,\upsilon},\mathbb{P}_{\mathrm{tar},\upsilon}).
\end{equation}
\par
For every $k$, the same structured transport refinement as in Subsection III-B may be applied after replacing the single source domain by $\mathfrak{d}_{k}$. Hence,
\begin{equation}
W_{1,\mathfrak{c}_{\upsilon}}
\big(
\mathbb{P}_{k,\upsilon},
\mathbb{P}_{\mathrm{tar},\upsilon}
\big)
\leq
\mathfrak{S}_{k,\upsilon}^{\mathrm{ms}\rightarrow\mathrm{tar}}.
\end{equation}
\par
Consequently,
\begin{equation}
W_{1,\mathfrak{c}_{\upsilon}}
\big(
\mathbb{P}_{\Sigma,\upsilon},
\mathbb{P}_{\mathrm{tar},\upsilon}
\big)
\leq
\sum_{k=1}^{K}
w_{k}
\mathfrak{S}_{k,\upsilon}^{\mathrm{ms}\rightarrow\mathrm{tar}}.
\end{equation}
\par
Substituting the last two bounds into the pooled source-to-target inequality proves the theorem.\par
\end{proof}
\begin{corollary}\label{cor:multi_source_bound_erm}
Under the same probability event as in the preceding theorem, the multi-source empirical risk minimizer satisfies
\begin{equation}
\mathcal{L}_{\mathrm{tar},\upsilon}
\big(
\widehat{h}_{\Sigma,\upsilon}
\big)
\leq
\widehat{\mathcal{L}}_{\Sigma,\upsilon}
\big(
\widehat{h}_{\Sigma,\upsilon}
\big)
+
\frac{2\Psi_{\upsilon}^{\mathrm{cmp}}}{\sqrt{N_{\Sigma}}}
+
B_{\ell,\upsilon}
\sqrt{
\frac{\log(1/\delta)}{2N_{\Sigma}}
}
+
\sum_{k=1}^{K}
w_{k}
\mathfrak{S}_{k,\upsilon}^{\mathrm{ms}\rightarrow\mathrm{tar}}
+
\lambda_{\Sigma,\upsilon}^{\star}.
\end{equation}
\par
Since $\mathcal{L}_{\mathrm{tar},\upsilon}^{\star}\geq 0$, the same right-hand side also upper-bounds the excess target risk $\mathcal{E}_{\mathrm{tar},\upsilon}(\widehat{h}_{\Sigma,\upsilon})$.\par
\end{corollary}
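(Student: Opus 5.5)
The corollary follows by specializing Theorem~\ref{thm:multi_source_bound} to the random hypothesis $\widehat{h}_{\Sigma,\upsilon}$. The key point is that the theorem's inequality holds \emph{simultaneously} for all $h\in\mathcal{H}_{\upsilon}$ on a single probability-$(1-\delta)$ event over $\{\mathcal{S}_{k,\upsilon}\}_{k=1}^{K}$. First I will recall that $\widehat{h}_{\Sigma,\upsilon}$ is well defined and belongs to $\mathcal{H}_{\upsilon}$. This holds because $\mathfrak{P}_{\upsilon}$ is compact and $\Theta_{\upsilon}\mapsto\widehat{\mathcal{L}}_{\Sigma,\upsilon}(h_{\Theta_{\upsilon},\upsilon})$ is continuous, exactly as in Subsection II-E; a measurable selection of the minimizer may be fixed if needed. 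Since $\widehat{h}_{\Sigma,\upsilon}$ depends on the sample, a fixed-$h$ bound would not suffice, but uniformity removes this issue. On the event of Theorem~\ref{thm:multi_source_bound}, I will evaluate the uniform inequality at $h=\widehat{h}_{\Sigma,\upsilon}$. This gives the displayed bound directly, since every other term on the right-hand side ($\Psi_{\upsilon}^{\mathrm{cmp}}$, $B_{\ell,\upsilon}$, $\mathfrak{S}_{k,\upsilon}^{\mathrm{ms}\rightarrow\mathrm{tar}}$, $\lambda_{\Sigma,\upsilon}^{\star}$) is deterministic and independent of $h$.

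For the excess-risk statement, I will use the nonnegativity of the cross-entropy loss, $0\leq\ell_{h}\leq B_{\ell,\upsilon}$ from Subsection II-E. It implies $\mathcal{L}_{\mathrm{tar},\upsilon}(h)\geq 0$ for all $h$, hence $\mathcal{L}_{\mathrm{tar},\upsilon}^{\star}\geq 0$. Then $\mathcal{E}_{\mathrm{tar},\upsilon}(\widehat{h}_{\Sigma,\upsilon})=\mathcal{L}_{\mathrm{tar},\upsilon}(\widehat{h}_{\Sigma,\upsilon})-\mathcal{L}_{\mathrm{tar},\upsilon}^{\star}\leq\mathcal{L}_{\mathrm{tar},\upsilon}(\widehat{h}_{\Sigma,\upsilon})$, and chaining this with the first bound finishes the proof. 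This mirrors the proofs of Corollaries~\ref{cor:unified_bound_erm} and~\ref{cor:structured_bound_erm}.

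There is no genuine obstacle. The only point deserving care is the data-dependence of $\widehat{h}_{\Sigma,\upsilon}$, and it is handled entirely by the simultaneity over $\mathcal{H}_{\upsilon}$ already built into Theorem~\ref{thm:multi_source_bound}.
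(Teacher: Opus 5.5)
Your proposal is correct and matches the paper's proof. The paper instantiates the uniform bound of Theorem~\ref{thm:multi_source_bound} at $h=\widehat{h}_{\Sigma,\upsilon}$ on the same event, and gets the excess-risk claim from $\mathcal{E}_{\mathrm{tar},\upsilon}(\widehat{h}_{\Sigma,\upsilon})\leq\mathcal{L}_{\mathrm{tar},\upsilon}(\widehat{h}_{\Sigma,\upsilon})$ because $\mathcal{L}_{\mathrm{tar},\upsilon}^{\star}\geq 0$. Your remarks on why the minimizer exists and why simultaneity over $\mathcal{H}_{\upsilon}$ handles its data-dependence are accurate points the paper leaves implicit.
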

\begin{proof}
The preceding theorem is applied with $h=\widehat{h}_{\Sigma,\upsilon}$. The excess-risk claim follows from
\begin{equation}
\mathcal{E}_{\mathrm{tar},\upsilon}
\big(
\widehat{h}_{\Sigma,\upsilon}
\big)
=
\mathcal{L}_{\mathrm{tar},\upsilon}
\big(
\widehat{h}_{\Sigma,\upsilon}
\big)
-
\mathcal{L}_{\mathrm{tar},\upsilon}^{\star}
\leq
\mathcal{L}_{\mathrm{tar},\upsilon}
\big(
\widehat{h}_{\Sigma,\upsilon}
\big).
\end{equation}
\par
\end{proof}
The previous result averages the structured discrepancies over all training domains. A complementary statement isolates the purely geometric gain caused by the nearest covered source domain.\par
For each $k\in\{1,\ldots,K\}$, define the single-source approximation term
\begin{equation}
\lambda_{k,\upsilon}^{\star}
:=
\inf_{h\in\mathcal{H}_{\upsilon}}
\left(
\mathcal{L}_{k,\upsilon}(h)
+
\mathcal{L}_{\mathrm{tar},\upsilon}(h)
\right),
\end{equation}
the physical coverage discrepancy
\begin{equation}
\mathfrak{C}_{\mathrm{cov}}
\big(
\mathfrak{d}_{k},
\mathfrak{d}_{\mathrm{tar}}
\big)
:=
\Gamma_{\mathrm{per}}
\overline{D}_{p_{k},p_{\mathrm{tar}}}^{(\mathrm{cp})}
+
\Gamma_{\mathrm{view}}
\mathfrak{V}_{\beta_{k},\beta_{\mathrm{tar}}}
+
\Gamma_{\mathrm{wall}}
\mathfrak{W}_{\boldsymbol{\xi}_{k},\boldsymbol{\xi}_{\mathrm{tar}}}^{(\beta_{\mathrm{tar}})},
\end{equation}
and the coverage radius
\begin{equation}
r_{\mathrm{cov}}
\big(
\mathfrak{d}_{\mathrm{tar}};
\mathcal{D}_{\mathrm{tr}}
\big)
:=
\min_{1\leq k\leq K}
\mathfrak{C}_{\mathrm{cov}}
\big(
\mathfrak{d}_{k},
\mathfrak{d}_{\mathrm{tar}}
\big).
\end{equation}
\par
Choose one index
\begin{equation}
k^{\star}
\in
\arg\min_{1\leq k\leq K}
\mathfrak{C}_{\mathrm{cov}}
\big(
\mathfrak{d}_{k},
\mathfrak{d}_{\mathrm{tar}}
\big),
\end{equation}
where ties may be broken arbitrarily.\par
\begin{theorem}\label{thm:coverage_radius_bound}
Let $\delta\in(0,1)$. Under the same high-probability event as the structured single-source result in Subsection III-B with source domain $\mathfrak{d}_{k^{\star}}$, the inequality
\begin{equation}
\begin{aligned}
\mathcal{L}_{\mathrm{tar},\upsilon}(h)
\leq{}&
\widehat{\mathcal{L}}_{k^{\star},\upsilon}(h)
+
\frac{2\Psi_{\upsilon}^{\mathrm{cmp}}}{\sqrt{N_{k^{\star}}}}
+
B_{\ell,\upsilon}
\sqrt{
\frac{\log(1/\delta)}{2N_{k^{\star}}}
}
+
4J_{\upsilon}L_{\upsilon}
r_{\mathrm{cov}}
\big(
\mathfrak{d}_{\mathrm{tar}};
\mathcal{D}_{\mathrm{tr}}
\big)
+
4J_{\upsilon}L_{\upsilon}
\mathfrak{N}_{\mathrm{res}}^{(k^{\star},\mathrm{tar})}\\
&+
\mathfrak{C}_{\mathrm{pri},\upsilon}
\mathfrak{A}_{\mathrm{pri}}^{(k^{\star},\mathrm{tar})}
+
\lambda_{k^{\star},\upsilon}^{\star},
\qquad
\forall h\in\mathcal{H}_{\upsilon},
\end{aligned}
\end{equation}
holds simultaneously.\par
Furthermore, let $\mathfrak{D}_{\mathrm{tar}}^{\mathrm{cov}}\subseteq\mathfrak{D}$ be a target-domain family within the global domain space defined in Subsection II-E. If there exist constants $\varepsilon_{\mathrm{per}}\geq 0$, $\varepsilon_{\mathrm{view}}\geq 0$, and $\varepsilon_{\mathrm{wall}}\geq 0$ such that, for every $\mathfrak{d}_{\mathrm{tar}}\in\mathfrak{D}_{\mathrm{tar}}^{\mathrm{cov}}$, one can find at least one $k$ satisfying
\begin{equation}
\overline{D}_{p_{k},p_{\mathrm{tar}}}^{(\mathrm{cp})}
\leq
\varepsilon_{\mathrm{per}},
\qquad
\mathfrak{V}_{\beta_{k},\beta_{\mathrm{tar}}}
\leq
\varepsilon_{\mathrm{view}},
\qquad
\mathfrak{W}_{\boldsymbol{\xi}_{k},\boldsymbol{\xi}_{\mathrm{tar}}}^{(\beta_{\mathrm{tar}})}
\leq
\varepsilon_{\mathrm{wall}},
\end{equation}
then the same bound holds uniformly over $\mathfrak{D}_{\mathrm{tar}}^{\mathrm{cov}}$ after replacing the coverage term by
\begin{equation}
4J_{\upsilon}L_{\upsilon}
\big(
\Gamma_{\mathrm{per}}\varepsilon_{\mathrm{per}}
+
\Gamma_{\mathrm{view}}\varepsilon_{\mathrm{view}}
+
\Gamma_{\mathrm{wall}}\varepsilon_{\mathrm{wall}}
\big).
\end{equation}
\end{theorem}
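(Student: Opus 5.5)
The plan is to treat $\mathfrak{d}_{k^{\star}}$ as a single source domain and reuse Theorem~\ref{thm:structured_bound} without change. Set $\mathfrak{d}_{\mathrm{src}}:=\mathfrak{d}_{k^{\star}}$ and $\mathcal{S}_{\mathrm{src},\upsilon}:=\mathcal{S}_{k^{\star},\upsilon}$, so that $N_{\mathrm{src}}=N_{k^{\star}}$. With these substitutions the following identifications hold:
\begin{itemize}
\item $\widehat{\mathcal{L}}_{\mathrm{src},\upsilon}=\widehat{\mathcal{L}}_{k^{\star},\upsilon}$ and $\lambda_{\upsilon}^{\star}=\lambda_{k^{\star},\upsilon}^{\star}$;
\item the intermediate path $\mathfrak{d}_{0}\to\mathfrak{d}_{1}\to\mathfrak{d}_{2}\to\mathfrak{d}_{3}$ of Subsection III-B becomes $\mathfrak{d}_{k^{\star},0}\to\cdots\to\mathfrak{d}_{k^{\star},3}$;
\item $\mathfrak{A}_{\mathrm{pri}}=\mathfrak{A}_{\mathrm{pri}}^{(k^{\star},\mathrm{tar})}$ and $\mathfrak{N}_{\mathrm{res}}=\mathfrak{N}_{\mathrm{res}}^{(k^{\star},\mathrm{tar})}$.
\end{itemize}
On the probability-$(1-\delta)$ event over $\mathcal{S}_{k^{\star},\upsilon}$ supplied by Theorem~\ref{thm:structured_bound}, the bound then holds simultaneously for all $h\in\mathcal{H}_{\upsilon}$. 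It contains the three physical terms $\mathfrak{B}_{\upsilon}^{\mathrm{per}}+\mathfrak{B}_{\upsilon}^{\mathrm{view}}+\mathfrak{B}_{\upsilon}^{\mathrm{wall}}$, which sum to $4J_{\upsilon}L_{\upsilon}\,\mathfrak{C}_{\mathrm{cov}}(\mathfrak{d}_{k^{\star}},\mathfrak{d}_{\mathrm{tar}})$. Since $k^{\star}$ attains the minimum, this equals $4J_{\upsilon}L_{\upsilon}\,r_{\mathrm{cov}}(\mathfrak{d}_{\mathrm{tar}};\mathcal{D}_{\mathrm{tr}})$, which gives the first display.

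One subtlety needs care. The choice of $k^{\star}$ depends only on deterministic physical descriptors, not on the samples, so fixing it in advance introduces no selection bias and no union bound over $k$ is required. I would state this explicitly. The "same high-probability event" in the statement is then exactly the event of Theorem~\ref{thm:structured_bound} for source $\mathfrak{d}_{k^{\star}}$.

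For the uniform statement, fix any $\mathfrak{d}_{\mathrm{tar}}\in\mathfrak{D}_{\mathrm{tar}}^{\mathrm{cov}}$. By hypothesis there is an index $k$ satisfying the three $\varepsilon$-conditions. Since $\Gamma_{\mathrm{per}},\Gamma_{\mathrm{view}},\Gamma_{\mathrm{wall}}\geq 0$, this gives
\[
r_{\mathrm{cov}}(\mathfrak{d}_{\mathrm{tar}};\mathcal{D}_{\mathrm{tr}})\leq \mathfrak{C}_{\mathrm{cov}}(\mathfrak{d}_{k},\mathfrak{d}_{\mathrm{tar}})\leq \Gamma_{\mathrm{per}}\varepsilon_{\mathrm{per}}+\Gamma_{\mathrm{view}}\varepsilon_{\mathrm{view}}+\Gamma_{\mathrm{wall}}\varepsilon_{\mathrm{wall}}.
\]
Multiplying by $4J_{\upsilon}L_{\upsilon}\geq 0$ and substituting into the first display yields the replaced coverage term. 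The remaining terms ($\mathfrak{N}_{\mathrm{res}}$, $\mathfrak{A}_{\mathrm{pri}}$, $\lambda^{\star}$, and the sample-size terms) stay indexed by the selected $k^{\star}(\mathfrak{d}_{\mathrm{tar}})$.

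The main obstacle is interpreting "uniformly." The event, the index $k^{\star}$, and therefore $N_{k^{\star}}$ all depend on $\mathfrak{d}_{\mathrm{tar}}$. I would read the claim pointwise in $\mathfrak{d}_{\mathrm{tar}}$: for each target in $\mathfrak{D}_{\mathrm{tar}}^{\mathrm{cov}}$ the bound holds with the uniform coverage constant. If simultaneity over the whole family is wanted, I would first note that the event for a fixed source $k$ does not depend on the target. Indeed, the uniform deviation bound of Theorem~\ref{thm:unified_bound} concerns only $\mathcal{S}_{k,\upsilon}$, while the shift part is deterministic. Intersecting the $K$ source events with $\delta/K$ each then covers all targets at once, at the price of replacing $\log(1/\delta)$ by $\log(K/\delta)$. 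I would remark on this as the honest form of uniformity rather than silently claim the $\log(1/\delta)$ version.
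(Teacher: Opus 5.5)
Your proposal is correct and follows the paper's own argument: instantiate Theorem~\ref{thm:structured_bound} with source $\mathfrak{d}_{k^{\star}}$, recognize $\mathfrak{B}_{\upsilon}^{\mathrm{per}}+\mathfrak{B}_{\upsilon}^{\mathrm{view}}+\mathfrak{B}_{\upsilon}^{\mathrm{wall}}=4J_{\upsilon}L_{\upsilon}\mathfrak{C}_{\mathrm{cov}}(\mathfrak{d}_{k^{\star}},\mathfrak{d}_{\mathrm{tar}})=4J_{\upsilon}L_{\upsilon}r_{\mathrm{cov}}(\mathfrak{d}_{\mathrm{tar}};\mathcal{D}_{\mathrm{tr}})$, and for the uniform part bound $r_{\mathrm{cov}}$ by $\mathfrak{C}_{\mathrm{cov}}$ of the covering index. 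Your two extra observations are correct refinements that the paper leaves implicit: $k^{\star}$ does not depend on the samples, so no selection bias arises, and the \emph{uniform} claim is really pointwise in $\mathfrak{d}_{\mathrm{tar}}$ unless one intersects the $K$ source events at the cost of replacing $\log(1/\delta)$ by $\log(K/\delta)$.
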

\begin{proof}
By the definition of $k^{\star}$,
\begin{equation}
\mathfrak{C}_{\mathrm{cov}}
\big(
\mathfrak{d}_{k^{\star}},
\mathfrak{d}_{\mathrm{tar}}
\big)
=
r_{\mathrm{cov}}
\big(
\mathfrak{d}_{\mathrm{tar}};
\mathcal{D}_{\mathrm{tr}}
\big),
\end{equation}
namely
\begin{equation}
\begin{aligned}
\Gamma_{\mathrm{per}}
\overline{D}_{p_{k^{\star}},p_{\mathrm{tar}}}^{(\mathrm{cp})}
+
\Gamma_{\mathrm{view}}
\mathfrak{V}_{\beta_{k^{\star}},\beta_{\mathrm{tar}}}
+
\Gamma_{\mathrm{wall}}
\mathfrak{W}_{\boldsymbol{\xi}_{k^{\star}},\boldsymbol{\xi}_{\mathrm{tar}}}^{(\beta_{\mathrm{tar}})}
=
r_{\mathrm{cov}}
\big(
\mathfrak{d}_{\mathrm{tar}};
\mathcal{D}_{\mathrm{tr}}
\big).
\end{aligned}
\end{equation}
\par
The structured single-source theorem from Subsection III-B may now be applied with the source descriptor $(p_{\mathrm{src}},\beta_{\mathrm{src}},\boldsymbol{\xi}_{\mathrm{src}})$ replaced by $\mathfrak{d}_{k^{\star}}$. The resulting physical shift term is
\begin{equation}
4J_{\upsilon}L_{\upsilon}
\Gamma_{\mathrm{per}}
\overline{D}_{p_{k^{\star}},p_{\mathrm{tar}}}^{(\mathrm{cp})}
+
4J_{\upsilon}L_{\upsilon}
\Gamma_{\mathrm{view}}
\mathfrak{V}_{\beta_{k^{\star}},\beta_{\mathrm{tar}}}
+
4J_{\upsilon}L_{\upsilon}
\Gamma_{\mathrm{wall}}
\mathfrak{W}_{\boldsymbol{\xi}_{k^{\star}},\boldsymbol{\xi}_{\mathrm{tar}}}^{(\beta_{\mathrm{tar}})}
=
4J_{\upsilon}L_{\upsilon}
r_{\mathrm{cov}}
\big(
\mathfrak{d}_{\mathrm{tar}};
\mathcal{D}_{\mathrm{tr}}
\big),
\end{equation}
which proves the first claim after substitution.\par
For the uniform statement, the covering assumption guarantees that, for each $\mathfrak{d}_{\mathrm{tar}}\in\mathfrak{D}_{\mathrm{tar}}^{\mathrm{cov}}$,
\begin{equation}
r_{\mathrm{cov}}
\big(
\mathfrak{d}_{\mathrm{tar}};
\mathcal{D}_{\mathrm{tr}}
\big)
\leq
\Gamma_{\mathrm{per}}\varepsilon_{\mathrm{per}}
+
\Gamma_{\mathrm{view}}\varepsilon_{\mathrm{view}}
+
\Gamma_{\mathrm{wall}}\varepsilon_{\mathrm{wall}}.
\end{equation}
\par
Multiplying the last inequality by $4J_{\upsilon}L_{\upsilon}$ and inserting it into the first bound proves the uniform version.\par
\end{proof}
Hence multi-source training supplies two independent gains in the TWR HAR target risk bound: the larger total sample size $N_{\Sigma}$ contracts the statistical complexity term, whereas the smaller coverage radius $r_{\mathrm{cov}}(\mathfrak{d}_{\mathrm{tar}};\mathcal{D}_{\mathrm{tr}})$ contracts the structured physical shift term. Accordingly, under the coverage and approximation assumptions stated above, increasing the diversity of training persons, training views, and training wall conditions can tighten the upper bound for unseen target domains.\par
\begin{table}[!t]
\begin{center}
\caption{Radar and Scene Settings.\label{tab:exp_radar_scene}}
\vspace{-0.2cm}
\resizebox{0.38\textwidth}{!}{
\begin{tabular}{cc}
\hline\hline
\textbf{Parameter} & \textbf{Value} \\
\hline
Antenna Height to Ground & $1.5 \mathrm{~m}$ \\
Start Frequency & $2.5 \mathrm{~GHz}$ \\
Bandwidth & $1 \mathrm{~GHz}$ \\
Pulse Repetition Frequency & $125 \mathrm{~Hz}$ \\
Fast-Time Samples & $3940$/chirp \\
Human Motion Range from Radar & $1 \sim 6 \mathrm{~m}$ \\
Number of Activities & $12$ \\
\hline\hline
\end{tabular}
}
\end{center}
\vspace{-0.2cm}
\end{table}\par
\begin{table}[!t]
\begin{center}
\caption{Fixed MLP Recognizer Settings$^{*}$.\label{tab:exp_mlp_settings}}
\vspace{-0.2cm}
\resizebox{0.36\textwidth}{!}{
\begin{tabular}{cc}
\hline\hline
\textbf{Setting} & \textbf{Value} \\
\hline
Hidden Layer 1 Width & $256$ \\
Hidden Layer 2 Width & $128$ \\
Hidden Activation & ReLU \\
Output Width & $12$ \\
Output Activation & Softmax \\
Loss Function & Cross-Entropy \\
Optimizer & Adam \\
Batch Size & $32$ \\
Maximum Training Epochs & $80$ \\
Initial Learning Rate & $0.00147$ \\
Regularization Method & $L_2$ \\
\hline\hline
\end{tabular}
}
\end{center}
\vspace{-0.3cm}
\begin{center}
\parbox{0.36\textwidth}{\scriptsize $^{*}$ The same MLP is used for RT, DT, MD, RT+DT+MD, and PHY under all transfer settings.\par}
\end{center}
\vspace{-0.4cm}
\end{table}\par

\section{Simulated and Measured Experiments}
This section first gives the experimental settings. Then, the cross-person, cross-view, and cross-wall results under different feature representations are reported, and the effects of multi-source training, parameter-space coverage, and limited target-set expansion are further examined. Finally, brief discussions on the current theoretical scope are given.\par
\begin{figure*}[!t]
\centering
\includegraphics[width=\textwidth]{Visualizations_P1_P6.jpg}
\caption{Cross-person visualization for walking ($S8$), where columns correspond to $P1$ to $P6$, and rows $1$--$4$ show simulated range-time maps (RTMs), simulated Doppler-time maps (DTMs), measured RTMs, and measured DTMs, respectively.}
\label{fig:vis_cross_person}
\vspace{-0.0cm}
\end{figure*}\par
\begin{figure*}[!t]
\centering
\includegraphics[width=\textwidth]{Visualizations_V1_V12.jpg}
\caption{Cross-view visualization for walking ($S8$), where columns correspond to $V1$ to $V12$, and rows $1$--$4$ show simulated RTMs, simulated DTMs, measured RTMs, and measured DTMs, respectively.}
\label{fig:vis_cross_view}
\vspace{-0.2cm}
\end{figure*}\par
\begin{figure}[!t]
\centering
\begin{minipage}{0.48\textwidth}
\centering
\includegraphics[width=\textwidth]{Visualizations_SLB_DLB_CC.jpg}
\caption{Cross-wall visualization for walking ($S8$), where columns correspond to SLB, DLB, and CC, and rows $1$--$4$ show simulated RTMs, simulated DTMs, measured RTMs, and measured DTMs, respectively.}
\label{fig:vis_cross_wall}
\end{minipage}
\vspace{-0.0cm}
\end{figure}\par
\subsection{Experimental Settings}
Simulated and measured experiments are jointly used to examine the theoretical trends predicted by Section III. The simulated dataset is generated by RadHARSimulator V1 \cite{RadHARSimulatorV1}, and the measured dataset is collected by our ultrawideband TWR platform in typical urban indoor scenes. The common radar and scene settings are summarized in Table~\ref{tab:exp_radar_scene}. The activity-label set is fixed as $\mathcal{A}=\{S1,\ldots,S12\}$, where $S1$ denotes Empty, $S2$ denotes Punching, $S3$ denotes Kicking, $S4$ denotes Grabbing, $S5$ denotes Sitting Down, $S6$ denotes Standing Up, $S7$ denotes Rotating, $S8$ denotes Walking, $S9$ denotes Sitting to Walking, $S10$ denotes Walking to Sitting, $S11$ denotes Falling to Walking, and $S12$ denotes Walking to Falling. For cross-person evaluation, six persons $P1$--$P6$ are used, where $P1$ is used for training and validation and $P2$--$P6$ are used for testing. For cross-view evaluation, twelve views $V1$--$V12$ are used, where $V1$ is used for training and validation and $V2$--$V12$ are used for testing, with view $V_{L}$ corresponding to $30^{\circ}(L-1)$ for $1\leq L\leq 12$. For cross-wall evaluation, three wall conditions are used, where single-layer brick (SLB) is used for training and validation and double-layer brick (DLB) together with concrete (CC) are used for testing. In all source-domain experiments, each class contributes $300$ groups, giving $3600$ groups in total, and the source data are split by $8:2$ into training and validation. In all target-domain experiments, each testing configuration contributes $30$ groups per class, giving $360$ groups for each target domain.\par
Since classifier design is not the focus of this paper, a fixed multilayer perceptron (MLP) is used in all experiments, and its parameters are summarized in Table~\ref{tab:exp_mlp_settings}. The two hidden-layer widths are $256$ and $128$, and the same optimization setting is used for all transfer tasks.\par
\subsection{Generalization Experiments of Cross-Person/View/Wall and Feature Representation}
As shown in Figs.~\ref{fig:vis_cross_person}, \ref{fig:vis_cross_view}, and \ref{fig:vis_cross_wall}, walking ($S8$) is used as the visualization example, and different structured shifts leave different image distortions. Further radar-image visualization comparisons across different activities can be found in \cite{MicroDopplerRep2025,LightweightMacroMicro2023,MIMOFusion2026}. Under cross-person transfer, the main RTM trajectories and DTM bands are preserved, while echo intensity, stripe thickness, and local fragments are changed, so the person shift is largely reflected as a statistical mismatch. Under cross-view transfer, the RTM slope, DTM bandwidth, and positive/negative Doppler allocation are changed most strongly, with the clearest degradation around $V4$ to $V10$, which is consistent with a stronger geometric perturbation. Under cross-wall transfer, the main target trajectory is retained, but attenuation, background texture, and noise are progressively strengthened from SLB to DLB and CC, which is consistent with propagation-loss- and signal-to-noise-ratio-driven mismatch. The measured results are rougher than the simulated ones, but the same trends remain, which motivates feature-level shift compression.\par
\begin{figure*}[!t]
\centering
\includegraphics[width=\textwidth]{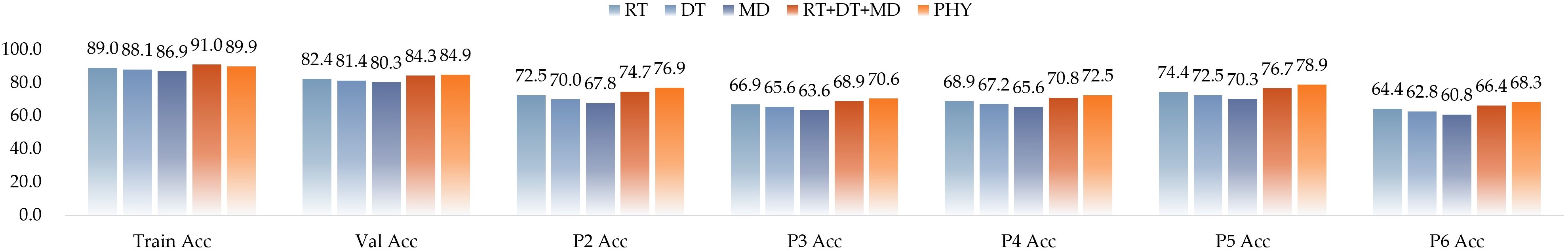}
\caption{Simulated recognition accuracies under the cross-person transfer task. Here, RT, DT, MD, PHY, and Acc denote RTM, DTM, micro-Doppler signature, physics-guided low-dimensional representation, and accuracy, respectively. All accuracy values are percentages.}
\label{fig:simulated_cross_person_results}
\vspace{-0.0cm}
\end{figure*}\par
\begin{figure*}[!t]
\centering
\includegraphics[width=\textwidth]{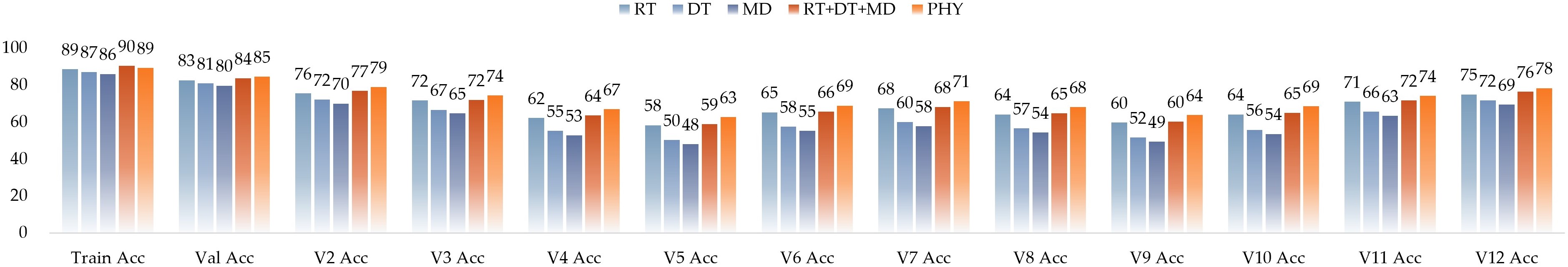}
\caption{Simulated recognition accuracies under the cross-view transfer task. Here, RT, DT, MD, PHY, and Acc denote RTM, DTM, micro-Doppler signature, physics-guided low-dimensional representation, and accuracy, respectively. All accuracy values are percentages.}
\label{fig:simulated_cross_view_results}
\vspace{-0.2cm}
\end{figure*}\par
\begin{figure*}[!t]
\centering
\includegraphics[width=\textwidth]{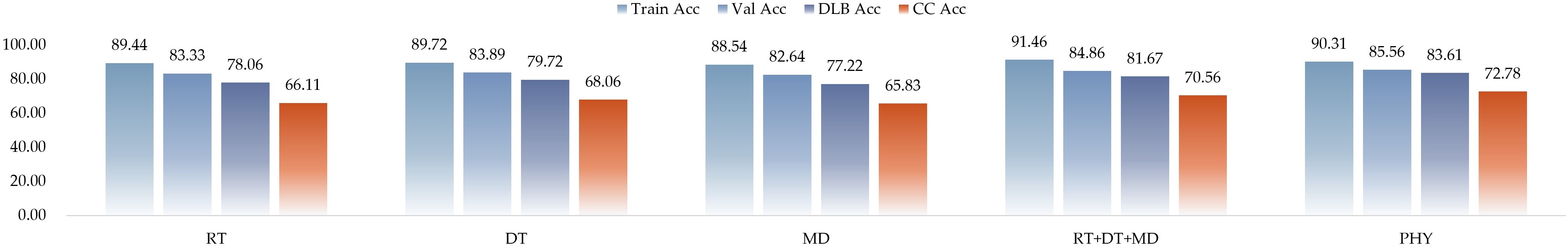}
\caption{Simulated recognition accuracies under the cross-wall transfer task. Here, RT, DT, MD, PHY, and Acc denote RTM, DTM, micro-Doppler signature, physics-guided low-dimensional representation, and accuracy, respectively. All accuracy values are percentages.}
\label{fig:simulated_cross_wall_results}
\vspace{-0.0cm}
\end{figure*}\par
\begin{figure*}[!t]
\centering
\includegraphics[width=\textwidth]{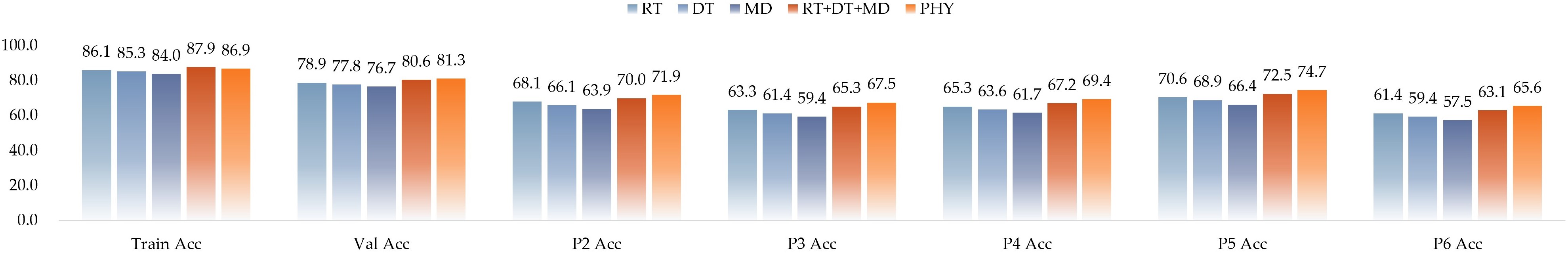}
\caption{Measured recognition accuracies under the cross-person transfer task. Here, RT, DT, MD, PHY, and Acc denote RTM, DTM, micro-Doppler signature, physics-guided low-dimensional representation, and accuracy, respectively. All accuracy values are percentages.}
\label{fig:measured_cross_person_results}
\vspace{-0.0cm}
\end{figure*}\par
\begin{figure*}[!t]
\centering
\includegraphics[width=\textwidth]{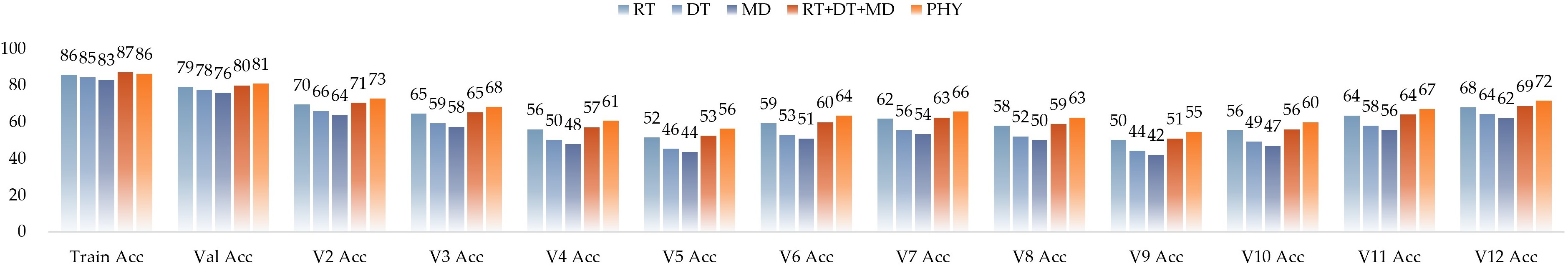}
\caption{Measured recognition accuracies under the cross-view transfer task. Here, RT, DT, MD, PHY, and Acc denote RTM, DTM, micro-Doppler signature, physics-guided low-dimensional representation, and accuracy, respectively. All accuracy values are percentages.}
\label{fig:measured_cross_view_results}
\vspace{-0.0cm}
\end{figure*}\par
\begin{figure*}[!t]
\centering
\includegraphics[width=\textwidth]{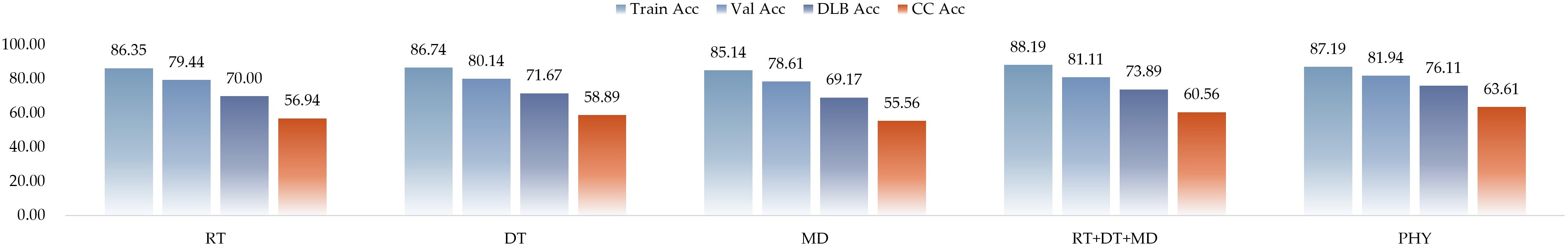}
\caption{Measured recognition accuracies under the cross-wall transfer task. Here, RT, DT, MD, PHY, and Acc denote RTM, DTM, micro-Doppler signature, physics-guided low-dimensional representation, and accuracy, respectively. All accuracy values are percentages.}
\label{fig:measured_cross_wall_results}
\vspace{-0.2cm}
\end{figure*}\par
As shown in Figs.~\ref{fig:simulated_cross_person_results}--\ref{fig:simulated_cross_wall_results} and Figs.~\ref{fig:measured_cross_person_results}--\ref{fig:measured_cross_wall_results}, the reported average accuracies consistently place PHY first and RT+DT+MD second across the simulated and measured cross-person, cross-view, and cross-wall tasks. In cross-person transfer, PHY reaches average accuracies of about $73.44\%$ in simulation and $69.83\%$ in measurement, both higher than RT at $69.44\%$ and $65.72\%$. In cross-view transfer, the lowest accuracies are concentrated around $V5$ and the nearby symmetric views, and the reported unseen-view averages remain the lowest among the three structured-shift settings, indicating that cross-view transfer is the most challenging case in these experiments. In cross-wall transfer, CC is clearly harder than DLB, while PHY attains the best simulated DLB/CC accuracies of $83.61\%/72.78\%$ and the best measured DLB/CC accuracies of $76.11\%/63.61\%$. These trends are consistent with the bound-tightening discussion in Subsection III-C, which can be interpreted as empirical support for the proposed theory.\par
The preceding cross-person and cross-wall experiments cover only five target persons and two target wall conditions. To improve completeness, additional simulated parameter-sweep experiments are further conducted over broader height-weight and wall-parameter ranges, and the corresponding target-domain test accuracies are summarized in Tables~\ref{tab:simh_cross_person_sweep} and \ref{tab:simh_cross_wall_sweep}. Across the listed settings, PHY achieves the highest test accuracy. Its cross-person test accuracy ranges from $66.67$ to $79.17$, and its cross-wall test accuracy ranges from $69.72$ to $87.50$. Along the main wall-parameter progression from WL-01 to WL-07, an overall accuracy degradation is also observed as the wall condition becomes harsher. These supplementary results further support the representation-induced trend observed above.\par
\begin{table*}[!t]
\centering
\caption{Simulated cross-person parameter-sweep results\textsuperscript{*}.\label{tab:simh_cross_person_sweep}}
\vspace{-0.15cm}
\sbox{\simhpersontabbox}{%
{\footnotesize
\setlength{\tabcolsep}{3.5pt}
\begin{tabular}{ccccccccc}
\hline\hline
\multirow{2}{*}{\textbf{Cfg.\textsuperscript{1}}} & \multirow{2}{*}{\textbf{Map.\textsuperscript{1}}} & \multirow{2}{*}{\textbf{Height\textsuperscript{2}}} & \multirow{2}{*}{\textbf{Weight\textsuperscript{2}}} & \multicolumn{5}{c}{\textbf{Test Accuracy\textsuperscript{2}}} \\
\cline{5-9}
 &  &  &  & \textbf{RT\textsuperscript{1}} & \textbf{DT\textsuperscript{1}} & \textbf{MD\textsuperscript{1}} & \textbf{RT+DT+MD\textsuperscript{1}} & \textbf{PHY\textsuperscript{1}} \\
\hline
HW-01 & P2 & 1.62 & 54 & 72.50 & 70.00 & 67.78 & 74.72 & 76.94 \\
HW-02 & P3 & 1.68 & 60 & 66.94 & 65.56 & 63.61 & 68.89 & 70.56 \\
HW-03 & P4 & 1.71 & 82 & 68.89 & 67.22 & 65.56 & 70.83 & 72.50 \\
HW-04 & P5 & 1.79 & 74 & 74.44 & 72.50 & 70.28 & 76.67 & 78.89 \\
HW-05 & P6 & 1.86 & 90 & 64.44 & 62.78 & 60.83 & 66.39 & 68.33 \\
HW-06 & -- & 1.76 & 70 & 75.83 & 73.89 & 71.67 & 77.78 & 79.17 \\
HW-07 & -- & 1.67 & 78 & 68.33 & 66.67 & 64.72 & 70.56 & 72.78 \\
HW-08 & -- & 1.60 & 51 & 65.56 & 63.61 & 61.39 & 67.50 & 69.44 \\
HW-09 & -- & 1.89 & 94 & 63.06 & 61.39 & 59.17 & 65.28 & 66.67 \\
HW-10 & -- & 1.73 & 66 & 71.67 & 69.17 & 66.94 & 73.33 & 75.83 \\
HW-11 & -- & 1.75 & 57 & 74.17 & 71.94 & 69.72 & 76.11 & 77.78 \\
HW-12 & -- & 1.82 & 84 & 67.50 & 65.83 & 63.61 & 70.00 & 72.22 \\
HW-13 & -- & 1.64 & 69 & 70.00 & 68.06 & 65.83 & 71.94 & 73.89 \\
HW-14 & -- & 1.92 & 88 & 65.00 & 63.33 & 61.11 & 66.94 & 68.61 \\
HW-15 & -- & 1.69 & 75 & 68.61 & 66.67 & 64.44 & 70.56 & 73.06 \\
\hline\hline
\end{tabular}
}}
\makebox[\textwidth][c]{\usebox{\simhpersontabbox}}\par
\vspace{0.05cm}
\makebox[\textwidth][c]{\parbox{\wd\simhpersontabbox}{\scriptsize
\textsuperscript{*} Simulated cross-person sweep results under different height-weight configurations. All accuracy values are percentages.\par
\textsuperscript{1} Cfg. denotes configuration, Map. denotes mapped domain, RT denotes RTM, DT denotes DTM, MD denotes micro-Doppler signature, RT+DT+MD denotes the concatenated RTM-DTM-micro-Doppler representation, and PHY denotes physics-guided low-dimensional representation.\par
\textsuperscript{2} Height and weight are expressed in meters and kilograms, respectively.\par}}\par
\vspace{-0.0cm}
\end{table*}\par
\begin{table*}[!t]
\centering
\caption{Simulated cross-wall parameter-sweep results\textsuperscript{*}.\label{tab:simh_cross_wall_sweep}}
\vspace{-0.15cm}
\sbox{\simhwalltabbox}{%
{\footnotesize
\setlength{\tabcolsep}{3pt}
\begin{tabular}{cccccccccc}
\hline\hline
\multirow{2}{*}{\textbf{Cfg.}} & \multirow{2}{*}{\textbf{Map.}} & \multirow{2}{*}{\textbf{Rel. Perm.\textsuperscript{1,2}}} & \multirow{2}{*}{\textbf{Loss Tan.\textsuperscript{1,2}}} & \multirow{2}{*}{\textbf{Thick.\textsuperscript{1,2}}} & \multicolumn{5}{c}{\textbf{Test Accuracy\textsuperscript{2}}} \\
\cline{6-10}
 &  &  &  &  & \textbf{RT} & \textbf{DT} & \textbf{MD} & \textbf{RT+DT+MD} & \textbf{PHY} \\
\hline
WL-01 & -- & 4.4 & 0.016 & 11 & 81.67 & 83.06 & 80.83 & 85.28 & 87.50 \\
WL-02 & DLB\textsuperscript{1} & 5.2 & 0.024 & 24 & 78.06 & 79.72 & 77.22 & 81.67 & 83.61 \\
WL-03 & -- & 5.8 & 0.029 & 18 & 76.67 & 78.33 & 75.56 & 80.28 & 82.50 \\
WL-04 & -- & 7.1 & 0.048 & 16 & 73.33 & 75.28 & 72.50 & 77.50 & 79.72 \\
WL-05 & -- & 6.6 & 0.041 & 22 & 71.94 & 73.89 & 71.11 & 76.11 & 78.33 \\
WL-06 & CC\textsuperscript{1} & 7.8 & 0.055 & 30 & 66.11 & 68.06 & 65.83 & 70.56 & 72.78 \\
WL-07 & -- & 8.5 & 0.064 & 26 & 63.89 & 65.28 & 63.06 & 67.78 & 69.72 \\
WL-08 & -- & 4.8 & 0.020 & 13 & 80.00 & 81.39 & 79.17 & 83.33 & 85.56 \\
WL-09 & -- & 5.5 & 0.026 & 20 & 77.50 & 78.89 & 76.67 & 80.83 & 83.06 \\
WL-10 & -- & 6.2 & 0.034 & 17 & 75.28 & 76.94 & 74.44 & 78.61 & 80.83 \\
WL-11 & -- & 6.9 & 0.043 & 25 & 72.78 & 74.44 & 71.94 & 76.67 & 78.89 \\
WL-12 & -- & 7.4 & 0.050 & 28 & 68.61 & 70.28 & 67.78 & 72.50 & 75.00 \\
WL-13 & -- & 8.1 & 0.059 & 32 & 65.00 & 66.67 & 64.17 & 68.89 & 71.39 \\
\hline\hline
\end{tabular}
}}
\makebox[\textwidth][c]{\usebox{\simhwalltabbox}}\par
\vspace{0.05cm}
\makebox[\textwidth][c]{\parbox{\wd\simhwalltabbox}{\scriptsize
\textsuperscript{*} Simulated cross-wall sweep results under different wall-parameter configurations. All accuracy values are percentages.\par
\textsuperscript{1} Rel. Perm. denotes relative permittivity, Loss Tan. denotes loss tangent, and Thick. denotes thickness.\par
\textsuperscript{2} Thickness is expressed in centimeters, and relative permittivity and loss tangent are dimensionless.\par}}\par
\vspace{-0.0cm}
\end{table*}\par
\subsection{Multi-Source Training and Parameter-Space Coverage Results}
In Tables~\ref{tab:simh_train_expansion_person}--\ref{tab:simh_train_expansion_wall} and Tables~\ref{tab:rw_train_expansion_person}--\ref{tab:rw_train_expansion_wall}, additional training-set expansion experiments are further reported. For each cross-person, cross-view, and cross-wall transfer setting, $15$ sample groups are taken from every target testing domain, added into the source training set, and then the recognizer is retrained, revalidated, and retested. The unseen-target averages are increased for all representations in both simulation and measurement. For PHY, the average gains are $+8.44/+8.66/+10.14$ percentage points in the simulated cross-person, cross-view, and cross-wall tasks, and $+7.39/+6.16/+7.64$ percentage points in the measured counterparts. Similar improvements are also observed for RT, DT, MD, and RT+DT+MD, which is consistent with sample insufficiency being one contributor to the generalization error. Nevertheless, the representation ordering is almost unchanged, PHY retains the best average performance across the three structured shifts, and cross-view transfer remains the hardest case. This pattern is consistent with the sample-size and coverage terms in Section III-D, which can also be interpreted as empirical support for the proposed theory.\par
\begin{table}[!t]
\centering
\caption{Simulated training-set expansion results under cross-person transfer\textsuperscript{*}.\label{tab:simh_train_expansion_person}}
\vspace{-0.1cm}
\sbox{\traintabbox}{%
{\footnotesize
\setlength{\tabcolsep}{3.5pt}
\begin{tabular}{cccccccc}
\hline\hline
\textbf{Feature} & \textbf{Train} & \textbf{Val} & \textbf{P2} & \textbf{P3} & \textbf{P4} & \textbf{P5} & \textbf{P6} \\
\hline
RT & 88.31 & 83.61 & 79.44 & 73.33 & 75.56 & 82.78 & 70.56 \\
DT & 87.57 & 82.64 & 77.22 & 71.67 & 73.89 & 80.56 & 68.33 \\
MD & 86.40 & 81.53 & 74.44 & 68.89 & 71.67 & 77.78 & 66.11 \\
RT+DT+MD & 89.63 & 84.58 & 82.22 & 76.67 & 78.89 & 85.56 & 72.78 \\
PHY & 89.02 & 85.42 & 85.00 & 79.44 & 81.67 & 87.78 & 75.56 \\
\hline\hline
\end{tabular}}}
\usebox{\traintabbox}\par
\vspace{0.05cm}
\parbox{\wd\traintabbox}{\scriptsize \textsuperscript{*} In this task, $15$ sample groups are added from every target testing domain into the source training set, and the recognizer is retrained, revalidated, and retested. RT denotes RTM, DT denotes DTM, MD denotes micro-Doppler signature, and PHY denotes the physics-guided low-dimensional representation. All accuracy values are percentages.\par}
\vspace{-0.4cm}
\end{table}\par
\begin{table*}[!t]
\centering
\caption{Simulated training-set expansion results under cross-view transfer\textsuperscript{*}.\label{tab:simh_train_expansion_view}}
\vspace{-0.1cm}
\sbox{\traintabbox}{%
{\footnotesize
\setlength{\tabcolsep}{2.5pt}
\begin{tabular}{cccccccccccccc}
\hline\hline
\textbf{Feature} & \textbf{Train} & \textbf{Val} & \textbf{V2} & \textbf{V3} & \textbf{V4} & \textbf{V5} & \textbf{V6} & \textbf{V7} & \textbf{V8} & \textbf{V9} & \textbf{V10} & \textbf{V11} & \textbf{V12} \\
\hline
RT & 87.53 & 83.33 & 80.56 & 75.56 & 65.56 & 61.67 & 70.00 & 73.33 & 68.89 & 62.78 & 68.33 & 76.11 & 80.00 \\
DT & 86.52 & 81.94 & 77.22 & 71.11 & 60.00 & 54.44 & 63.33 & 66.67 & 62.22 & 56.11 & 61.67 & 71.67 & 76.11 \\
MD & 85.80 & 80.83 & 73.89 & 68.89 & 56.11 & 51.11 & 59.44 & 62.78 & 58.33 & 53.33 & 57.78 & 68.33 & 72.78 \\
RT+DT+MD & 88.48 & 84.44 & 83.89 & 78.89 & 70.56 & 65.56 & 72.78 & 76.11 & 71.67 & 66.11 & 71.11 & 80.00 & 83.33 \\
PHY & 88.02 & 85.42 & 86.67 & 82.78 & 76.11 & 71.67 & 77.78 & 81.11 & 76.67 & 71.11 & 76.11 & 83.89 & 87.78 \\
\hline\hline
\end{tabular}}}
\makebox[\textwidth][c]{\usebox{\traintabbox}}\par
\vspace{0.05cm}
\makebox[\textwidth][c]{\parbox{\wd\traintabbox}{\scriptsize \textsuperscript{*} Settings are consistent with Table~\ref{tab:simh_train_expansion_person}. RT denotes RTM, DT denotes DTM, MD denotes micro-Doppler signature, and PHY denotes the physics-guided low-dimensional representation. All accuracy values are percentages.\par}}\par
\vspace{-0.0cm}
\end{table*}\par
\begin{table}[!t]
\centering
\caption{Simulated training-set expansion results under cross-wall transfer\textsuperscript{*}.\label{tab:simh_train_expansion_wall}}
\vspace{-0.1cm}
\sbox{\traintabbox}{%
{\footnotesize
\setlength{\tabcolsep}{4.5pt}
\begin{tabular}{ccccc}
\hline\hline
\textbf{Feature} & \textbf{Train} & \textbf{Val} & \textbf{DLB} & \textbf{CC} \\
\hline
RT & 88.27 & 84.86 & 85.00 & 73.33 \\
DT & 88.73 & 85.56 & 87.78 & 75.56 \\
MD & 87.59 & 84.31 & 83.33 & 72.22 \\
RT+DT+MD & 89.63 & 86.25 & 90.00 & 79.44 \\
PHY & 89.23 & 86.81 & 93.33 & 83.33 \\
\hline\hline
\end{tabular}}}
\usebox{\traintabbox}\par
\vspace{0.05cm}
\parbox{\wd\traintabbox}{\scriptsize \textsuperscript{*} Settings are consistent with Table~\ref{tab:simh_train_expansion_person}. RT denotes RTM, DT denotes DTM, MD denotes micro-Doppler signature, and PHY denotes the physics-guided low-dimensional representation. All accuracy values are percentages.\par}
\vspace{-0.0cm}
\end{table}\par
\begin{table}[!t]
\centering
\caption{Measured training-set expansion results under cross-person transfer\textsuperscript{*}.\label{tab:rw_train_expansion_person}}
\vspace{-0.1cm}
\sbox{\traintabbox}{%
{\footnotesize
\setlength{\tabcolsep}{3.5pt}
\begin{tabular}{cccccccc}
\hline\hline
\textbf{Feature} & \textbf{Train} & \textbf{Val} & \textbf{P2} & \textbf{P3} & \textbf{P4} & \textbf{P5} & \textbf{P6} \\
\hline
RT & 85.56 & 80.14 & 73.89 & 68.33 & 70.00 & 76.67 & 65.00 \\
DT & 84.81 & 79.17 & 71.11 & 65.56 & 67.78 & 73.89 & 62.78 \\
MD & 83.65 & 78.06 & 68.33 & 62.78 & 65.00 & 70.56 & 60.56 \\
RT+DT+MD & 86.85 & 81.39 & 77.22 & 71.67 & 73.89 & 80.00 & 68.33 \\
PHY & 86.11 & 82.22 & 80.56 & 75.00 & 77.22 & 83.33 & 71.11 \\
\hline\hline
\end{tabular}}}
\usebox{\traintabbox}\par
\vspace{0.05cm}
\parbox{\wd\traintabbox}{\scriptsize \textsuperscript{*} In this task, $15$ sample groups are added from every target testing domain into the source training set, and the recognizer is retrained, revalidated, and retested. RT denotes RTM, DT denotes DTM, MD denotes micro-Doppler signature, and PHY denotes the physics-guided low-dimensional representation. All accuracy values are percentages.\par}
\vspace{-0.0cm}
\end{table}\par
\begin{table*}[!t]
\centering
\caption{Measured training-set expansion results under cross-view transfer\textsuperscript{*}.\label{tab:rw_train_expansion_view}}
\vspace{-0.1cm}
\sbox{\traintabbox}{%
{\footnotesize
\setlength{\tabcolsep}{2.5pt}
\begin{tabular}{cccccccccccccc}
\hline\hline
\textbf{Feature} & \textbf{Train} & \textbf{Val} & \textbf{V2} & \textbf{V3} & \textbf{V4} & \textbf{V5} & \textbf{V6} & \textbf{V7} & \textbf{V8} & \textbf{V9} & \textbf{V10} & \textbf{V11} & \textbf{V12} \\
\hline
RT & 85.60 & 80.42 & 71.67 & 67.22 & 57.78 & 52.78 & 61.67 & 65.00 & 60.56 & 53.89 & 59.44 & 67.22 & 71.11 \\
DT & 84.67 & 79.17 & 67.78 & 61.67 & 52.22 & 47.22 & 55.56 & 58.33 & 54.44 & 48.33 & 53.33 & 61.67 & 66.67 \\
MD & 83.70 & 77.78 & 64.44 & 58.89 & 48.89 & 44.44 & 51.67 & 54.44 & 50.56 & 45.00 & 49.44 & 57.78 & 63.33 \\
RT+DT+MD & 86.58 & 81.11 & 75.00 & 70.00 & 61.11 & 56.11 & 64.44 & 67.78 & 63.33 & 56.67 & 61.67 & 70.56 & 74.44 \\
PHY & 86.07 & 81.94 & 77.78 & 73.33 & 66.11 & 60.56 & 68.89 & 72.22 & 67.78 & 61.11 & 66.67 & 73.89 & 77.78 \\
\hline\hline
\end{tabular}}}
\makebox[\textwidth][c]{\usebox{\traintabbox}}\par
\vspace{0.05cm}
\makebox[\textwidth][c]{\parbox{\wd\traintabbox}{\scriptsize \textsuperscript{*} Settings are consistent with Table~\ref{tab:rw_train_expansion_person}. RT denotes RTM, DT denotes DTM, MD denotes micro-Doppler signature, and PHY denotes the physics-guided low-dimensional representation. All accuracy values are percentages.\par}}\par
\vspace{-0.0cm}
\end{table*}\par
\begin{table}[!t]
\centering
\caption{Measured training-set expansion results under cross-wall transfer\textsuperscript{*}.\label{tab:rw_train_expansion_wall}}
\vspace{-0.1cm}
\sbox{\traintabbox}{%
{\footnotesize
\setlength{\tabcolsep}{4.5pt}
\begin{tabular}{ccccc}
\hline\hline
\textbf{Feature} & \textbf{Train} & \textbf{Val} & \textbf{DLB} & \textbf{CC} \\
\hline
RT & 84.81 & 80.69 & 73.33 & 61.67 \\
DT & 85.22 & 81.25 & 75.56 & 65.00 \\
MD & 83.95 & 79.72 & 71.11 & 60.00 \\
RT+DT+MD & 86.11 & 81.94 & 78.89 & 68.89 \\
PHY & 85.65 & 82.64 & 82.78 & 73.33 \\
\hline\hline
\end{tabular}}}
\usebox{\traintabbox}\par
\vspace{0.05cm}
\parbox{\wd\traintabbox}{\scriptsize \textsuperscript{*} Settings are consistent with Table~\ref{tab:rw_train_expansion_person}. RT denotes RTM, DT denotes DTM, MD denotes micro-Doppler signature, and PHY denotes the physics-guided low-dimensional representation. All accuracy values are percentages.\par}
\vspace{-0.4cm}
\end{table}\par
\subsection{Discussions}
Although the theoretical analysis in the paper is proved in a closed-loop manner, we have nevertheless strengthened the empirical experiments, and the results remain consistent with the theoretical proof. In addition, regarding the theory proposed in this paper, we believe that the following two points merit further investigation:\par
\begin{enumerate}
\renewcommand{\labelenumi}{(\arabic{enumi})}
\setlength{\itemsep}{0em}
\setlength{\parsep}{0em}
\item The present generalization bound is established only for the single effective observation channel adopted in this paper. Although the simulated and measured experiments have supported the plausibility and practical relevance of this formulation, the MIMO case contains additional channel coupling, spatial diversity, and multistatic fusion effects, so the corresponding bound extension is left for future study.
\item This paper focuses on an upper bound that can directly support target risk analysis. By comparison, a lower bound on the generalization error of TWR HAR may offer limited direct guidance for engineering design. Nevertheless, its mathematical proof is still worthwhile, because the intrinsic task difficulty and the tightness of the current upper-bound analysis can then be better characterized.
\end{enumerate}\par
\section{Conclusion}
In this paper, a generalization-analysis framework for through-the-wall radar human activity recognition has been established. Indoor human kinematics, TWR echo generation, feature representation, bounded-weight recognition, and source-to-target risk definitions have been unified into one statistical learning model, and a target-domain generalization bound has been derived. The contributions of cross-person, cross-view, and cross-wall shifts have then been explicitly decomposed, and the tightening effects of physical low-dimensional representations, multi-source training, and parameter-space coverage have been analyzed. Simulated and measured experiments have provided empirical support for the resulting theoretical analysis and illustrated its application value.\par

\end{document}